 \theoremstyle{definition}
 \newtheorem{definition}{Definition}[section]
 \theoremstyle{plain}
 \newtheorem{proposition}[definition]{Proposition}
 \newtheorem{lemma}[definition]{Lemma}
 \newtheorem{theorem}[definition]{Theorem}
 \newtheorem{corollary}[definition]{Corollary}
 \newtheorem{conjecture}[definition]{Conjecture}
 \newtheorem*{notation*}{Notation}
\newcommand{\prim}[1]{\mathsf{#1}}
\newcommand{\abs}[3]{\lambda {#1}^{#2}. #3}
\newcommand{\app}[2]{#1 \, #2}	      
\newcommand{\letin}[3]{\prim{let}\ {#1 = #2}\ \prim{in}\ #3}
\newcommand{\val}[1]{\underline{#1}}
\newcommand{\plug}[2]{#1 \langle #2 \rangle}
\newcommand{\emptyCtxt}{\langle \cdot \rangle}
\newcommand{\param}[1]{ \{#1\}  }
\newcommand{\abd}[5]{\prim{A}^{#4}_{#1}({#2},{#3}).{#5}}
\newcommand{\twoheadmapsto}{
\mathrel{\ooalign{$\twoheadrightarrow$\cr%
\kern-.15ex\raisebox{.2ex}{\scalebox{1}[0.8]{$\shortmid$}}\cr}}}
\newcommand{\longtwoheadmapsto}{
\mathrel{\ooalign{$\longtwoheadrightarrow$\cr%
\kern-.15ex\raisebox{.2ex}{\scalebox{1}[0.8]{$\shortmid$}}\cr}}}
\newcommand{\finsubset}{\subset_\mathrm{fin}}
\newcommand{\cl}{:}
\newcommand{\F}{\mathbb{F}}
\newcommand{\A}{\mathbb{A}}
\newcommand{\N}{\mathbb{N}}
\newcommand{\iter}{\mathrm{iter}}
\newcommand{\up}{\mathord{\uparrow}}
\newcommand{\dn}{\mathord{\downarrow}}
\newcommand{\lb}[1]{{#1}}
\newcommand{\rot}[1]{\rotatebox[origin=c]{180}{$#1$}}
\newcommand{\Qry}{\mathrm{Qry}}
\newcommand{\Ans}{\mathrm{Ans}}
\newcommand{\In}{\mathit{in}}
\newcommand{\Out}{\mathit{out}}
\newcommand{\Init}{\mathit{Init}}
\newcommand{\Final}{\mathit{Final}}
\renewcommand{\ell}{e}
\newcommand{\lift}[1]{\mathrel{\mathcal{L}_{#1}}}
\newcommand{\liftRT}[1]{\mathrel{\mathcal{L}^*_{#1}}}
\begin{document}

\title{Abductive functional programming, a semantic approach}         





\author{Koko Muroya \\ University of Birmingham
	\and Steven Cheung \\ University of Birmingham 
 \and Dan R. Ghica \\ University of Birmingham}
\date{}
\maketitle

\begin{abstract}
We propose a call-by-value lambda calculus extended with a new construct inspired by abductive inference and motivated by the programming idioms of machine learning. Although syntactically simple the abductive construct has a complex and subtle operational semantics which we express using a style based on the Geometry of Interaction. We show that the calculus is sound, in the sense that well typed programs terminate normally. We also give a visual implementation of the semantics which relies on additional garbage collection rules, which we also prove sound. 
\end{abstract}



\section{Introduction}

In machine learning it is common to look at programs as \textit{models} which are used in two \textit{modes}. The first mode, which we shall call the `direct mode', is the usual operating behaviour of any garden variety program in which new inputs are applied and new outputs are obtained. The second mode, which we shall call `learning mode', is what makes machine learning special. In learning mode, special inputs are applied, for which the desired outputs (or at least some fitness criteria for output) are known. Then parameters of the model are changed, or tuned, so that the actual outputs approach, in a measurable way, the desired outputs (or the fitness function is improved). 

Examples of models vary from the simplest, such as linear regression, with only two parameters ($f(x)=p_1\times x+p_2$) to the most complex, such as recurrent neural nets, with many thousands of various parameters defining signal aggregation and the shape of activation functions. What makes machine learning programming interesting and, in some sense, tractable is that the model and the algorithm for tuning the model can be decoupled. The tuning, or optimisation, algorithms, such as gradient descent or simulated annealing, can be abstracted from the model and programmed separately and generically. It is the interaction between the model and the tuning algorithm that enables machine learning programming. 

In this paper we introduce a programming language in which this bi-modal programming idiom is built-in. Our ultimate aim is an ergonomic and efficient functional language which obeys the general methodological principles of information encapsulation as it pertains to the specific programming of machine learning. We propose that this should be achieved by starting from the basis of an applied lambda calculus, then equipping it with a dedicated operation for `parameter extraction' which, given a term (\textit{qua} model in direct mode) produces a new, parameterised model (\textit{qua} model in learning mode). Unlike the direct-mode model, which is a function of inputs, the learning-mode model becomes a function of its parameters and its inputs and, as such, can be used in a tuning algorithm to evaluate how different values of the parameters impact the fitness of the output for given inputs. 

Concretely, lets consider the simple example of linear regression written as a function: $f\,x=p_1*x+p_2$, where the parameters have provisional values $p_1,p_2$. In learning mode, the model becomes $f\,p_1\,p_2\,x=p_1*x+p_2$, and a new direct-mode model with updated parameters $p_1', p_2'$ can be immediately reconstructed as $f\,p_1'\,p_2'$. Various \textit{ad hoc} mechanisms for switching between the two modes can be, of course, explicitly implemented using existing programming language mechanisms. However, providing a native and seamless syntactic mechanism programming this scenario can be significantly simplified.  

A solution that comes close to this ideal is  employed by \textsc{TensorFlow}, in which a separate syntactic class of \textit{variables} serves precisely the role of parameters as discussed above (in its \textsc{Python} bindings)~\cite{abadi2016tensorflow}. However, \textsc{TensorFlow} is presented as a shallow embedding of a domain specific language (DSL) into \textsc{Python}. Moreover, the DSL offers explict constructs for switching between `direct' and `learning' modes under the notion of \textit{session}. \textsc{TensorFlow} is an incredibly useful and well crafted library and associated DSL which gave us much inspiration. Our aim is to extract the essence of this approach and encapsulate it in a stand-alone programming language, rather than an embedded DSL. This way we can hope to eventually develop a genuine programming language for machine learning, avoiding the standard pitfalls of embedded DSLs, such as difficulty of reasoning about code, poor interaction with the rest of the language, especially via libraries, lack of proper type-checking, difficult debugging and so on~\cite{ghosh2011dsl}. 

\subsection{Abductive decoupling of parameters}

We propose a new framework for extracting parameters from models via what we will call \textit{abductive decoupling}. The name is inspired by ``abductive inference'', and the connection is explained informally in Sec.~\ref{sec:abdf}. We use decoupling as the preferred mechanism for extracting parameters from models. Looking at the type system from an abductive perspective, the essence of decoupling is the following admissible rule:
\begin{center}
$\infer[P=P_1\land\cdots\land P_n]
{\Gamma  ,P_1,\ldots,P_n\vdash (P\rightarrow A) \land P}
{\Gamma ,P_1,\ldots,P_n\vdash A}$
\end{center}
Informally, the rule selects $P=P_1\land\cdots\land P_n$ as the ``best explanation'' of $A$ from all  premises and infers  an ``abductive summary'' consisting of  this explanation along with the fact that it implies~$A$. This rule is interesting only in regard to the process of selecting explanatory premises $P_i$. It can be trivialised by selecting either the whole set of premises or none ($P=true$). We note that this is a sound rule \textit{inspired} by abductive inference, which is generally unsound, much like (sound) mathematical induction is inspired by (unsound) logical induction. The source of this inspiration is discussed in the following Section. 

In a programming language with product and function types this rule can be made to correspond to a special family of constants $\prim{A}_A:A\rightarrow(V\rightarrow A)\times V$, where the type $V$ is a type of \textit{collections of parameters} and $A$ a data type. The constant would abductively decouple a (possibly open) term $t$ into the parametrised term and the current parameter values. In a simpler language with no product types, the rule for abductive decoupling is given implicationally as:
\begin{center}
$\infer
{\Gamma\vdash \prim{A}_{T'}(f,p).t:T'\rightarrow T}
{\Gamma,f:V\rightarrow T',p:V\vdash t:T}.
$
\end{center}
In an application $(\prim{A}(f,p).t)t'$ the term $t'$ is abductively decoupled into parameters, bound to $p$, and a parameterised model, bound to $f$. They are then used in $t$, typically for parameter tuning. This is a common pattern, for which we use syntactic sugar: 
\[\letin{f\,@\,p}{t'}{t}.\]

Abductive decoupling should apply only to selected constants in a model because tuning all constants of a model is not generally desirable. This is achieved in the concrete syntax by marking \textit{provisional constants} in models with braces, e.g. $\{7\}$. In direct mode provisional constants are used simply as constants, whereas in learning mode they are targeted by abduction. For example, the abductive decoupling of the term $\{1\}+2$ results in the parameterised model $\lambda p.p[0]+2$ and the singleton vector parameter $p=[1]$. 

\subsection{Informal semantics of abductive decoupling}
Behind this simple new syntax lurks a complex and subtle semantics. Abductive decoupling is no mere superficial syntactic refactoring of provisional constants into arguments, but a \textit{deep} runtime operation which can target provisional constants from free variables of a term. Consider for example:
\begin{align*}
&\letin {y} {\{2\} + 1} { }\\[-1ex]
&\letin {m\, x} {\{3\} + y + x}{ }\\[-1ex]
&\letin{f\,@\,p}m{ }\\[-1ex]
&f\, p\, 7.
\end{align*}
The model $m$ depends directly on a provisional constant ($\{3\}$) but also, indirectly, on a term ($y$) which itself depends on a provisional constant ($\{2\}$). It should be apparent that a syntactic resolution of abductive decoupling is not possible. When abduction occurs, the term $m$ will have already been reduced to $m\,x=\{3\} + (\{2\} + 1) + x$, so following abduction the parameterised model is similar to $\lambda p\lambda x.p[0]+(p[1]+1)+x$. 

On the other hand, the semantics of reduction needs to be appropriately adapted to the presence of provisional constants so that they are not reduced away during computation. In order to predictably employ tuning of the parameters, the identity of the parameters must be preserved during evaluation. Thus, $\{1\}+\{2\}$ should not be reduced to $3$, either as a provisional or as a definitive constant. Also $(\lambda x.x+x)\{1\}$ should be computed in a way that uses only one tunable parameter, rather than creating two via copying. This simple example also indicates that in the process of reduction terms may evolve into forms that are not necessarily syntactically expressible.

A more formal justification for preserving the number of provisional constants during evaluation is the obvious need for a program (or a representation thereof), as it evolves during evaluation to remain observationally equivalent to its previous forms. However, since provisional constants can be detected by abduction, changing the number of provisional constants would be observable by abductive contexts. 

Our semantic challenge is reconciling this behaviour within a conventional call-by-value reduction framework. A handy tool in specifying the operational semantics of abduction is the Geometry of Interaction (GoI)~\cite{Girard89GoI1,Girard90GoI2}. Intended as an operational interpretation of linear logic proofs, the GoI proved to be a useful syntax-independent operational framework for programming languages as well~\cite{Mackie95}. A GoI interpretation maps a program into a network of simple transducers, which executes by passing a token along its edges and processing it in the nodes. This interpretation is naturally suited for call-by-name evaluation, which it can perform on a fixed net. This constant space execution made it possible to compile CBN-based languages such as \textsc{Algol} directly into circuits~\cite{Ghica07GoS1}.
Using GoI as a model for call-by-value in a way that preserves both the equational theory and the cost model was an open problem, solved only recently by a combination of token-passing and graph-rewriting~\cite{DBLP:conf/csl/MuroyaG17} called ``the dynamic GoI''. This is precisely the semantic framework in which abduction will be interpreted. 

\subsection{Contributions}
We introduce a new functional programming construct which we call \textit{abductive decoupling}, which allows \textit{provisional constants} to be automatically extracted from terms. This new construct is motivated by programming idioms and patterns occurring primarily in machine learning. Although this mechamism is expressed in a language via a simple syntactic construct, the semantics is subtle and complex. We specify it using a recently developed ``dynamic'' Geometry of Interaction style and we show the soundness of execution (i.e. the successful termination of any well-typed program) and of garbage-collection rules (i.e. that they have no effect on observable behaviour). To support a better understanding of the semantics of abductive decoupling we also implement an on-line visualiser for execution\footnote{Link to on-line visualiser: \texttt{\url{http://www.cs.bham.ac.uk/~drg/goa/visualiser/index.html}}}.

%
\newpage 
\section{Abductive functional programming: a new paradigm}\label{sec:abdf}

\subsection{Deduction, induction, abduction}
\begin{quotation}
\textit{The division of all inference into Abduction, Deduction, and Induction may almost be said to be the Key of Logic. }

\hfill 
C.S.Peirce 
\end{quotation}

C.S. Peirce, in his celebrated \textit{Illustrations of the Logic of Science}, introduced three kinds of reasoning: deductive, inductive, and abductive. Deduction and induction are widely used in mathematics and computer science, and they have been thoroughly studied by philosophers of science and knowledge. Abduction, on the other hand, is more mysterious. Even the name ``abduction'' is controversial. Peirce claims that the word is a mis-translation of a corrupted text by Aristotle (``$\mathit{\alpha\pi\alpha\gamma\omega\gamma\acute{\eta}}$''), and sometimes used ``\textit{retroduction}'' or ``\textit{hypothesis}'' to refer to it. But the name ``\textit{abduction}'' seems to be the most common, so we will use it. 

\newcommand{\aline}{------------------\\}

According to Peirce the essence of deduction is the syllogism known as ``\textit{Barbara}'':
\begin{quote}
\textit{Rule}: All men are mortal. \\
\textit{Case}: Socrates is a man. \\
\aline 
\textit{Result}: Socrates is a mortal.
\end{quote}

Peirce calls all deduction \textit{analytic} reasoning, the application of general rules to particular cases. Deduction always results in apodeictic knowledge, incontrovertible knowledge you can believe as strongly as you believe the premises. Peirce's interesting formal experiment was to then permute \textit{the Rule}, \textit{the Case}, and \textit{the Result} from this syllogism, resulting in two new patterns of inference which, he claims, play a key role in the logic of scientific discovery. The first one is \textit{induction}:
\begin{quote}
\textit{Case}: Socrates is a man.\\
\textit{Result}: Socrates is a mortal.\\
\aline  
\textit{Rule}: All men are mortal.
\end{quote}
Here, from the specific we infer the general. Of course, as stated above the generalisation seems hasty, as only one specific case-study is generalised into a rule. But consider
\begin{quote}
\textit{Case}: Socrates and Plato and Aristotle and Thales and Solon are men. \\
\textit{Result}: Socrates and Plato and Aristotle and Thales and Solon mortal.\\
\aline
\textit{Rule}: All men are mortal.
\end{quote} 
The Case and Result could be extended to a list of billions, which would be quite convincing as an inductive argument. However, no matter how extensive the evidence, induction always involves a loss of certainty. According to Peirce, induction is an example of a \textit{synthetic} and \textit{ampliative} rule which generates new but uncertain knowledge. 
If a deduction can be believed, an inductively derived rule can only be \textit{presumed}. 

The other permutation of the statements is the rule of abductive inference or, has Peirce originally called it, ``\textit{hypothesis}'':
\begin{quote}
\textit{Result}: Socrates is a mortal.\\
\textit{Rule}: All men are mortal.\\
\aline
\textit{Case}: Socrates is a man.
\end{quote} 
This seems prima facie unsound and, indeed, Peirce acknowledges abduction as the weakest form of (synthetic) inference, and he gives a more convincing instance of abduction in a different example:
\begin{quote}
\textit{Result}: Fossils of fish are found inland.\\
\textit{Rule}: Fish live in the sea.\\
\aline
\textit{Case}: The inland used to be covered by the sea.
\end{quote} 
We can see that in the case of abduction the inference is clearly ampliative and the resulting knowledge has a serious question mark next to it. It is unwise to believe it, but we can \textit{surmise} it. This is the word Peirce uses to describe the correct epistemological attitude regarding abductive inference. Unlike analytic inference, where conclusions can be believed a priori, synthetic inference gives us conclusions that can only be believed a posteriori, and even then always tentatively. This is why experiments play such a key role in science. They are the analytic test of a synthetic statement. 

But the philosophical importance of abduction is greater still. Consider the following instance of abductive reasoning:
\begin{quote}
\textit{Result}: The thermometer reads 20C.\\
\textit{Rule}: If the temperature is 20C then the thermometer reads 20C.\\
\aline
\textit{Case}: The temperature is 20C.
\end{quote} 
Peirce's philosophy was directly opposed to Descartes's extreme scepticism, and abductive reasoning is really the only way out of the quagmire of Cartesian doubt. We can never be totally sure whether the thermometer is working properly. Any instance of trusting our senses or instruments is an instance of abductive reasoning, and this is why we can only generally surmise the reality behind our perceptions. Whereas Descartes was paralysed by the fact that believing our senses can be questioned, Peirce just took it for what it was and moved on. 

\subsection{A computational interpretation of abduction: machine learning}
Formally, the three rules of inference could be written as:
\[
\infer[\text{Deduction}] B {A & A\rightarrow B}\qquad
\infer[\text{Induction}] {A\rightarrow B}{A & B}\qquad 
\infer[\text{Abduction}] A {B & A\rightarrow B}
\]
Using the Curry-Howard correspondence as a language design guide, we will arrive at some programming language constructs corresponding to these rules. Deduction corresponds to producing $B$-data from $A$-data using a function $A\rightarrow B$. Induction would correspond to creating a $A\rightarrow B$ function when we have some $A$-data and some $B$-data. And indeed, computationally we can (subject to some assumptions we will not dwell on in this informal discussion) create a \textit{look-up table} from $A$s to the $B$s, which maybe will produce some default or approximate or interpolated/extrapolated value(s) when some new $A$-data is input. The process is clearly both ampliative, as new knowledge is created in the form of new input-output mappings, and tentative as those mappings may or may not be correct.

Abduction by contrast assumes the existence of some facts $B$ and a mechanism of producing these facts $A\rightarrow B$. As far as we are aware there is no established consensus as to what the $A$s represent, so we make a proposal: the $A$s are the \textit{parameters} of the mechanism $A\rightarrow B$ of producing $B$s, and \textit{abduction} is a general process of choosing the ``best'' $A$s to justify some given $B$s. This is a machine-learning situation. Abduction has been often considered as ``inference to the best explanation'', and our interpretation is consistent with this view if we consider the values of the parameters as the ``explanation'' of the facts. 

Let us consider a simple example written in a generic functional syntax where the model is a linear map with parameters $a$ and $b$. Compared to the rule above, the parameters are $A=float\times float$ and the ``facts'' are a model $B=float\rightarrow float$: 
\begin{align*}
& f : (float \times float) \rightarrow (float \rightarrow float)\\
& f\, (a,b)\, x = a * x + b
\end{align*}
A set of reference facts can be given as a look-up table $data : B$. The machine-learning situation involves the production of an ``optimal'' set of parameters, relative to a pre-determined error (e.g. least-squares) and using a generic optimisation algorithm (e.g. gradient descent):
\begin{align*}
&(a', b') = \mathbf{abduct}\,f\,data\\
&f' = f\,(a',b')
\end{align*}
Note that a concrete, optimal model $f'$ can be now synthesised deductively from the parametrised model $f$ and experimentally tested for accuracy. Since the optimisation algorithm is generic any model can be tuned via abduction, from the simplest (linear regression, as above) to models with thousands of parameters as found in a neural network. 

\subsection{From abductive programming to programmable abduction}\label{sec:abd}

Since abduction can be carried out using generic search or optimisation algorithms, having a fixed built-in such procedure can be rather inconvenient and restrictive. \textsc{Prolog} is an example of a language in which an abduction-like algorithm is fixed. The idea is to make abduction itself programmable. 

In a simple program like the one above abduction coincides with optimisation, which can be programmed (e.g. gradient descent) relative to a specified loss function (e.g. least squares):
\begin{align*}
& f : (\mathit{float} \times \mathit{float}) \rightarrow (\mathit{float} \rightarrow \mathit{float})\\
& f\, (a,b)\, x = a * x + b\\
&(a', b') = \mathit{gradient\_descent}\, f\, \mathit{least\_squares}\, \textit{data}\\
&f' = f\,(a',b')
\end{align*}
Algorithmically this works, but from the point of view of programming language design this is not entirely satisfactory because the type of the \textit{gradient\_descent} function must have a return type $\mathit{float}\times \mathit{float}$ which is not the type that a generic gradient descent function would return. That should be a \textit{vector}. One can program models where the arguments are always vectors
\begin{align*}
& f : vec \rightarrow (\mathit{float} \rightarrow \mathit{float})\\
& f\, p\, x = p[0] * x + p[1]\\
& p' = \mathit{gradient\_descent}\, f\, \mathit{least\_squares}\, \textit{data}\\
& f' = f\, p'
\end{align*}
 But this style of programming becomes increasingly awkward as models become more complicated, as we shall see later. Consider for example a model which is a surface bounded by two parametrised curves:
\begin{align*}
& model\, low\, high\, x = (low\,x, high\,x)
\end{align*}
and given some \textit{data} as a collection of points defined by their $(x,y)$ coordinates is trying to find the best function boundaries such that a measure few points fall outside yet the bounds are tight:
\begin{center}
\includegraphics[]{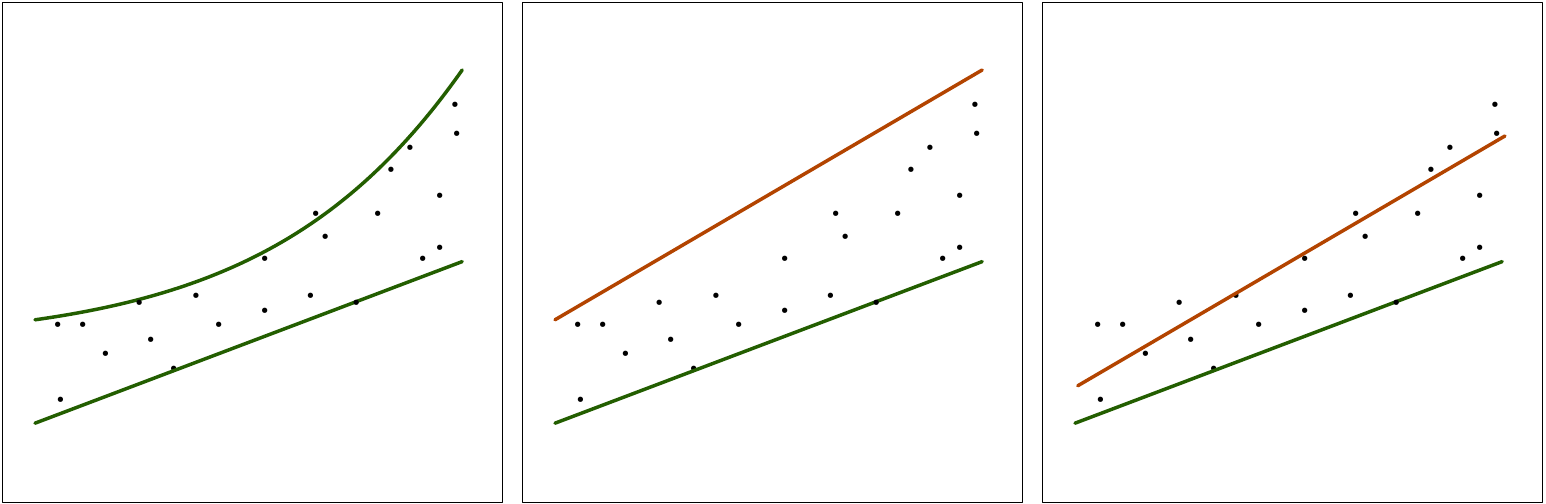}
\end{center}
The first figure shows a good fit by a quadratic and by a linear boundary. The second and third figures show an attempt to fit an upper linear boundary which is either too loose or too tight. 

The candidate parametrised boundaries are:
\begin{align*}
& linr\, (a, b)\, x = a * x + b \\
& quad\, (a, b, c)\, x = a * x * x + b * x + c 
\end{align*} 
and the loss (error) function could be defined as:
\begin{align*}
loss\,(min,max)\,x=
(\mathrm{if}\, x > min\, \mathrm{then}\, 0\, \mathrm{else}\, (x-min))+
(\mathrm{if}\, x < max\, \mathrm{then}\, 0\, \mathrm{else}\, (max-x))
\end{align*}
We would like to use gradient descent, or some other generic optimisation or search algorithms to try out various models such as 
\begin{align*}
&p = \mathit{gradient\_descent}\,(model\,linr\,linr)\,loss\,data\\
&p = \mathit{gradient\_descent}\,(model\,quad\,linr)\,loss\,data
\end{align*}
and collect improved parameters $p$, but it is now not so obvious how to collect the parameters of \textit{linr} and \textit{quad} in a uniform way, optimise them, then plug them back into the model, i.e. into the boundary functions as they are used. 

Our thesis is that this programming pattern, where models are complex and parameters which require optimisation are scattered throughout is a common one in machine learning and optimisation applications. What we propose is a general programming language mechanism, inspired by abduction, which will collect all the parameters of a complex model and actually parametrise the model by them, at run-time. We call this feature \textit{decoupling}, and the example above would be programmed as follows:
\begin{align*}
&m_0=model\, (quad (0,0,0))\, (linr (0,0))\\
&(m, p) = \textbf{decouple}\, m_0\\
&p'= \mathit{gradient\_descent}\,m\,loss\,data\,p\\
&model' = m_0\,p'
\end{align*}
First a model $m_0$ is created by instantiating the parameterised $model$ with some arbitrary, provisional, parameter values. Then $m_0$ is \textit{decoupled} into its parameters ($p:vec$) and a new model ($m:vec\rightarrow \mathit{float}\rightarrow(\mathit{float},\mathit{float})$) where all the parameters are brought together in a single vector argument. Finally a new set of parameters $p':vec$ is computed using generic optimisation using $p$ as the initial point in the search space. 

The decoupling operation needs to distinguish between provisional constants (parameters) and constants which do not require optimisation. We indicate their provisional status using braces in the syntax, so that $\{0\}$ stands for a constant with value 0 but which can be decoupled and made into a parameter. The final form of our example is, for example:
\begin{align*}
& linr\,  x = \{1\} * x + \{0\} \\
& quad\,  x = \{1\} * x * x + \{0\} * x + \{0\}\\
&m_0=model\, quad\, linr\\
&(m, p) = \textbf{decouple}\, m_0\\
&p'= \mathit{gradient\_descent}\,m\,loss\,data\,p\\
&m_1 = m\,p'
\end{align*}
where the provisional constants are given some arbitrary values. 

For comparison, in a system without programmable abduction and without decoupling, some possible implementations would require a explicit re-parametrisation of the model:
\begin{align*}
&linr\, (a,b)\, x = a * x + b\\
&quad\, (a,b,c)\, x = a * x * x + b * x + c\\
&m_0\, (a,b,c,d,e) = model\, (quad\, (a,b,c))\, (linr\, (d,e))  \\ 
&p' = \textbf{abduct}\, m_0\, data\\
&m_1 = m_0\, p'
\end{align*}
Or, if the parameters are collected into vectors, the example would be written as:
\begin{align*}
&linr\, p\, x = p[0] * x + p[1]\\
&quad\, p\, x = p[0] * x * x + p[1] * x + p[2]\\
&m_0\, p = model\, (quad\, (\textit{fst}\, p))\, (linr\, (\textit{snd}\, p)) \\
&p' = \textbf{abduct}\, m_0\, data\\
&m_1 = m_0\, p'
\end{align*}
where \textit{fst} and \textit{snd} are functions that select the appropriate parameters from the model to the functions parametrising the model. 

A final observation regarding the type of vectors resulting from decoupling the parameters of a model. The decoupling of parameters is a complex run-time operation, and the order in which they are stored in the vector is difficult to specify in a way that is exploitable by the programmer. Therefore we should restrict vector operations to those that are \textit{symmetric} under permutations of bases. In practice this means that we do not provide constants for the bases, which means that using vector addition, scalar multiplication and dot product it is not possible to have access to individual coordinates. This restriction allows the formulation of common general-purpose optimisations algorithms such as numerical gradient descent, which are symmetric under permutations of bases. This is a significant restriction only if the search takes advantage of coordinate-specific heuristics, such as the use of regularisation terms~\cite{DBLP:conf/aaai/XuS05}.

\section{Abductive calculus over a field}

Let $\F$ be a (fixed) set and $\A$ be a set of names (or \textit{atoms}).
Let $(\F,{+},{-},{\times},{/})$ be a field and $(V_a,{+_a}, {\times}_a,\bullet_a)$ an $\A$-indexed family of vector spaces over $\mathbb F$.
The types $T$ of the languages are defined by the grammar
$
 T ::= \F \mid V_a \mid T \to T. 
$
We refer to the field type $\F$ and vector types $V_a$ as ground
types.
Besides the standard operations contributed by the field and the vector spaces, denoted by $\Sigma$:
\begin{align*}
 0, 1, k 
 &: \F \tag{field constants}\\
 {+}, {-},{\times},{/}
 &: \F \to \F \to \F
 \tag{operations of the field $\F$} \\
 +_a &: V_a \to V_a \to V_a
 \tag{vector addition} \\
 \times_a &: \F \to V_a \to V_a
 \tag{scalar multiplication} \\
 \bullet_a &: V_a \to V_a \to \F
 \tag{dot product}, 
\end{align*}
we introduce iterated vector operations, denoted by $\Sigma^\iter$:
\begin{align*}
 +^L_a
 &: (V_a \to V_a) \to V_a \to V_a
 \tag{left-iterative vector addition} \\
 \times^L_a
 &: (V_a \to \F) \to V_a \to V_a
 \tag{left-iterative scalar multiplication}
\end{align*}
All the vector operations are indexed by a name $a \in \A$, and
symbols $+$ and $\times$ are overloaded. The role of the name $a$ will be discussed later, for now it may be disregarded.

Iterative vector operations apply vector operations uniformly over the
entire standard basis.
The iterative vector operations are informally defined as folds over the list of ordered vector bases $E_a=[\vec e_0,\ldots,\vec e_{n-1}]$. These are informal definition because lists (and folds) are not part of our syntax:
\begin{align*}
 f +^L_a v_0 &:=\prim{foldr} (\lambda e\lambda v.f(e)+v)\,E_a\,v_0 \\
 f \times^L_a v_0 &:=\prim{foldr} (\lambda e\lambda v.f(e)\times v)\,E_a\,v_0.
\end{align*}
Terms $t$ are defined by the grammar
$
 t ::= x \mid \abs{x}{T}{t} \mid \app{t}{t}
 \mid \val{p} \mid t \mathrel{\$} t
 \mid \param{\val{p}} \mid \abd{a}{f}{x}{T}{t}
$,
where $T$ is a type, $f$ and $x$ are variables,
$\$ \in \Sigma \cup \Sigma^\iter$ is a primitive operation, and
$p \in \F$ is an element of the field.
The novel syntactic elements of the language are provisional constants $\param{\val{p}}$ and a family of type and name-indexed decoupling operations $\abd{a}{f}{x}{T}{t}$, as discussed in Sec.~\ref{sec:abd}. 

Let $A \finsubset \A$ be a finite set of names, $\Gamma$  a
sequence of typed variables $x_i {:} T_i$, and $\vec{p}$  a sequence
of elements of the field $\F$ (i.e.\ a vector over $\F$).
We write $A \vdash \Gamma$ if $A$ is the support of 
$\Gamma$.
The type judgements are of  shape:
$
 A \mid \Gamma \mid \vec{p} \vdash t:T
$,
and type derivation rules are as below.
\begin{center}\footnotesize
 $\infer{A \mid \Gamma, {x:T} \mid - \vdash x : T}
 {A \vdash \Gamma,T}$ \qquad
 $\infer{A \mid \Gamma \mid \vec{p} \vdash \abs{x}{T'}{t} : T' \to T}
 {A \mid \Gamma, {x:T'} \mid \vec{p} \vdash t : T}$ \qquad
 $\infer{A \mid \Gamma \mid \vec{p},\vec{q} \vdash \app{t}{u} : T}
 {A \mid \Gamma \mid \vec{p} \vdash t : T' \to T
 & A \mid \Gamma \mid \vec{q} \vdash u : T}$ \qquad \\[1.5ex]
 $\infer{A \mid \Gamma \mid - \vdash \val{p} : \F}
 {p \in \F}$ \qquad
 $\infer{A \mid \Gamma \mid \vec{p}, \vec{q}
 \vdash t_1 \mathrel{\$} t_2 : T}
 {A \mid \Gamma \mid \vec{p} \vdash t_1 : T_1
 & A \mid \Gamma \mid \vec{q} \vdash t_2 : T_2
 & {\$ : T_1 \to T_2 \to T} \in \Sigma}$
 \qquad \\[1.5ex]
 $\infer{A \mid \Gamma \mid p
 \vdash \param{\val{p}} : \F}{}$ \qquad
 $\infer
 {A \mid \Gamma \mid \vec{p}
 \vdash \abd{a}{f}{x}{T'}{t} : T' \to T}
 {A,a \mid \Gamma, {f:V_a\to T'}, {x:V_a} \mid \vec{p}
 \vdash t : T
 & A \vdash \Gamma,T',T}$
\end{center}
Note that the rules are linear with respect to the parameters
$\vec{p}$.
In a derivable judgement $A \mid \Gamma \mid \vec{p} \vdash t : T$,
the vector $\vec{p}$ gives the collection of all the provisional
constants in the term $t$.

Abductive decoupling $\abd{a}{f}{x}{T'}{t}$ serves as a binder of the name
$a$ and, therefore, it requires in its typing a unique vector type $V_a$ collecting all the provisional constants. Because of name $a$ this vector type cannot be used outside of the scope of the operation. An immediate consequence is that variables $f$ and $x$ used in the decoupling of a term share the type $V_a$ but this type cannot be mixed with parameters produced by other decouplings. The simple reason for preventing this is that the sizes of the vectors may be different. A more subtle reason is that we prefer not to assume a particular order of placing parameters in the vector, yet we aim to preserve determinism of computation. Because the order of parameters is unknown, we must  only allow operations which are invariant over permutations of bases. Therefore only certain iterative vector operations are allowed. The most significant restriction is that point-wise access to the bases or the components is banned. 


\section{GoI-style semantics}
We give an operational semantics of the language as an abstract
machine.
The abstract machine rewrites a graph that is an inductively defined translation of a
program, by passing a token on the graph.
The token triggers graph rewriting in a deterministic way by
carrying data which defines redexes, as well as carrying data representing results of computations. This abstract machine is closely based on the dynamic GoI machine~\cite{DBLP:conf/csl/MuroyaG17}. As it should be soon evident, the graph-rewriting semantics is particularly suitable for  tracking the evolving data dependencies in abductive programs. 

\subsection{Graphs and graph states}

A graph is given by a set of nodes and a set of edges. The nodes are
partitioned into \textit{proper nodes} and \textit{link nodes}.
A distinguished list of link nodes forms the \textit{input interface}
and another list of link nodes forms the \textit{output interface}.
Edges are directed, with one link node and one proper node as
endpoints.
An input link (i.e.\ a link in the input interface) is the source of
exactly one edge and the target of no edge.
Similarly an output link (i.e.\ a link in the output interface) is the
source of no edge and the target of exactly one edge.
Every other link must be the source of one edge and the target of
another one edge.
We may write $G(n,m)$ to indicate that a graph $G$ has $n$ links in
 the input interface and $m$ links in the output interface.
From now on we will refer to proper nodes as just ``nodes,'' and link
nodes as ``links.''

Links are labelled by \emph{enriched} types $\tilde{T}$, defined
by $\tilde{T} ::= T \mid \oc T \mid \rot{\oc} \F$ where $T$ is any
type of terms.
If a graph has only one input, we call it ``root,'' and say
the graph has enriched type $\tilde{T}$ if the root has the enriched
type $\tilde{T}$.
We sometimes refer to enriched types just as ``types,'' while calling
the enriched type $\rot{\oc} \F$ ``provisional type'' and an enriched
type $\oc T$ ``argument type.''


Nodes are labelled, and we call a node labelled with $X$ an
``$X$-node.''
 Labels of nodes fall into several categories. Some of them correspond to the basic syntactic constructs of the lambda calculs: $\lambda$ (abstraction), $@$ (application), $\val{p} \in \F$ (scalar constants),
$\val{\vec{p}} \in \F^n$ (vector constants), $\$ \in \Sigma \cup \Sigma^\iter$ (operations). 
Nodes labelled $\lb{C}_n$ and $\rot{\lb{C}}_n$ handle contraction for definitive terms and for provisional constants, respectively. Node $\lb{P}_n$ handles the decomposition of a vector in its elements (coordinates). Node $\lb{A}$ indicates an abductive decoupling. Nodes $\oc$, $\wn$, $\rot\oc $, $\rot\wn $, $\lb{D}$, $\rot{\lb{D}}$ play the same role as exponential nodes in proof nets, and are needed by the bureaucracy of how sharing is managed.

When drawing graphs certain diagrammatic conventions are
employed. Link nodes are not represented explicitly, and their labels
are only given when they cannot be easily inferred from the rest of
the graph. By graphical convention, the link nodes at the bottom of
the diagram represent the input interface and they are ordered left to
right; the link nodes at the top of the diagram are the output, ordered left to right. A double-stroke edge represents a bunch of edges running in parallel and a double stroke node represents a bunch of nodes. If it is not clear from context we annotate a double-stroke edge with the number of edges in the bunch:
\begin{center}
	\includegraphics[]{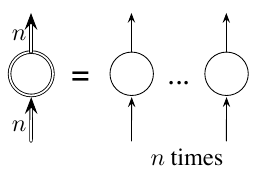}
\end{center}

The connection of edges via nodes must satisfy the rules in
Fig.~\ref{fig:GraphConnection}, where $T_1$ and $T_2$ are types,
$\oc \vec{T}$ denotes a sequence $\oc T_1,\ldots,\oc T_m$ of enriched types, $a \in \A$, $\$^0 : T_1 \to T_2 \to T \in \Sigma$ is a ground-type primitive, and $n$  a natural number.
\begin{figure}[t]
	\centering
    \includegraphics{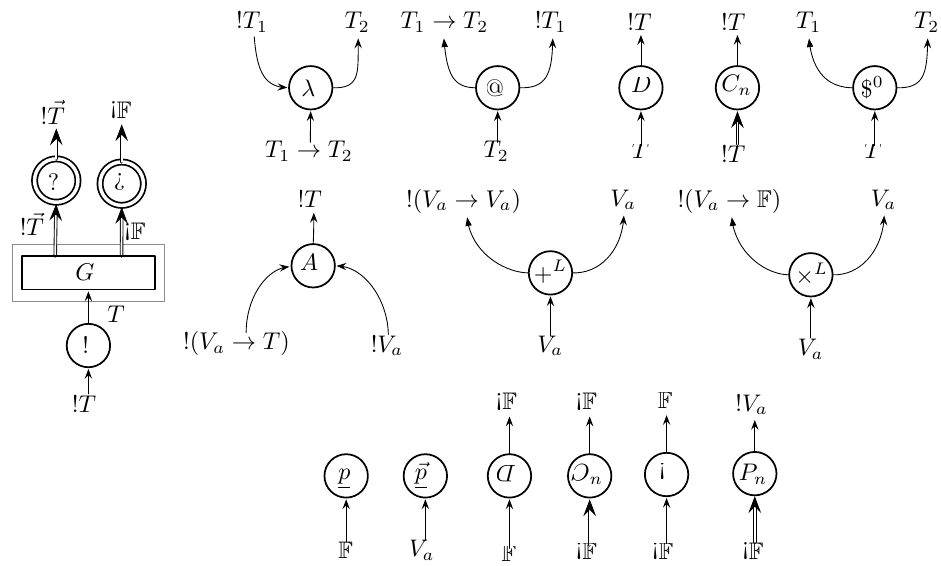}
 \caption{Connection of Edges}
  \label{fig:GraphConnection}
\end{figure}
The  outline box in Figure~\ref{fig:GraphConnection} indicates a subgraph $G(1,n_1 + n_2)$, called an \emph{$\oc$-box}.
Its input is connected to one $\oc$-node (``principal door''), while the outputs are connected to 
$n_1$  $\wn$-nodes (``definitive auxiliary doors''), and 
$n_2$  $\rot\wn $-nodes (``provisional auxiliary doors'').

A \textit{graph context} is a graph, that has exactly one extra new
node with label ``$\square$'' and interfaces of arbitrary numbers and
types of input and output.
We write a graph context as $\mathcal G[\square]$ and call the unique
extra $\square$-node ``hole.''
When a graph $G$ has the same interfaces as the $\square$-node in a
graph context $\mathcal{G}[\square]$, we write
$\mathcal G[G]=\mathcal G[\square/G]$ for the substitution of the hole
by the graph $G$.
The resulting graph $\mathcal{G}[G]$ indeed satisfies the rules in
Fig.~\ref{fig:GraphConnection}, thanks to the matching of interfaces.

We say that a graph is \textit{definitive} if it contains no
$\rot\oc$-nodes and all its output links have the provisional type
$\rot\oc \F$.
When a graph $G(1,0)$ can be decomposed into:
 \begin{center}
 	\includegraphics[]{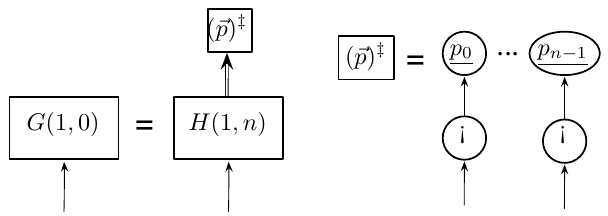}
 \end{center}
where $H(1,n)$ is a definitive graph and $\vec{p} \in \F^n$, we write
$G = H \circ (\vec{p})^\ddag$ and call the graph ``composite.''
We order $\rot\oc$-nodes in the composite graph $G$, effectively in
the component $(\vec{p})^\ddag$, according to the vector $\vec{p}$.
\begin{definition}[Graph states]
 \label{def:GraphStates}
 A \emph{graph state} $((G,\ell), \delta)$ consists of a
 composite
 graph $G = H \circ (\vec{p})^\ddag$ with a distinguished link node
 $\ell$, and \emph{token data} $\delta = (d,f,S,B)$ that consists of:
 a \emph{direction} $d$, a
 \emph{rewriting flag} $f$, a \emph{computation stack} $S$ and a
 \emph{box stack} $B$, defined by:
 \begin{align*}
  d &::= \up \mid \dn \\
  f &::= \square \mid \lambda \mid \$^0 \mid \$^1(n)
  \mid \wn \mid \oc \\
  S &::= \square \mid @ \cl S \mid \star \cl S
  \mid \lambda \cl S \mid \val{p} \cl S \mid \val{\vec{p}} \cl S \\
  B &::= \square \mid \ell' \cl B
 \end{align*}
 where $\$^0 \in \Sigma$ and $\$^1 \in \Sigma^\iter$ are primitives,
 $p \in \F$, $\vec{p}$ is a vector over $\F$, $n$ is a natural
 number, and $\ell'$ is a link of the graph $G$.
\end{definition}
In the definition above we call the link node $\ell$ of $(G,\ell)$ the
\textit{position} of the token.

\subsection{Transitions}

We define a relation on graph states called \textit{transitions}
$((G,\ell), \delta) \to ((G',\ell'), \delta')$. Transitions are either \emph{pass} or  \emph{rewrite}. 

Pass transitions
$((\mathcal{G}[G],\ell), (d,\square,S,B))
\to ((\mathcal{G}[G],\ell'), (d',f',S',B'))$
occur if and only if the rewriting flag in the state is $\square$.
In these transitions, the sub-graph $G$ consists of just one node with
its
interfaces, and the old and new positions $\ell$ and $\ell'$ are both
in $G$ (i.e.\ in its interfaces).
These transitions do not change the overall graph $\mathcal{G}[G]$ but
only token data and token position, as shown in
Fig.~\ref{fig:PassTransitions}.
In particular the stacks are updated by changing only a constant
number of top elements.
In the figure, only the singleton sub-graph $G$ is presented, and
token position is indicated by a black triangle pointing towards the
direction of travel. The symbol $-$ denotes any single element of a
computation stack, and
$X\in\{\rot{!},\rot{?},\rot{\lb D},\lb D\}$ and
$Y\in\{\rot{!},\rot{?},\rot{\lb D}\}$.
When the token goes upwards through a $Z$-node, where
$Z\in \{\rot{\lb C}_n,\lb C_n \mid n>0\}$, 
the previous position $\ell$ is pushed to the box stack.
The pass transition over a $W$-node, where
$W\in \{\rot{\lb C}_n \mid n>0\}$,
assumes the top element $\ell'$ of the box stack to be one of input of
the $W$-node.
The element $\ell'$ is popped and set to be a new position of the
token.
\begin{figure}[t]
	\centering
	\includegraphics{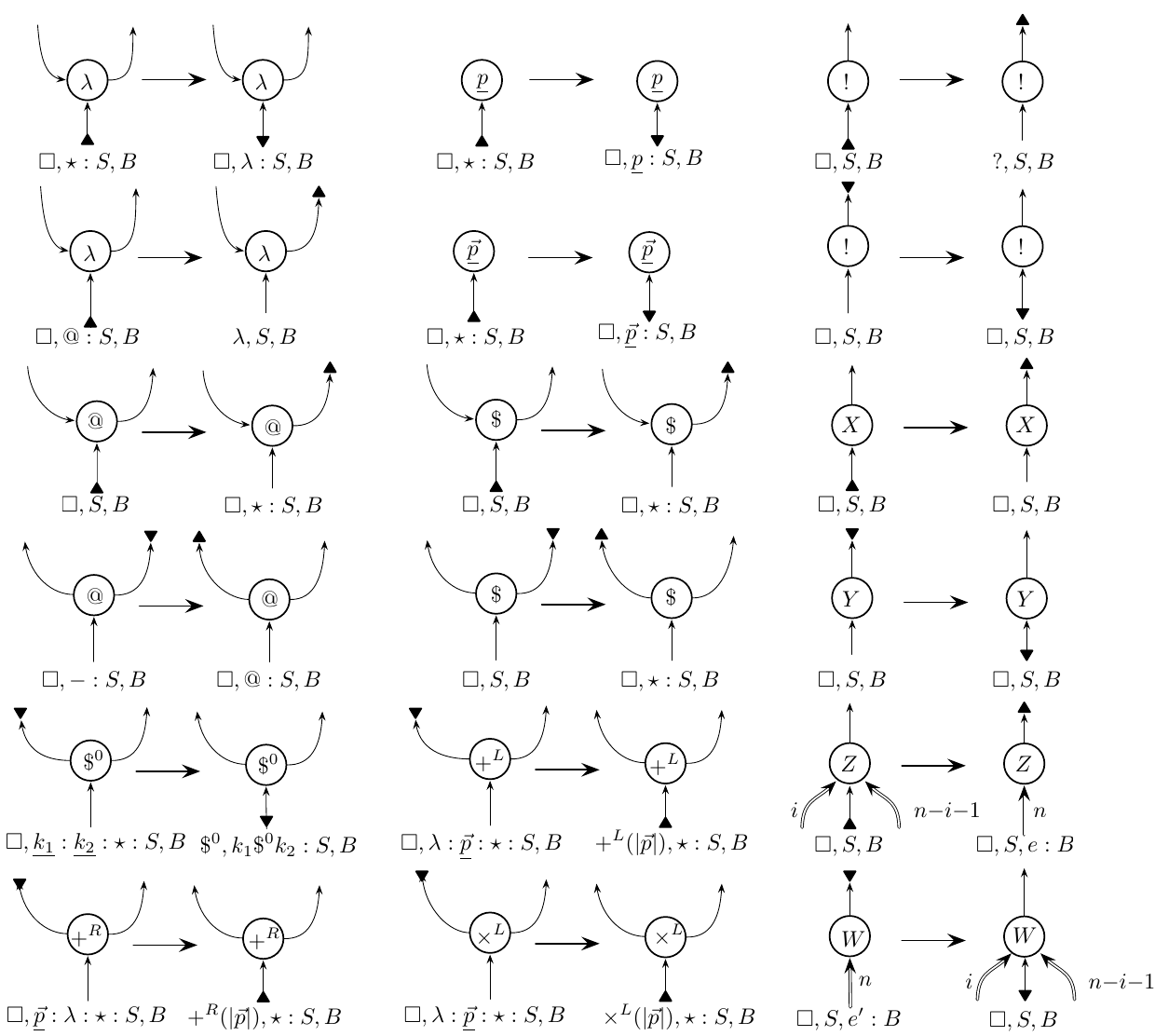}
	\caption{Pass Transitions}
	\label{fig:PassTransitions}
\end{figure}

Inspecting the transition rules reveals basic intuitions about the intended semantics of the language. On the evaluation of an application ($@$) or operation ($\$$), indicated by the token moving into the node, the token is first propagated to the \textit{right} edge and, as it arrives back from the right, it is then propagated to the \textit{left} edge: function application and operators are evaluated right-to-left. A left-to-right application is possible but more convoluted, noting that mainstream CBV language compilers such as \textsc{OCaml} also sometimes use right-to-left evaluation for the sake of efficiency. After evaluating both branches of an operation ($\$$) the token propagates downwards carrying the resulting value. From this point of view a constant can be seen as a trivial operation of arity 0. 

The behaviour of abstraction ($\lambda$) and application ($@$) nodes is more subtle. The token never exits an application node ($ @ $) because in a closed term the application will always eventually trigger a graph rewrite which eliminates a $\lambda$-$@$ pair of nodes. An abstraction node either simply returns the token placing a $\lambda$ at the top of the computation stack to indicate that the function is a ``value'', or it process the token if it sees an $@$ at the top of the computation stack, in the expectation that an applicative cancellation of nodes will follow, as seen next. The other nodes (!, ?, etc.) can only be properly understood in the context of the rewrite transitions.

Rewrite transitions
$((\mathcal{G}[G],\ell), (d,f,S,B))
\to ((\mathcal{G}[G'],\ell'), (d,f',S,B'))$
apply to states where the rewriting flag is not $\square$, i.e.\ to
which pass transitions never apply.
They replace the (sub-)graph $G$ with $G'$, keeping the interfaces,
move the position, and modify the box stack, without changing the
direction and the computation stack.
We call the sub-graph $G$ ``redex,'' and a rewrite transition
``$f$-rewrite transition'' if a rewriting flag is $f$ before the
transition.

The redex may or may not contain the token position $\ell$.
We call a rewrite transition ``local'' if its redex contains the token
position, and ``deep'' if not.
Fig.~\ref{fig:RewriteTransitionsComput},
Fig.~\ref{fig:RewriteTransitionsOpenBox} and
Fig.~\ref{fig:RewriteTransitionsClosedBox} define local rewrites,
showing only redexes.
Fig.~\ref{fig:RewriteTransitionsDeepGeneral}, complemented by
Fig.~\ref{fig:RewriteTransitionsDeep}, defines deep rewrites, whose
redexes we will specify later.
We explain some rewrite transitions in detail.
\begin{figure}[t]
 \centering
	 \includegraphics{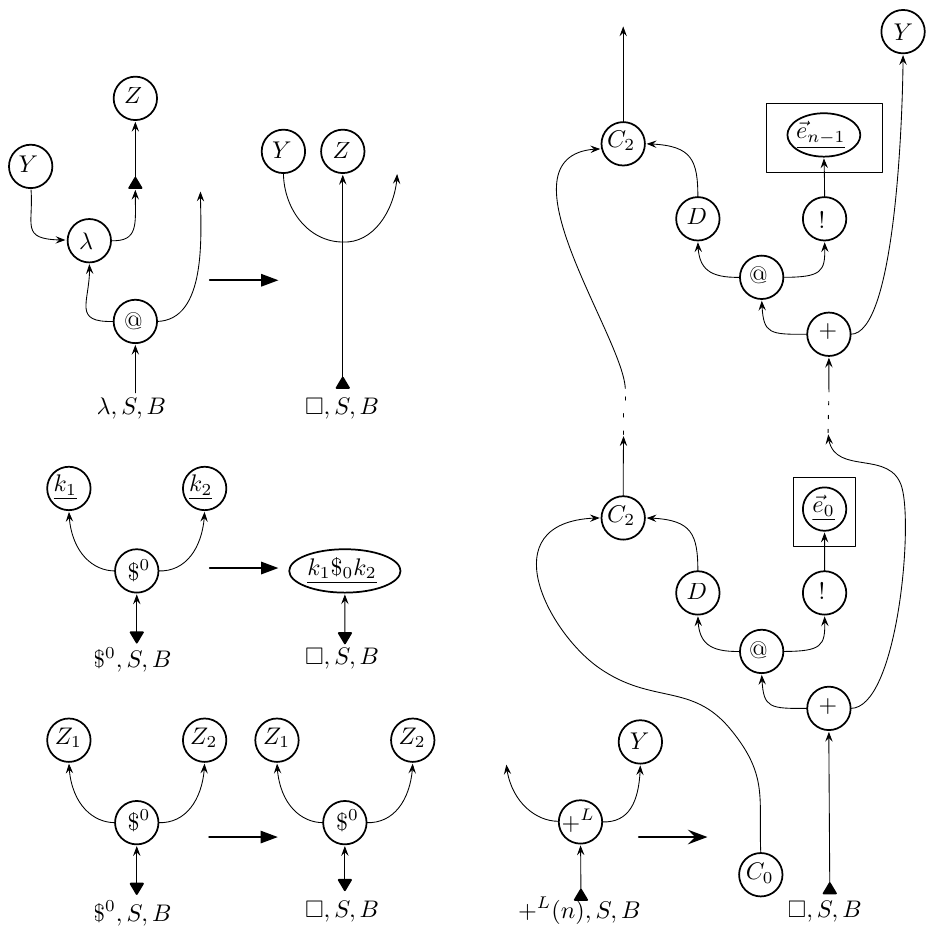}
	\caption{Rewrite Transitions: computation}
	\label{fig:RewriteTransitionsComput}
\end{figure}

The rewrites in Fig.~\ref{fig:RewriteTransitionsComput} are \textit{computational} in the sense that they are the common rewrites for CBV lambda calculus extended with constants (scalars and vectors) and operations. The first rewrite is the elimination of a $\lambda$-$@$ pair, a key step in beta reduction. Following the rewrite, the incoming output edge of $\lambda$ will connect directly to the argument, and the token will enter the body of the function. Simpler operations also reduce their arguments, if they are constants, replacing them with a single constant. If the arguments are not constant-nodes then they are not rewritten out, expressly to prevent the deletion of provisional, abductable, constants. Finally, iterated (fold-like) constants are recursively (on the size of the index $n$ in the token data) unfolded until the computation is expressed in terms of simple operations ($\times^L$ has an unfolding rule similar to that of $+^L$). 
The unfolding introduces nodes
$\val{\vec{e_0}},\ldots,\val{\vec{e_{n-1}}}$ that are the (ordered)
standard basis of the vector space $V_a$. Note that these bases are only computed at run-time and are not accessible from syntax.

The rewrites in Fig.~\ref{fig:RewriteTransitionsDeepGeneral},
Fig.~\ref{fig:RewriteTransitionsDeep},
Fig.~\ref{fig:RewriteTransitionsOpenBox} and
Fig.~\ref{fig:RewriteTransitionsClosedBox} govern the behaviour of
!-boxes and are essential in implementing abductive behaviour. They
are triggered by rewriting flags $\wn$ or $\oc$, whenever the
token reaches the principal door of a $\oc$-box.

The first class of the $\oc$-box rewrites is \emph{deep} rewrites,
whose general form is shown in
Fig.~\ref{fig:RewriteTransitionsDeepGeneral}, and
actual rewriting rules are shown in
Fig.~\ref{fig:RewriteTransitionsDeep}.
Let us write $G[X/Y]$ for a graph $G$ in which all $Y$-nodes are
replaced with $X$-nodes, and similarly, write $G[\epsilon/Y]$ for a
graph $G$ in which all $Y$-nodes are eliminated.
We can see the ``deepness'' of the rules in
Fig.~\ref{fig:RewriteTransitionsDeep}, as they occur in the graph
$E(m+m',0)$ (in
Fig.~\ref{fig:RewriteTransitionsDeepGeneral}) which may have not been
visited by the token yet.
The deep rules can be applied only if the $\lb{A}$-node,
$\lb{P}_n$-node or $\lb{C}_n$-node
(in Fig.~\ref{fig:RewriteTransitionsDeep}) satisfies the following:
\begin{itemize}
 \item the node is ``box-reachable'' (see
       Def.~\ref{def:BoxReachability} below) from one of
       definitive auxiliary doors of the $\oc$-box $G$
 \item the node is in the same ``level'' as one of definitive
       auxiliary doors of the $\oc$-box $G$, i.e.\ the node is in a
       $\oc$-box if and only if the door is in the same $\oc$-box.
\end{itemize}
\begin{definition}[Box-reachability]
 \label{def:BoxReachability}
 In a graph, a node/link $v$ is \emph{box-reachable} from a node/link
 $v'$
 if there exists a finite sequence of directed paths $p_0,\ldots,p_i$
 such that: (i) for any $j = 0,\ldots,i-1$, the path $p_j$ ends with
 the root of a $\oc$-box and the path $p_{j+1}$ begins with an output
 link of the $\oc$-box, and (ii) the path $p_0$ begins with $v$ and
 the path $p_i$ ends with $v'$.
\end{definition}
\noindent
We call the sequence of paths in the above definition ``box-path.''
Box-reachability is a weaker notion of the normal graphical
reachability which is witnessed by a single directed path, as it can
skip $\oc$-boxes.
Redex searching for deep rules can be done by searching a graph from
definitive auxiliary doors while skipping $\oc$-boxes on the way.
Note that the deep rules do not apply to $\lb{C}_0$-nodes and
$\lb{P}_0$-nodes, as they do not satisfy the box-reachablity
condition.

\begin{figure}[t]
 \begin{minipage}[t]{\hsize}
  \centering
	  \includegraphics{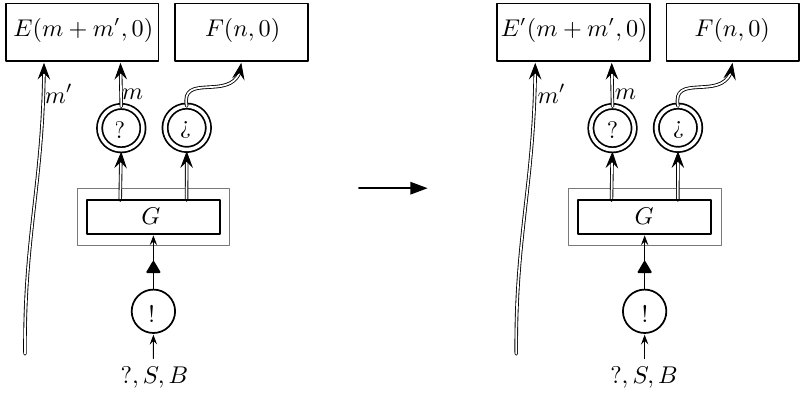}
  \caption{Rewrite transitions: general form of deep rewrites}
  \label{fig:RewriteTransitionsDeepGeneral}
 \end{minipage}
 \begin{minipage}[b]{\hsize}
  \centering
  \includegraphics[]{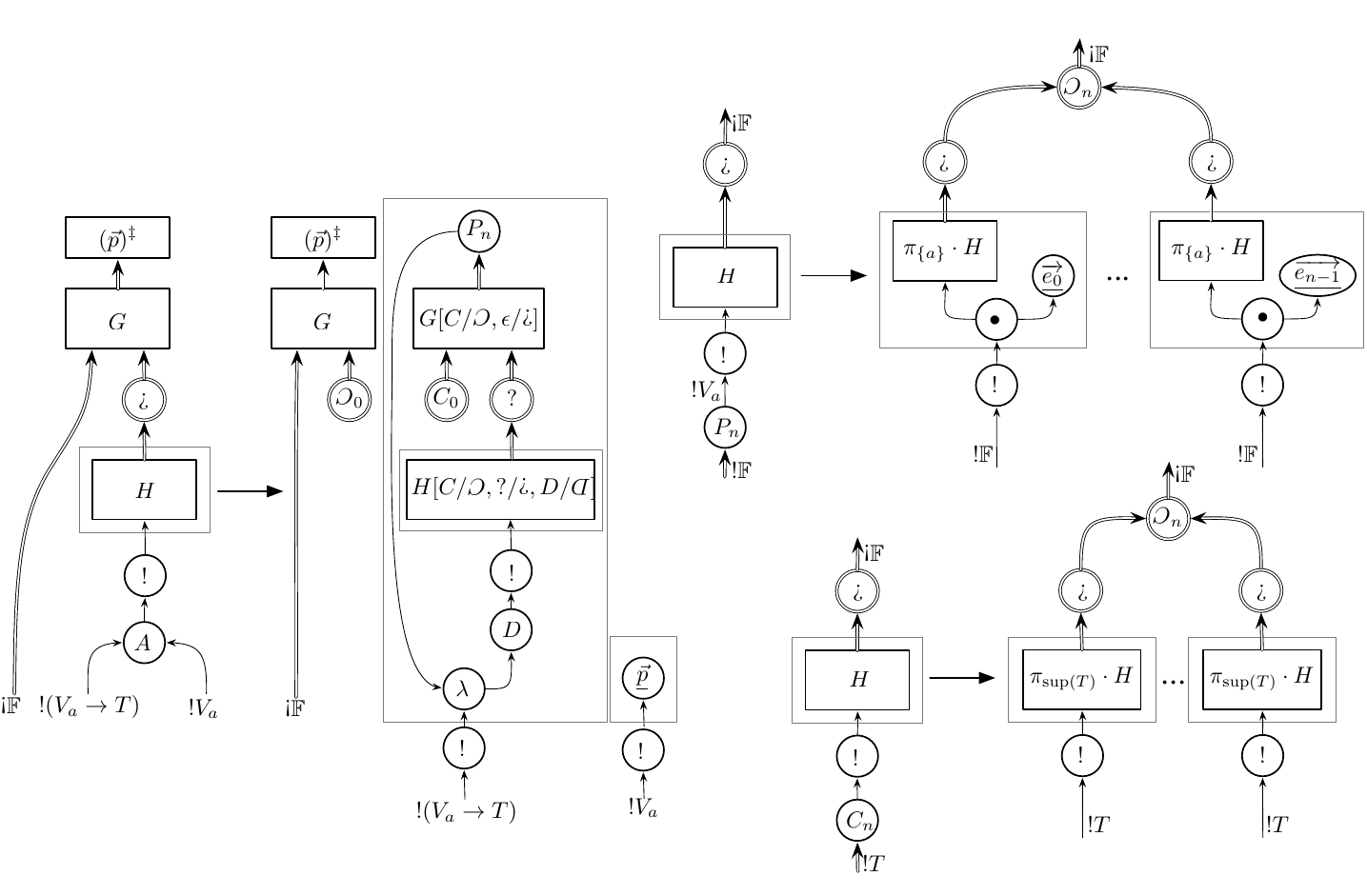}
  \caption{Deep rules: decoupling, projection, contraction}
  \label{fig:RewriteTransitionsDeep}
 \end{minipage}
\end{figure}

Upon applying the first deep rule, the two input edges of the node $A$ will connect to the  decoupled function and arguments. The function is created by replacing the provisional constants $(\vec p)^\ddagger$ with a projection and a $\lambda$-node (plus a dereliction ($\rot{\lb D}$) node). A copy of the provisional constants used by other parts of the graph is left in place, and another copy is transformed into a single vector node and linked to the second input of the decoupling, which now has access to the current parameter values. 
Note that the sub-graph $G \circ (\vec{p})^\ddag$ is not modified by
the decoupling rule.
We define the redex of the deep decoupling rule to be the $\oc$-box $H$
with its doors and the connected $\lb{A}$-node, excluding the
unchanged sub-graph $G \circ (\vec{p})^\ddag$.

The second deep rule handles vector projections. Any graph $H$
handling a vector value in a vector with $n$ dimensions is replicated
$n$ times to handle each coordinate separately. The projected value is
computed by applying the dot product using the corresponding standard base. Finally, the names in $H$ are refreshed using the name permutation action $\pi_N$, where $N\subseteq \mathbb A$, defined as follows: all names in $N$ are preserved, all other names are replaced with fresh (globally to the whole graph) names. Finally, contraction is also eliminated by replicating the graph $H$ which handles it, while refreshing all names in $H$ which do not appear in $T$. 
In the deep projection/contraction rules, redexes are as shown in
Fig.~\ref{fig:RewriteTransitionsDeep}.
The $\lb{C}_n$-node and the $\lb{P}_n$-node in the redexes necessarily
satisfy $n > 0$, due to the applicable condtion of deep rules.

Names indexing the vector types must be refreshed because as a result of copying, any decoupling may be executed several times, and each time the resulting models and parameters must be kept distinct from previously decoupled models and parameters. 
This is discussed in more depth in Appendix~\ref{app:Validity}.

\begin{figure}[t]
	\centering
	\includegraphics[]{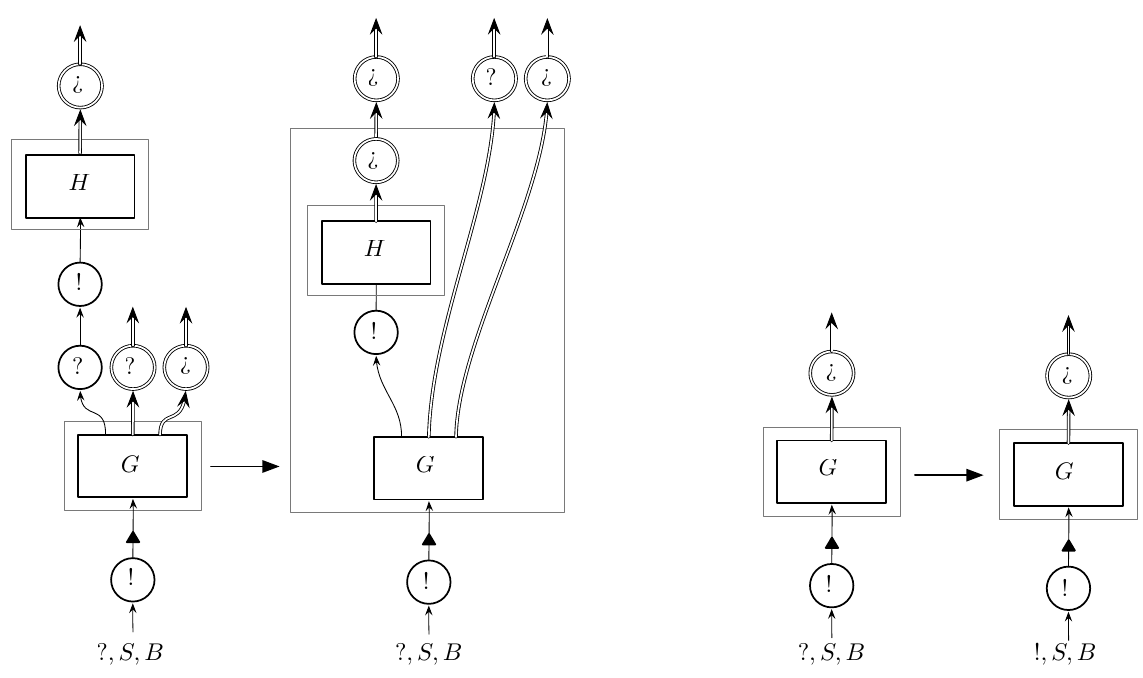}
	\caption{Rewrite Transitions: closing $\oc$-boxes}
	\label{fig:RewriteTransitionsOpenBox}
\end{figure}
Fig.~\ref{fig:RewriteTransitionsOpenBox} shows the second class of
$\oc$-box rewrites.
The left rewrite happens to the $\oc$-box $G$ above the token, namely
it absorbs all other boxes $H$, one by one, to which it is directly
connected. Because the $?$-nodes of $!$-boxes arise from the use of
global or free variables, this box-absorption process mirrors that of
closure-creation in conventional operational semantics.
After all the definitive auxiliary doors of the $\oc$-box are
eliminated, the flag changes from `?' to `!', meaning that the token
can further trigger the last class of rewrites, shown in
Fig.~\ref{fig:RewriteTransitionsClosedBox}, which handles copying.
\begin{figure}[h]
 \centering
 \includegraphics{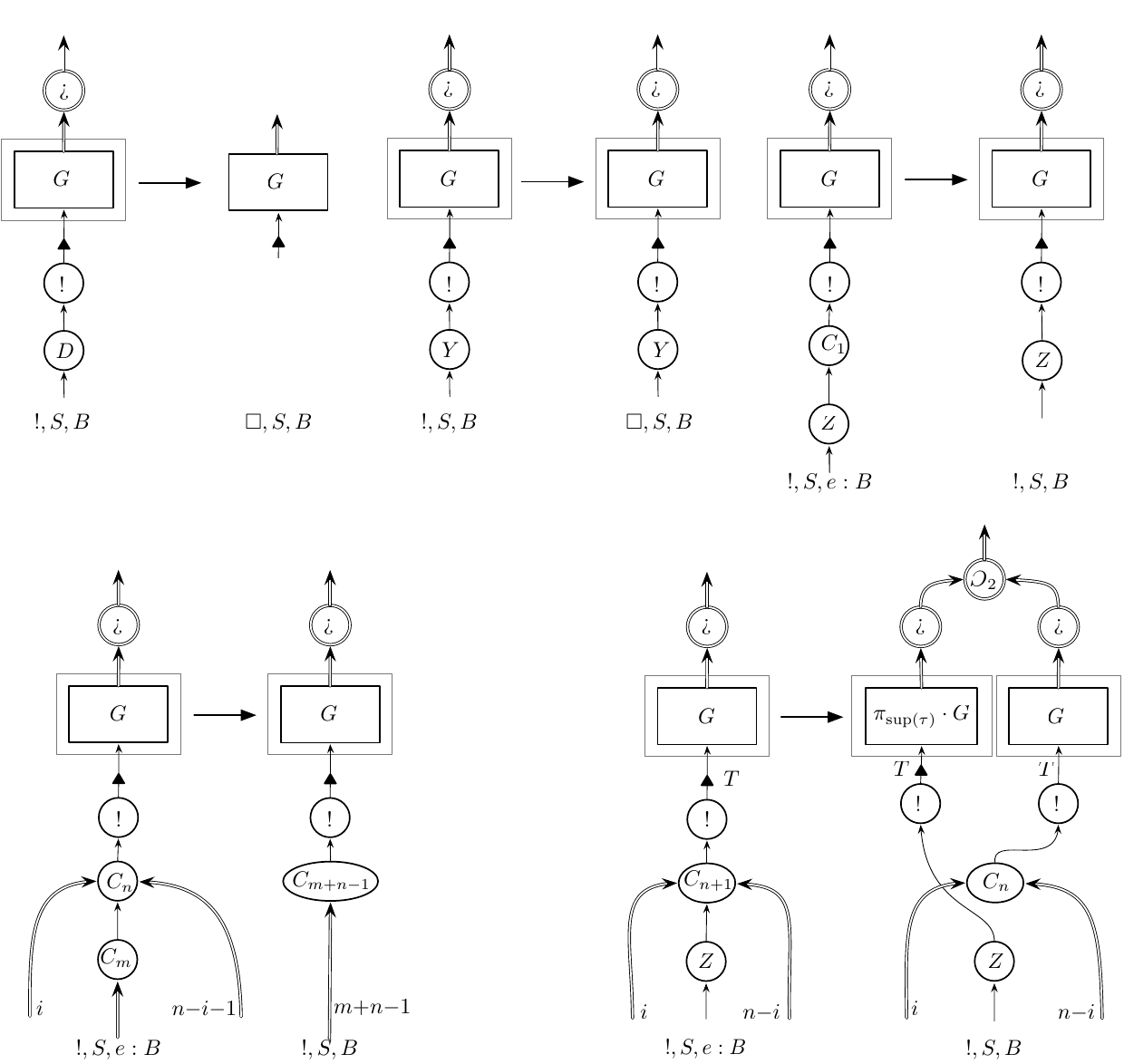}
 \caption{Rewrite Transitions: copying closed $\oc$-boxes}
 \label{fig:RewriteTransitionsClosedBox}
\end{figure}

Rewrites in Fig.~\ref{fig:RewriteTransitionsClosedBox} are several
simple bureaucratic rewrites, involving copying of closed !-boxes.
The two top-left rewrites, where
$Y\not\in\{\lb D, \lb C_k\mid k\in\mathbb N \}$,
change rewrite mode to pass mode, by setting the rewriting flag to
$\square$.
The top-right rewrite eliminates the trivial contraction $\lb{C}_1$,
discarding the top element $\ell$ of the box stack.
The bottom-left rewrite combines contraction nodes.
It consumes the top element $\ell$ of the box stack to detect the
lower contraction node ($\lb{C}_m$), therefore the link $\ell$ is
assumed to be between two contraction nodes ($\lb{C}_n$ and
$\lb{C}_m$).
The bottom-right rewrite, where
$Z\not\in\{\lb C_k \mid k\in\mathbb N \}$, actually copies a
$\oc$-box.
It consumes the top element $\ell$ of the box stack to detect the
$Z$-node, therefore the link $\ell$ is assumed to be between the
$\lb{C}_{n+1}$-node and the $Z$-node.

Finally, we can confirm that all the transitions presented so far are
well-defined.
\begin{proposition}[Form preservation]
 \label{prop:FormPreservation}
 All transitions indeed send a graph state to another graph state, in
 particular a composite graph $G \circ (\vec{p})^\ddag$ to a composite
 graph
 $G' \circ (\vec{p})^\ddag$ of the same type.
\end{proposition}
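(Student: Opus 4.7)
The plan is to proceed by exhaustive case analysis on the transition rules, distinguishing pass transitions from rewrite transitions. For each rule I will verify three invariants: (i) the resulting graph still satisfies the connection rules of Figure~\ref{fig:GraphConnection}; (ii) the resulting graph is still composite with \emph{exactly the same} provisional component $(\vec p)^\ddag$; and (iii) the enriched type of the root is unchanged. Form preservation then follows because a graph state is precisely the conjunction of compositeness, typed connectivity, and a well-formed token datum compatible with its position.

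For pass transitions (Figure~\ref{fig:PassTransitions}) the graph itself is untouched, so (i)--(iii) are automatic. What has to be checked is that the new token datum remains well-formed at the new position: the top entries of the computation stack match the enriched type of the edge on which the token emerges (e.g.\ $\val p \cl S$ appears only when leaving a $\val p$-node downwards, $\lambda \cl S$ only when leaving a $\lambda$-node upwards, and $@ \cl S$ only when entering an $@$-node), and the box stack is mutated only at contraction nodes, with each push on an upward traversal matched by the corresponding pop on the downward traversal. This amounts to a direct inspection of the table.

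For local rewrite transitions (Figures~\ref{fig:RewriteTransitionsComput}, \ref{fig:RewriteTransitionsOpenBox}, \ref{fig:RewriteTransitionsClosedBox}), invariant~(i) is verified by matching the enriched types at the boundary of each redex before and after rewriting, as in standard proof-net reductions. Invariant~(ii) requires a little more care, but is secured by design: the computational rewrites for $+,-,\times,/$ fire only when \emph{both} arguments are $\val p$-nodes, never $\rot{\oc}$-nodes, so the $(\vec p)^\ddag$ component is never consumed; the unfolding of iterated operations only introduces the standard-basis nodes $\val{\vec e_i}$; and the box-closing and box-copying rewrites act purely on $\oc$/$\wn$/$\lb C$/$\lb D$ bureaucracy and never touch $(\vec p)^\ddag$.

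The main obstacle will be the deep rewrites in Figure~\ref{fig:RewriteTransitionsDeep}. For the decoupling rule, the $\oc$-box $H$ typically shares some provisional constants with the outside via the $(\vec p)^\ddag$ component; I will have to check that after replacing the $\lb A$-node by a projection/$\lambda$-scaffold on one input and a fresh $\val{\vec p}$-vector on the other, the underlying $(\vec p)^\ddag$ is left exactly intact, and that freshness of the bound name $a$ ensures the new vector type $V_a$ cannot leak outside the decoupling scope. For deep projection and deep contraction, the subgraph $H$ is duplicated under the name permutation $\pi_N$, which preserves typeability of each copy and keeps all refreshed vector types properly scoped; the delicate point is to verify that duplication only replicates definitive scaffolding while leaving the $\rot\oc$-nodes of $(\vec p)^\ddag$ unchanged in number and order, so that a single composite decomposition $G' \circ (\vec p)^\ddag$ with the same $\vec p$ survives. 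I expect this bookkeeping, together with the matching argument that $\pi_N$ preserves validity of the types along the duplicated boundary, to be the one genuinely non-trivial step.
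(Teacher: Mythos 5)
Your proposal is correct and follows essentially the same route as the paper, which simply observes that every transition modifies only the definitive component, leaves the $(\vec{p})^\ddag$ subgraph of constant and $\rot\oc$-nodes untouched, and preserves the shape and types of each redex's interface. The only difference is one of detail: your per-rule case analysis (and the extra check that the token datum matches the new position, which the paper defers to the separate notion of \emph{validity} in the appendix) elaborates what the paper compresses into two sentences.
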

\begin{proof}
 Any transitions make changes only in a definitive graph and keeps
 the graph $\vec{p}^\ddag$ which contains only constant nodes and
 $\rot\oc$-nodes.
 They do not change the shape and types of interfaces of a redex.
\end{proof}

All the pass transitions are deterministic.
The rewrite rewrites are usually deterministic, except for the deep rewrites and the copying rewrites. However, these rewrites are confluent as no redexes are shared between rewrites,  so the overall beginning-to-end execution is deterministic.
\begin{definition}[Initial/final states and execution]
 Let $G$ be a composite graph with root $\ell$.
 An \emph{initial} state $\Init(G)$ on the graph $G$ is
 given by $((G,\ell),(\up,\square,\star \cl \square,\square))$.
 A \emph{final} state $\Final(G,X)$ on the graph $G$, with a single
 element $X$ of a computation stack, is given by
 $((G,\ell),(\dn,\square,X \cl \square,\square))$.
 An \emph{execution} on the graph $G$ is any sequence of transitions
 from the initial state $\Init(G)$.
\end{definition}
\begin{proposition}[Determinism of execution]
 \label{prop:Determinism}
 For any initial state $\Init(G)$, the final state $\Final(G,X)$ such
 that $\Init(G) \to^* \Final(G,X)$ is unique up to name permutation,
 if it exists.
\end{proposition}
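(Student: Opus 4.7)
The plan is to identify the sources of non-determinism in the transition system and show that, while distinct sequences of transitions may exist from the same state, they lead to final states equivalent up to name permutation. Pass transitions are deterministic by construction: they are uniquely determined by the label of the node under the token together with the token data, so only one pass transition applies at any state with rewriting flag $\square$. The computational rewrites of Fig.~\ref{fig:RewriteTransitionsComput}, the box-closing rewrites of Fig.~\ref{fig:RewriteTransitionsOpenBox}, and the copying rewrites of Fig.~\ref{fig:RewriteTransitionsClosedBox} are also deterministic once the token position and the rewriting flag are fixed, since their redexes contain the token. The only genuine sources of non-determinism are therefore (i) the deep rewrites of Fig.~\ref{fig:RewriteTransitionsDeep}, whose redexes may occur at several box-reachable positions independent of the token, and (ii) the choice of globally fresh names used by $\pi_N$ in the deep projection and contraction rules.

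The key step is a local confluence (diamond) lemma for these two sources. For (i), I would show that any two applicable deep redexes are either disjoint or overlap in a way that permits commutation. Disjointness is immediate when the redexes lie in distinct $\oc$-boxes or in disjoint sub-graphs at the same level, since the rewriting of one redex does not touch the other. When the redexes are not disjoint, the only non-trivial cases to analyse are the interactions of deep contraction (which duplicates a sub-graph $H$) with a deep decoupling or deep projection whose redex lies inside that $H$; in each such case, performing the contraction first yields several copies of the pending redex whose subsequent rewriting produces the same graph as performing the inner rewrite first and then copying. A similar non-interference argument applies between deep rewrites and any local rewrite triggered by the token, using Prop.~\ref{prop:FormPreservation} to ensure that interfaces and types are preserved. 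For (ii), any two choices of fresh names differ by a bijection fixing the previously-used names, and the whole transition system is equivariant under such global name permutations, so the resulting states are equal up to name permutation.

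Putting the two ingredients together gives local confluence up to name permutation: if $s \to s_1$ and $s \to s_2$ via different transitions, there is a state $s_3$ such that $s_1 \to^* s_3$ and $s_2 \to^* s_3'$ with $s_3$ and $s_3'$ equal up to name permutation. Since we are only concerned with terminating executions, a standard Newman-style argument (or equivalently an induction on the combined length of the two reduction sequences) promotes local confluence into the statement of the proposition: if $\Init(G) \to^* \Final(G,X_1)$ and $\Init(G) \to^* \Final(G,X_2)$, then $\Final(G,X_1)$ and $\Final(G,X_2)$ coincide up to a global permutation of names, so in particular $X_1$ and $X_2$ do.

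I expect the main obstacle to be the careful case analysis for interactions between deep contraction and the other deep rules. Duplicating a sub-graph $H$ that already contains a decoupling redex forces one to verify that, after subsequent copying, the multiple decouplings produced are consistent up to name refreshment with the single decoupling one would obtain by applying the deep rule first; this is precisely the situation that motivated the name-refreshing convention of $\pi_N$, but writing out the commuting diagram explicitly, together with the bookkeeping of which names are refreshed when, is the delicate part. Once that case is settled, the remaining cases of the diamond lemma are routine diagram-chasing on the graph.
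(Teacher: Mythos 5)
Your overall decomposition matches the paper's: the two sources of non-determinism are exactly the choice of fresh names and the choice among simultaneously applicable $\wn$-rewrites, and the name issue is discharged, as in the paper, by showing the transition system is equivariant (the paper packages this as a bisimulation $\sim_\alpha$ between a state and its image under any name permutation, Prop.~\ref{prop:AlphaEquivBisim}). The divergence, and the genuine gap, is in how you close the confluence argument. You prove only \emph{local} confluence with multi-step joins ($s_1 \to^* s_3$ and $s_2 \to^* s_3'$) and then invoke ``a standard Newman-style argument (or equivalently an induction on the combined length of the two reduction sequences).'' Neither works here: Newman's lemma requires strong normalisation of the whole relation, which is not available at this point in the development (termination is proved only later, in Appendix~\ref{app:Soundness}, and only for translations of well-typed programs, whereas Prop.~\ref{prop:Determinism} is stated for an arbitrary initial graph); and induction on the combined length of the two given sequences fails for weak local confluence because the joining sequences may be longer than the sequences being joined. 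The mere existence of one terminating execution from $\Init(G)$ does not rule out a second, distinct normal form when only local confluence holds.

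The paper avoids this entirely by establishing a genuine one-step diamond: the applicability conditions on the $\wn$-rewrites (box-reachability from a definitive auxiliary door at the same level, with the redex of the deep decoupling rule explicitly defined to exclude the unchanged sub-graph $G \circ (\vec{p})^\ddag$) guarantee that any two $\wn$-rewrites enabled at the same state have \emph{disjoint} redexes, so each can still be performed after the other and both orders reach the same state in exactly one further step. The strong diamond property then yields confluence by plain tiling, with no appeal to termination. So the overlap cases you flag as ``the delicate part'' --- a deep contraction duplicating a sub-graph that contains a pending decoupling or projection redex --- do not arise under the rules as stated; conversely, if they did arise, your multi-step commutation would not by itself suffice to prove the proposition. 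To repair your proof you should either establish the disjointness of simultaneously applicable redexes (recovering the one-step diamond) or supply a termination argument before invoking Newman.
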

\begin{proof}
 See Appendix~\ref{app:Determinism}.
\end{proof}

\section{Operational semantics of the abductive calculus}

\subsection{Translation of terms to graphs}

A derivable type judgement $A \mid \Gamma \mid \vec{p} \vdash t:T$ is
translated to a composite graph which we write as
$(A \mid \Gamma \mid \vec{p} \vdash t:T)^\ddag
= (A \mid \Gamma \mid \vec{p} \vdash t:T)^\dag \circ \vec{p}$.

Fig.~\ref{fig:InductiveTranslation} shows the inductive definition
of the definitive graph $(A \mid \Gamma \mid \vec{p} \vdash t:T)^\dag$.
Given a sequence $\Gamma = {x_0:T_0},\ldots,{x_{m-1}:T_{m-1}}$ of typed
variables, $\oc \Gamma$ denotes the sequence
$\oc T_0,\ldots,\oc T_{m-1}$ of (enriched) types.
\begin{figure}
 \centering
 \includegraphics{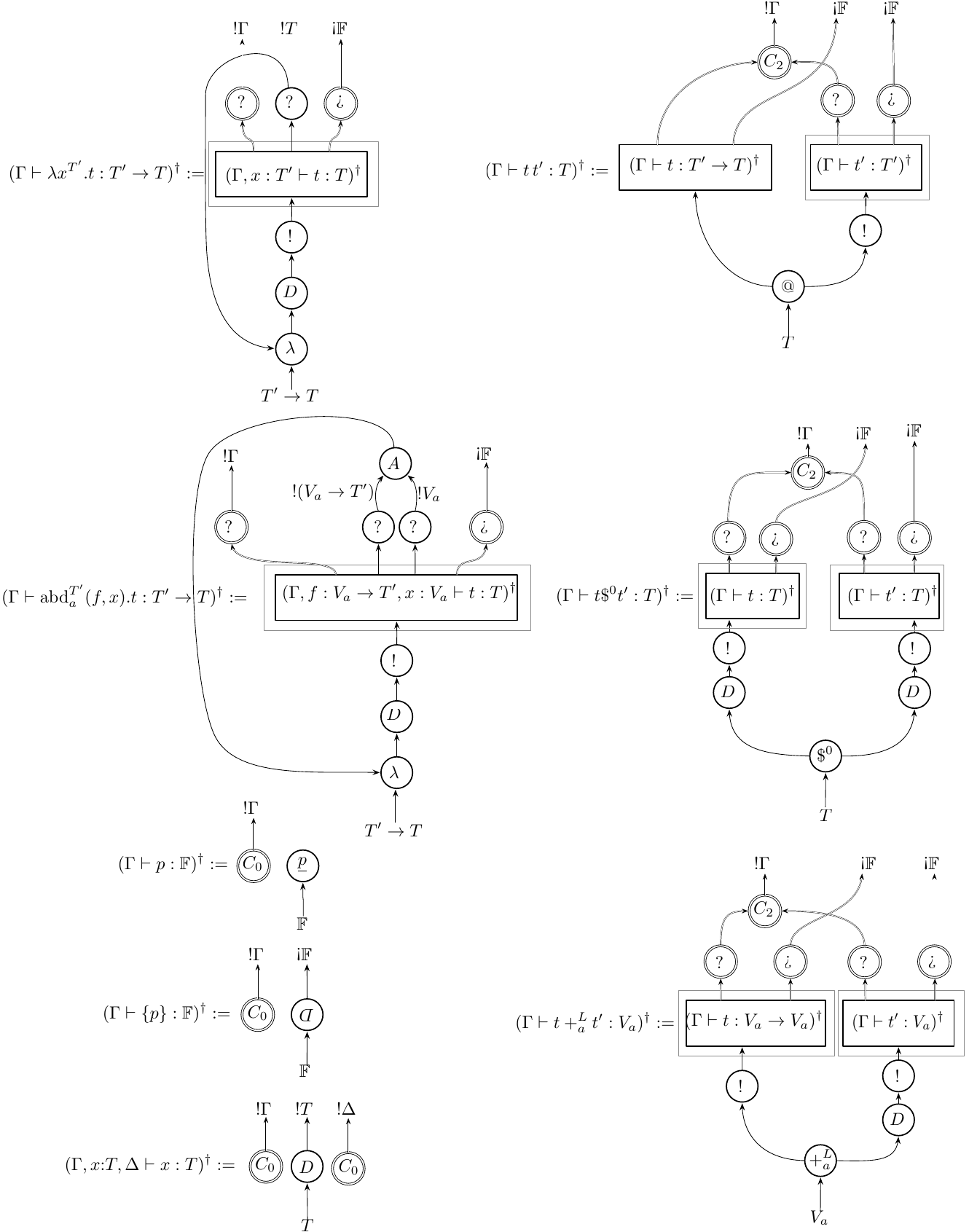}
 \caption{Inductive Translation}
 \label{fig:InductiveTranslation}
\end{figure}
Note that the translation does not contain any $\lb{P}_n$-nodes,
$\val{\vec{q}}$-nodes or $\rot{\lb{C}}_n$-nodes; they are generated by
rewrite transitions.


\subsection{Examples}

With the operational semantics in place we can return to formally re-examine the examples from the Introduction. All the examples below are executed using the on-line visualiser\footnote{\texttt{\url{http://bit.ly/2uaorPx}}}. All examples are pre-loaded into the visualiser menu. Note that the on-line visualiser uses additional \textit{garbage collection} rules as discussed in Sec.~\ref{sec:gc}.

\subsubsection{Simple abduction}

An extremely simple abductive program is:
$
\letin{f\,@\,p}{\{1\}+2}{f\, p}
$, which  decouples a parameter from a ground-type term and re-applies it immediately to the resulting model, evaluating to the same value, thus deprecating a provisional constant in a model to a definitive constant. This can be useful for computationally simplifying a model. Some key steps in the execution are given in Fig.~\ref{fig:figex1}.
\begin{figure}
\centering 
	\includegraphics[scale=0.6]{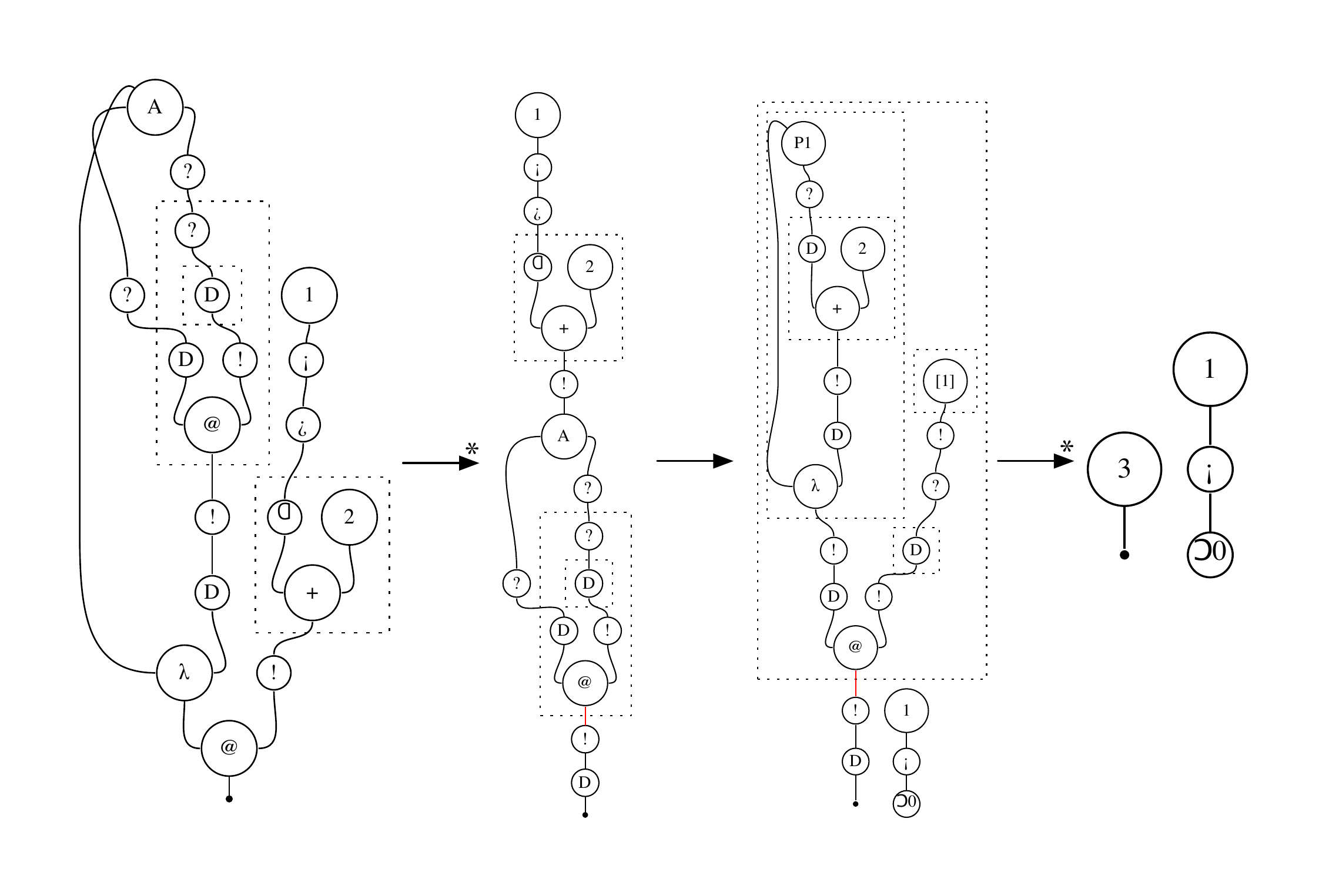}
	\caption{Simple abduction}
	\label{fig:figex1}
\end{figure} 
The first diagram represents the initial graph, the second and third just before and after decoupling, and the fourth is the final value. Note that the diagram still includes the provisional constant of the original term, because of the linearity requirement. We will discuss this in Sec.~\ref{sec:lin}.

\subsubsection{Deep decoupling}
The second example was meant to illustrate the fact that decoupling is indeed a semantic rather than syntactic operation, which is applied to graphs constructed through evaluation:
\begin{align*}
&\letin {y} {\{2\} + 1} { }\\[-1ex]
&\letin {m\, x} {\{3\} + y + x}{ }\\[-1ex]
&\letin{f\,@\,p}m{ }\\[-1ex]
&f\, p\, 7.
\end{align*}
The key stages of execution (initial, just before decoupling, just after decoupling, final) are shown in Fig.~\ref{fig:figex2}.
The provisional constants are highlighted in red and the  $A$-node in blue. We can see how, initially, the two provisional constants belong to distinct sub-graphs of the program, but are brought together during execution and are decoupled together. 
\begin{figure}
	\centering
	\includegraphics[scale=0.6]{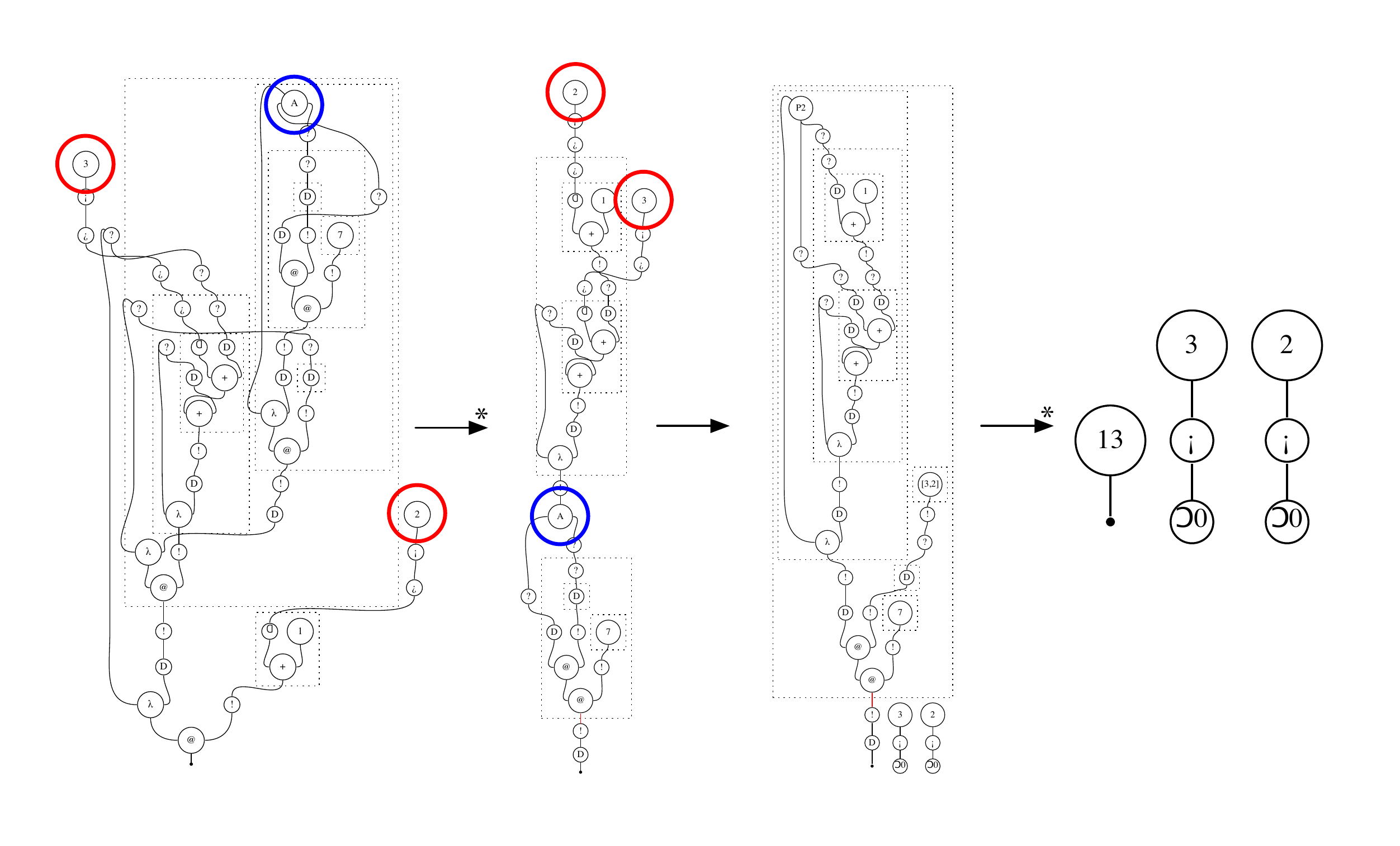}
	\caption{Deep decoupling}
	\label{fig:figex2}
\end{figure}

\subsubsection{Linearity of provisional constants}\label{sec:lin}

In this section we illustrate with two very simple example why linearity of provisional constants is important. 

Consider $t=(\lambda x.x+x)\{1\}$ versus  $t'=\{1\}+\{1\}$. When evaluated in direct mode, these terms produce the same value. However, consider the way they are evaluated in the abductive context $\letin{f\,@\,p}{\square}{p\bullet p}$, as seen in Fig.~\ref{fig:figex3}. The same four key stages in execution are illustrated for both example. In the case of $t$ we can see how the single provisional constant becomes shared (via the $\rot C$-node) while the resulting model has only one parameter. On the other hand, $t'$ has two parameters, resulting in a model with two arguments. In both cases the models are discarded (the $C_0$-node) and only the parameter vector is processed via dot product -- resulting in two different final values. Note that the non-accessible nodes (``garbage'') are greyed out. 
\begin{figure}
	\centering
	\includegraphics[scale=0.5]{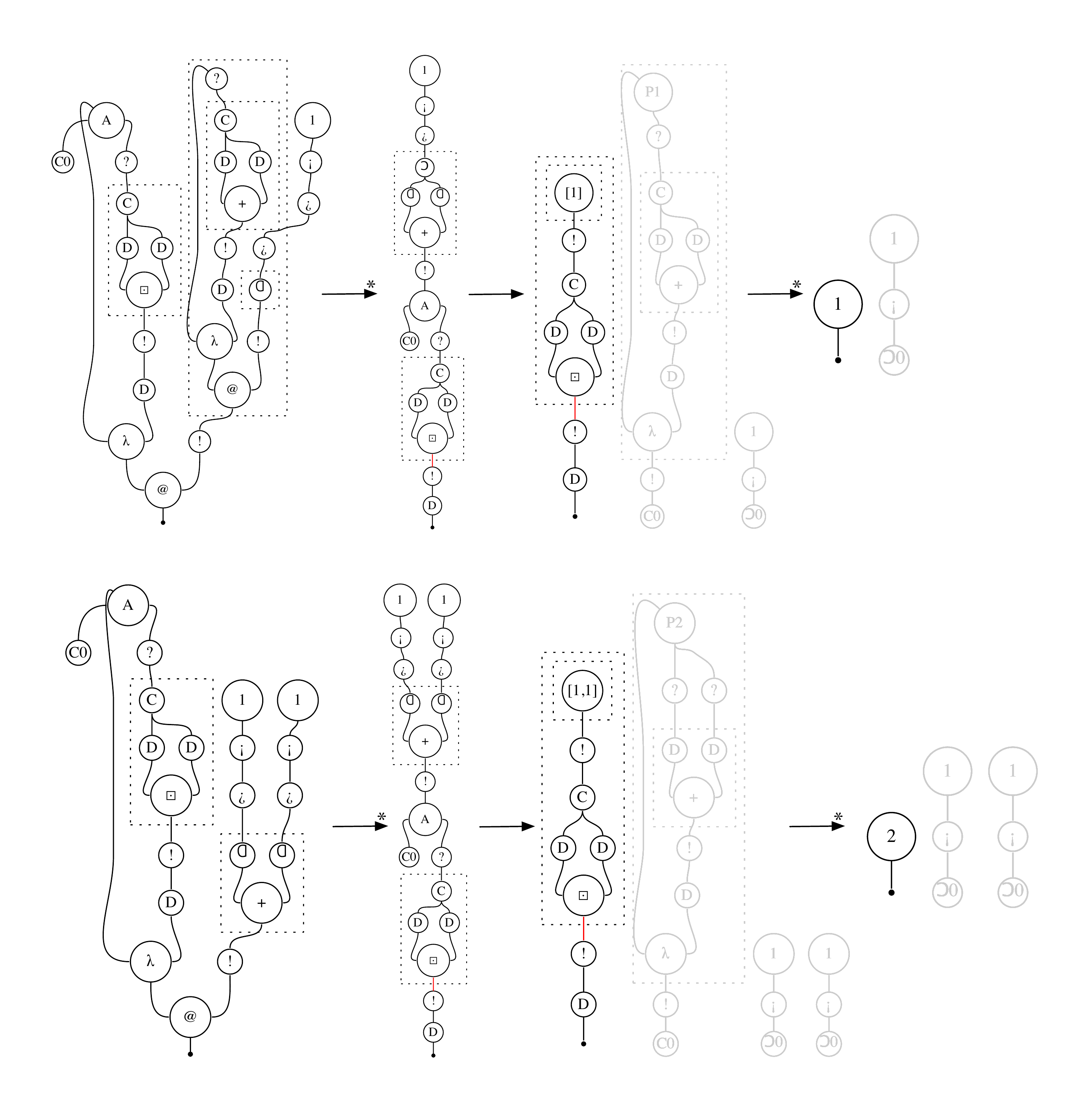}
	\caption{Contraction of provisional constants}
	\label{fig:figex3}
\end{figure}

Similarly, considering $t=(\lambda x.0)\{1\}$ versus $t'=0$ in the same context $\letin{f\,@\,p}{\square}{p\bullet p}$ shows observable differences between the two because of the abductable provisional constant in $t$.

\subsubsection{Learning and meta-learning}
The main motivation of our abduction calculus is supporting learning through parameter tuning. In the on-line visualiser we provide a full example of learning a linear regression model via gradient descent. 

Beyond learning, the native support for parameterised constants also makes it easy to express so-called ``meta-learning'', where the learning process itself is parameterised~\cite{vilalta2002perspective}. For example the rate of change in gradient descent or the rate of mutation in a genetic algorithm can also be tuned by observing and improving the behaviour of the algorithm in concrete scenarios. A stripped-down example of ``meta-learning'' is $\letin{g\,@\,q}{(\letin{f\,@\,p}{\{1\}}{f(\{2\}\times p)})}{g\, q}$, 
because learning (after decoupling) mode uses tunable parameters which are themselves subsequently decoupled. The inner let is where learning happens, whereas the outer let indicates the ``meta-learning.'' Fig.~\ref{fig:figex5} shows the initial graph, before-and-after the first decoupling, before-and-after the second decoupling, and the final results. 

One meta-algorithm which is particularly interesting and widely used is ``adaptive boosting''  (\textit{AdaBoost})~\cite{freund1997decision}, and it is also programmable using the abductive calculus in a less bureaucratic way. A typical boosting algorithm uses ``weak'' learning algorithms combined into a weighted sum that represents the final output of the boosted classifier. A typical implementation of adaptive boosting using abduction is as follows:
\begin{align*}
	&\letin {\textit{model}\,x} {\ldots} { }
	& \text{initial model}\\[-1ex]
	&\letin {\textit{learn}'} {\ldots} { }
	& \text{some learning algorithm}\\[-1ex]
	&\letin {\textit{learn}''} {\ldots} { }
	& \text{different learning algorithm}\\[-1ex]
	&\letin{f\,@\,p}{\textit{model}}{ }
	& \text{abductive decoupling}\\[-1ex]
	&\letin{p'}{\textit{learn}'\,f\,p}{}
	& \text{tune default parameters}\\[-1ex]
	&\letin{p''}{\textit{learn}''\,f\,p'}{}
	& \text{tune new parameters}\\[-1ex]
	&\letin{\textit{model}'}{f\,p'}{}
	& \text{first tuned model}\\[-1ex]
	&\letin{\textit{model}''}{f\,p''}{}
	& \text{second tuned model}\\[-1ex]
	&\letin{\textit{boosted}\,x}{(\textit{model}'\,x+\textit{model}''\,x)/2{}}
	& \text{boosted aggregated model}\\[-1ex]
	& \ldots & \text{main program.}
\end{align*}
\begin{figure}
	\centering
	\includegraphics[scale=0.5]{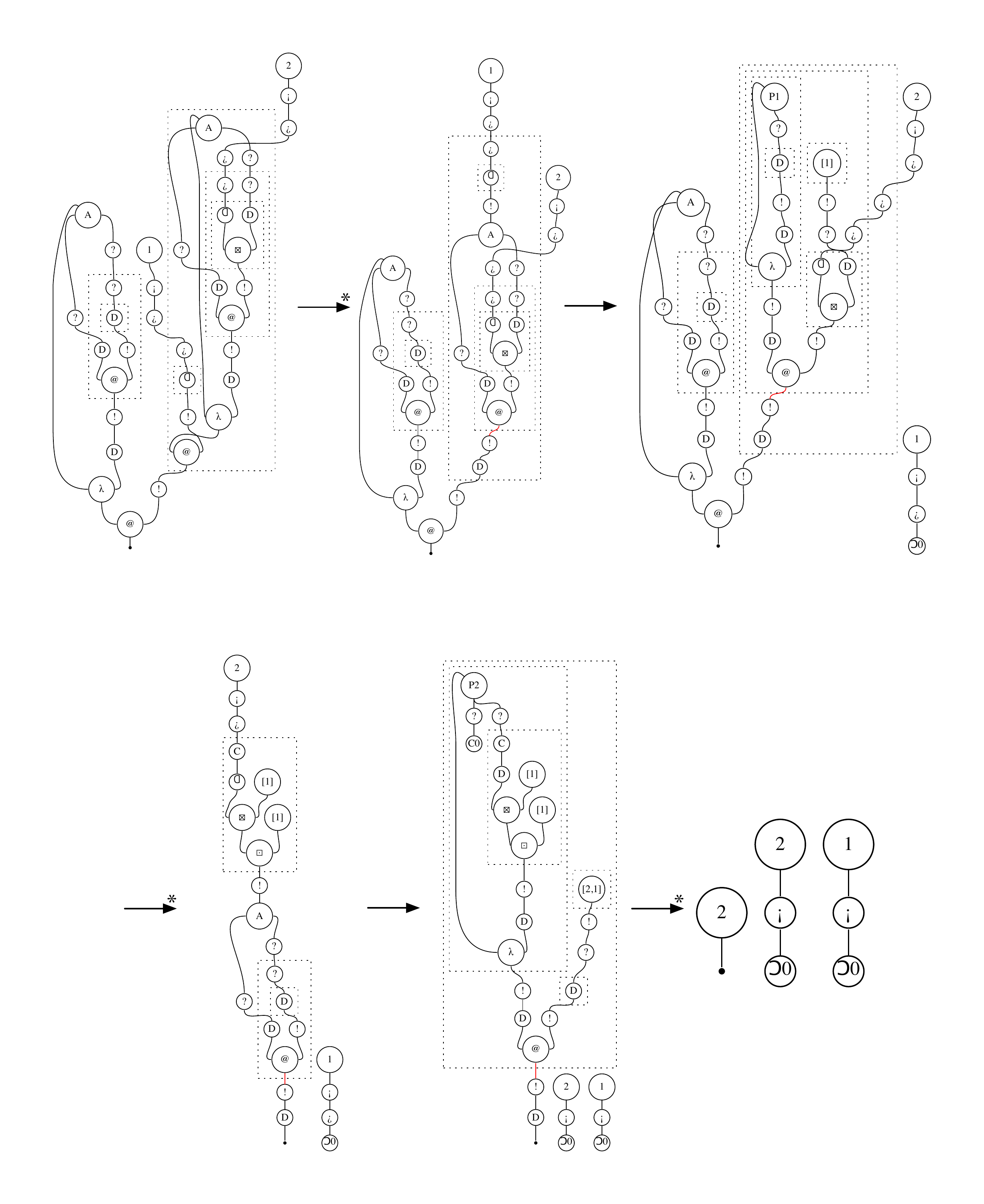}
	\caption{''Meta-learning''}
	\label{fig:figex5}
\end{figure}

\section{Correctness}
\subsection{Soundness}
\label{sec:Soundness}

The main technical result of this paper is soundness, which expresses the fact that well typed programs terminate correctly, which means they do not crash and do not run forever. The challenge is, as expected, dealing with the complex rewriting rule used to model abductive decoupling. 

\begin{theorem}[Soundness]
	For any closed program $t$ such that 
	$- \mid - \mid \vec{p} \vdash t : T$, there exist a graph $G$ and an
	element $X$ of a computation stack such that:
	\begin{align*}
	\Init((- \mid - \mid \vec{p} \vdash t : T)^\ddag) \to^* \Final(G,X).
	\end{align*}
\end{theorem}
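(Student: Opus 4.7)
The plan is to prove soundness by a reducibility (logical predicate) argument tailored to the GoI-style abstract machine. For each enriched type $\tilde T$ I would define a set $\mathsf{Red}_{\tilde T}$ of composite graphs whose root has type $\tilde T$. At ground types $\F$ and $V_a$, membership in $\mathsf{Red}_{\tilde T}$ asserts that $\Init(G) \to^* \Final(G',X)$ with $X$ of the appropriate shape (a scalar or vector constant on the computation stack). At a function type $T' \to T$, I would require that plugging $G$ into an application context against any graph in $\mathsf{Red}_{\oc T'}$ yields a graph in $\mathsf{Red}_T$. At argument types $\oc T$ I would require the analogous closure under the box-opening and copying rewrites of Fig.~\ref{fig:RewriteTransitionsOpenBox} and Fig.~\ref{fig:RewriteTransitionsClosedBox}.

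Next I would prove a Fundamental Lemma by induction on a derivation of $A \mid \Gamma \mid \vec p \vdash t : T$: for every reducible closing substitution for the variables in $\Gamma$, the translation $(A \mid \Gamma \mid \vec p \vdash t:T)^\ddag$ lies in $\mathsf{Red}_T$. Most cases (variables, abstraction, application, primitive operations) mirror the standard call-by-value reducibility argument, mediated by the pass transitions of Fig.~\ref{fig:PassTransitions} and the computational rewrites of Fig.~\ref{fig:RewriteTransitionsComput}. The genuinely new case is $\abd{a}{f}{x}{T'}{t}$: when the deep decoupling rewrite fires, the $\lb{A}$-node is replaced by two subgraphs that, respectively, bind $f$ to a projection-based function over $V_a$ and $x$ to the vector packaging the current provisional constants. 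I would show these replacement graphs live in $\mathsf{Red}_{\oc(V_a \to T')}$ and $\mathsf{Red}_{\oc V_a}$, so that the induction hypothesis applied to $t$ closes this case. The iterative rules for $+^L_a$ and $\times^L_a$ (which drive the projection node's use of the standard basis after unfolding) terminate by a straightforward well-founded induction on the index $n$ carried in the rewriting flag $\$^1(n)$.

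The main obstacle, I expect, is the deep rewrites themselves. Unlike ordinary reduction steps, deep decoupling, projection, and contraction can rearrange or duplicate arbitrary subgraphs far from the token, so the reducibility invariant must be preserved across these global, non-local changes. Here I would lean heavily on Proposition~\ref{prop:FormPreservation} to keep the composite decomposition $G = H \circ (\vec p)^\ddag$ intact, on the name-refreshment discipline outlined in Appendix~\ref{app:Validity} to ensure that vector types produced by different firings of decoupling do not become confused, and on the box-reachability side-conditions of Def.~\ref{def:BoxReachability} to argue that any pending deep redex can be postponed until the token reaches the relevant principal door without disturbing reducibility. Determinism (Proposition~\ref{prop:Determinism}) then guarantees that the final state obtained is essentially unique, completing the proof of soundness from the Fundamental Lemma applied to the closed program at type $T$.
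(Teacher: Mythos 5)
Your overall strategy---a type-indexed logical predicate on graphs plus a fundamental lemma proved by induction on the type derivation---is the same as the paper's (Appendix~\ref{app:Soundness} defines $P_T$ essentially as your $\mathsf{Red}_T$, with termination at ground types and closure under application at function types, and derives the theorem from Prop.~\ref{prop:FundamentalLemma}). The supporting infrastructure you cite (form preservation, name refreshment, determinism) also appears, supplemented in the paper by the validity and stability invariants of Appendices~\ref{app:Validity}--\ref{app:Stability}, which are what actually make the factorisation and productivity arguments go through.

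The genuine gap is in your treatment of the decoupling case. You propose to show that the two graphs produced by the deep decoupling rewrite lie in $\mathsf{Red}_{\oc(V_a \to T')}$ and $\mathsf{Red}_{\oc V_a}$ and then ``close the case by the induction hypothesis applied to $t$.'' But the hard part is precisely the first of these claims, and it is not an instance of the induction hypothesis. The decoupled function is $\lambda$ over $\lb{P}_n$ over $\phi(G)$, where $\phi(G)$ is the body graph with its provisional nodes ($\rot{\lb{C}}$, $\rot{\wn}$, $\rot{\lb{D}}$) converted to definitive ones; when this function is later applied to a parameter vector, each former provisional constant is replaced by a $\oc$-box computing a dot product $\vec{e_i}\bullet\vec{q}$ against the standard basis. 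The resulting graph is neither the translation of any term nor the graph $G$ whose reducibility the induction hypothesis gives you---it is $G$ with arbitrary ``data-flow'' subgraphs surgically substituted for its constants, possibly deep inside $\oc$-boxes and possibly with a different number of provisional outputs. The paper isolates this as Prop.~\ref{prop:AbductionSafety} (safety of decoupling), and its proof requires a dedicated apparatus: the data-flow chains and the replacement relation $\propto$ of Appendix~\ref{app:DataFlowGraphs}, culminating in Cor.~\ref{cor:DataFlowReplacementSafety}, which says that swapping one data-flow chain for another preserves termination. It also needs the forward/backward reasoning of Prop.~\ref{prop:FWBWReasoning} to transport membership in $P_T$ along executions (decoupling fires when the token reaches a principal door far from where the $\lb{A}$-node was introduced, so you must argue about the reduct, not the original graph). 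None of the tools you list---Prop.~\ref{prop:FormPreservation}, the name discipline, box-reachability, or determinism---substitutes for this replacement lemma, so as written your induction would not close.
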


In our semantics, the execution involves either a token moving through the graph, or rewrites to the graph. Above, $G$ is the final shape of the graph at the end of the execution, and $X$ is a part of the token data as it ``exits'' the graph $G$. $X$ will always be either a scalar, or a vector, or the symbol $\lambda$ indicating a function-value result. The graph $G$ will involve  the provisional constants in the vector $\vec p$, which are not reduced during execution.

The proof is given in  Appendix~\ref{app:Soundness}.

\subsection{Garbage collection}\label{sec:gc}

Large programs generate subgraphs which are, in a technical sense, unreachable during normal execution, i.e. \textit{garbage}. In the presence of decoupling the precise definition is subtle, and the rules for removing it not immediately obvious. To define garbage collection we first introduce a notion of operational equivalence for graphs, then we show that the rewrite rules corresponding to garbage collection preserve this equivalence.

\section{Garbage collection}
\label{app:GC}

\begin{definition}[Graph equivalence]
	Two definitive graphs $G_1(1,n)$ and $G_2(1,n)$ of ground type are
	\emph{equivalent}, written $G_1 \equiv G_2$, if for any vector
	$\vec{q} \in \F^n$, there exists an element $X$ of a computation
	stack such that the following are equivalent:
	$\Init(G_1 \circ (\vec{p})^\ddag)
	\to^* \Final(G'_1 \circ (\vec{p})^\ddag,X)$
	for some definitive graph $G'_1$, and
	$\Init(G_2 \circ (\vec{p})^\ddag)
	\to^* \Final(G'_2 \circ (\vec{p})^\ddag,X)$
	for some definitive graph $G'_2$. 
\end{definition}
\begin{definition}[Graph-contextual equivalence]
	Two graphs $G_1(n,m)$ and $G_2(n,m)$ are \emph{contextually
		equivalent}, written $G_1 \cong G_2$, if for any graph context
	$\mathcal{G}[\square]$ that is itself a definitive graph of ground
	type, $\mathcal{G}[G_1] \equiv \mathcal{G}[G_2]$ holds.
\end{definition}

The graph equivalence $\equiv$ and the graph-contextual equivalence
$\cong$ are indeed equivalence relations.
Our interest here is what binary relation $R$ on graphs implies
(equivalently, be included by) the graph-contextual equivalence
$\cong$.
\begin{definition}[Lifting]
	Given a binary relation $R$ on graphs of the same interface, its
	\emph{lifting} $\lift{R}$ is a binary relation between states
	defined by:
	$((\mathcal{G}[G_1],\ell),\delta)
	\lift{R} ((\mathcal{G}[G_2],\ell),\delta)$
	where $G_1 \mathrel{R} G_2$, and the position $\ell$ is in the graph
	context $\mathcal{G}[\square]$.
\end{definition}
\noindent
We use the reflexive and transitive closure of a lifting $\lift{R}$,
denoted by $\liftRT{R}$, to deal with duplication of sub-graphs.
\begin{lemma}
	\label{lem:InitLifting}
	If two composite graphs can be decomposed as
	$\mathcal{G}[G_1]$ and $\mathcal{G}[G_2]$ such that
	$G_1 \mathrel{R} G_2$ for some binary relation $R$, initial
	states on them satisfy
	$\Init(\mathcal{G}[G_1]) \lift{R} \Init(\mathcal{G}[G_2])$.
	\qed
\end{lemma}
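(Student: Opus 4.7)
The plan is to unpack the two conditions that the lifting relation $\lift{R}$ imposes, and verify that both hold for the initial states on $\mathcal{G}[G_1]$ and $\mathcal{G}[G_2]$. Concretely, the definition of $\lift{R}$ requires (i) that the two states carry identical token data $\delta$, and (ii) that the common position $\ell$ lies in the context $\mathcal{G}[\square]$ (rather than in the inserted subgraph), together with $G_1 \mathrel{R} G_2$. The hypothesis directly supplies the relation $G_1 \mathrel{R} G_2$, so only the two structural conditions need to be discharged.

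For condition (i), I would simply invoke the definition of $\Init$: both $\Init(\mathcal{G}[G_1])$ and $\Init(\mathcal{G}[G_2])$ are defined to carry the token data $(\up, \square, \star \cl \square, \square)$, so the $\delta$ components agree on the nose. For condition (ii), I would argue geometrically. Each $\mathcal{G}[G_i]$ is a composite graph and therefore has interface $(1,0)$; its root is the unique input link of this outer interface. Since the hole in $\mathcal{G}[\square]$ is substituted by $G_i$ at the hole's own interface, which is internal to $\mathcal{G}[\square]$, the outer $(1,0)$ interface is contributed entirely by $\mathcal{G}[\square]$. Consequently the root is a link of $\mathcal{G}[\square]$, and it is literally the same link node whether we insert $G_1$ or $G_2$, so the common choice of $\ell$ is well-defined.

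Combining the two observations, both initial states share the same position $\ell$ (in the context) and the same token data $\delta$, and $G_1 \mathrel{R} G_2$ is assumed, so the definition of $\lift{R}$ is satisfied and we conclude $\Init(\mathcal{G}[G_1]) \lift{R} \Init(\mathcal{G}[G_2])$. I do not anticipate a real obstacle here: the lemma is essentially a definitional unpacking, and the only geometric point worth explicitly pointing out is that the root, being an \emph{input} link of a $(1,0)$-interfaced graph, cannot lie in the inserted subgraph, whose interface faces inward. The write-up should fit comfortably in a few lines.
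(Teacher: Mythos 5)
Your proof is correct and matches the paper's treatment: the paper marks this lemma with \qed and gives no argument, treating it as an immediate unpacking of the definitions of $\Init$ and $\lift{R}$, which is exactly what you do. The one point you rightly make explicit --- that the root of the $(1,0)$-interfaced composite graph is a link of the context $\mathcal{G}[\square]$ and is the same link under either substitution --- is the only content, and your geometric justification of it is sound.
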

\begin{proposition}[Sufficient condition of graph-contextual
	equivalence]
	\label{prop:SufficientCondGraphCtxtEquiv}
	A binary relation $R$ on graphs satisfies
	$R \subseteq \mathit{\cong}$, if
	$\Init(H_1) \liftRT{R}(H_2)$ implies the existence of an element $X$
	of a computation stack such that the following are equivalent:
	$\Init(H_1) \to^* \Final(H'_1,X)$ for some graph $H'_1$, and
	$\Init(H_2) \to^* \Final(H'_2,X)$ for some graph $H'_2$.
\end{proposition}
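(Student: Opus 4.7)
The plan is to unfold the definitions of $\cong$ and $\equiv$ and reduce the claim directly to the stated hypothesis, using Lemma~\ref{lem:InitLifting} as the main bridge and Proposition~\ref{prop:FormPreservation} for a final cleanup. I would begin by fixing a pair $G_1 \mathrel{R} G_2$ of graphs sharing an interface, together with an arbitrary definitive graph context $\mathcal{G}[\square]$ of ground type whose hole matches that interface and an arbitrary parameter vector $\vec{q}$ so that $\mathcal{G}[G_i] \circ (\vec{q})^\ddag$ is composite. Unfolding $\equiv$, the obligation becomes the existence of a single stack element $X$ that simultaneously witnesses, or simultaneously refutes, termination of the executions from $\Init(\mathcal{G}[G_1] \circ (\vec{q})^\ddag)$ and $\Init(\mathcal{G}[G_2] \circ (\vec{q})^\ddag)$.

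The key observation is that composing with $(\vec{q})^\ddag$ leaves the hole untouched, so $\mathcal{G}'[\square] := \mathcal{G}[\square] \circ (\vec{q})^\ddag$ is itself a graph context satisfying $\mathcal{G}'[G_i] = \mathcal{G}[G_i] \circ (\vec{q})^\ddag$. The root of this composite is inherited from the context $\mathcal{G}[\square]$ and therefore lies in $\mathcal{G}'[\square]$, not in the plugged sub-graph $G_i$. This is precisely the side condition needed by Lemma~\ref{lem:InitLifting}, which then yields $\Init(\mathcal{G}'[G_1]) \lift{R} \Init(\mathcal{G}'[G_2])$ and hence $\Init(\mathcal{G}'[G_1]) \liftRT{R} \Init(\mathcal{G}'[G_2])$. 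Feeding this related pair into the hypothesis of the proposition supplies the required witness $X$ and the claimed equivalence of termination conditions.

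To close the argument I would reconcile the final state $\Final(H'_i, X)$ delivered by the hypothesis with the stricter shape $\Final(G''_i \circ (\vec{q})^\ddag, X)$ required by the definition of $\equiv$. Proposition~\ref{prop:FormPreservation} is exactly what is needed: since every transition preserves composite form and the $(\vec{q})^\ddag$ factor, any reachable final graph automatically decomposes as $G''_i \circ (\vec{q})^\ddag$ for some definitive $G''_i$. Combining the pieces gives $\mathcal{G}[G_1] \equiv \mathcal{G}[G_2]$; since the context and parameter were arbitrary, $G_1 \cong G_2$.

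I do not anticipate a serious obstacle here: the argument is essentially transport of the hypothesis through a routine decomposition. The one point to verify carefully is that the initial token position lies in the enclosing context rather than inside the replaced sub-graph, so that the lifting lemma applies verbatim; nothing else in the proof is delicate.
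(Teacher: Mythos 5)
Your proposal is correct and follows essentially the same route as the paper's own proof: fix $G_1 \mathrel{R} G_2$, apply Lemma~\ref{lem:InitLifting} to the composite graph $\mathcal{G}[G_i]\circ(\vec{q})^\ddag$ viewed as a plugging of $G_i$ into the context $\mathcal{G}[\square]\circ(\vec{q})^\ddag$, pass to $\liftRT{R}$, and invoke the hypothesis. The only difference is that you explicitly discharge the decomposition of the final graphs via Prop.~\ref{prop:FormPreservation}, a step the paper leaves implicit; this is a harmless (and arguably welcome) addition.
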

\begin{proof}
	Assume $G_1 \mathrel{R} G_2$.
	By Lem.~\ref{lem:InitLifting}, for any graph context
	$\mathcal{G}[\square](1,n)$ which is itself a definitive graph of
	ground type and any vector $\vec{p} \in \F^n$, we have
	$\Init(\mathcal{G}[G_1] \circ (\vec{p})^\ddag)
	\lift{R} \Init(\mathcal{G}[G_1] \circ (\vec{p})^\ddag)$.
	Therefore by assumption, we have
	$\mathcal{G}[G_1] \equiv \mathcal{G}[G_2]$, and hence $G_1 \cong G_2$.
\end{proof}

The graph-contextual equivalence ensures safety of some forms of
garbage collection, as proved below.
\begin{proposition}[Garbage collection]
	Let $\succ_1$, $\succ_2$ and $\succ_3$ be binary relations on graphs,
	defined by:
	\begin{center}
		\includegraphics[]{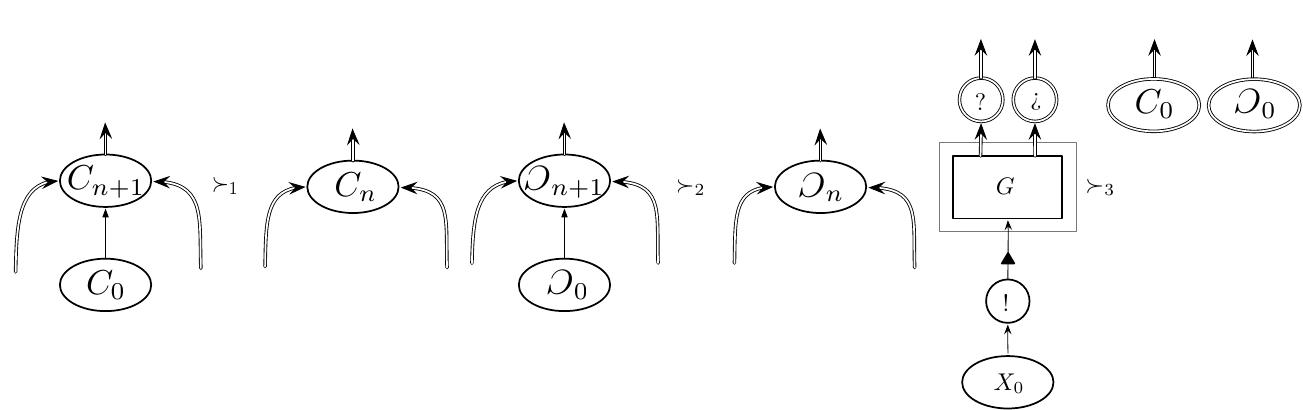}
	\end{center}
	where the $X_0$-node is either a $\lb{C}_0$-node or a
	$\lb{P}_0$-node.
	They altogher imply the graph-contextual equivalence, i.e.\
	$\mathit{(\succ_1 \cup \succ_2 \cup \succ_3)}
	\enspace\subseteq\enspace \mathit{\cong}$.
\end{proposition}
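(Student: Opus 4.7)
The plan is to invoke Proposition~\ref{prop:SufficientCondGraphCtxtEquiv} with $R := \mathord{\succ_1} \cup \mathord{\succ_2} \cup \mathord{\succ_3}$. Thus it suffices to show that whenever $\Init(H_1) \liftRT{R} \Init(H_2)$, there is an element $X$ of a computation stack such that $\Init(H_1) \to^* \Final(H'_1,X)$ for some $H'_1$ iff $\Init(H_2) \to^* \Final(H'_2,X)$ for some $H'_2$. I would establish this by exhibiting a simulation: a relation $\mathcal{S} \supseteq \liftRT{R}$ on graph states which is preserved by transitions in both directions and which identifies final states with matching computation-stack element.

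The first step is the key observation that each of the three garbage patterns $\succ_i$ sits strictly beneath a $\lb{C}_0$- or $\lb{P}_0$-node (the displayed $X_0$-node), i.e.\ beneath a "sink" that discards its input. Inspection of the pass transitions in Fig.~\ref{fig:PassTransitions} shows that the token never travels across a $\lb{C}_0$- or $\lb{P}_0$-node into the subgraph being collected; moreover, the applicability condition of the deep rules excludes $\lb{C}_0$- and $\lb{P}_0$-nodes themselves. Consequently, every pass transition and every local rewrite fires in a region disjoint from the collected subgraph and can be simulated step-for-step on the other side, preserving $\liftRT{R}$.

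The main obstacle is the interaction of $\liftRT{R}$ with the deep rewrites (Fig.~\ref{fig:RewriteTransitionsDeep}) and the $\oc$-box rewrites (Fig.~\ref{fig:RewriteTransitionsOpenBox}, Fig.~\ref{fig:RewriteTransitionsClosedBox}), because these can modify or duplicate graph regions that happen to contain, or be adjacent to, the garbage pattern. Concretely I would analyse three cases for each deep rewrite: (i) the redex lies entirely outside the collected subgraph, so it is simulated identically on both sides; (ii) the redex lies entirely inside the collected subgraph on the $H_1$-side (it cannot be triggered via box-reachability on the $H_2$-side because the garbage has been removed), in which case the rewrite changes only garbage and the resulting states remain in $\liftRT{R}$; (iii) the redex straddles the boundary, which can only happen when the garbage pattern shares a $\rot{\lb C}$-contraction or an $\oc$-box auxiliary door with the live graph --- here one verifies by case-analysis on the three shapes $\succ_1,\succ_2,\succ_3$ that the resulting garbage is again of one of the three forms (possibly after applying several instances of $\succ_i$), so that $\liftRT{R}$ is restored. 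Box-copying in Fig.~\ref{fig:RewriteTransitionsClosedBox} is handled by the reflexive-transitive closure in $\liftRT{R}$, which absorbs the duplicated garbage occurrences into finitely many applications of $R$.

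The hard part will be the straddling case and, in particular, verifying that after a deep contraction-copy or a box-absorption, the shape of the residual subgraph on the $H_1$-side is still related by $\liftRT{R}$ to the corresponding shape on the $H_2$-side; this requires tracking which auxiliary doors of surrounding $\oc$-boxes remain definitive and how name permutations $\pi_N$ act on the garbage. Once this simulation diagram is established for every transition rule, determinism of execution (Proposition~\ref{prop:Determinism}) and Proposition~\ref{prop:SufficientCondGraphCtxtEquiv} together yield $\mathord{\succ_1} \cup \mathord{\succ_2} \cup \mathord{\succ_3} \subseteq \mathord{\cong}$.
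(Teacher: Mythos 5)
Your plan matches the paper's own (sketched) proof: both reduce the claim to Proposition~\ref{prop:SufficientCondGraphCtxtEquiv} by showing that the lifting $\liftRT{\succ_1 \cup \succ_2 \cup \succ_3}$ is a bisimulation, use the reflexive--transitive closure to absorb duplication by box-copying, and observe that the three relations must be taken as a union because rewrites (decoupling turning $\rot{\lb{C}}$-nodes into $\lb{C}$-nodes, deep contraction producing $\oc$-boxes capped by $\lb{C}_0$) convert one garbage shape into another. The paper is merely more explicit about exactly which rewrite induces which dependency among $\succ_1$, $\succ_2$, $\succ_3$, but the argument is essentially the same.
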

\begin{proof}[Sketch of proof]
	The lifting $\liftRT{\succ_1 \cup \succ_2 \cup \succ_3}$ in fact
	gives a bisimulation.
	Taking the reflexive and transitive closure $\liftRT{}$ primarily
	deals with duplication.
	Taking union of three binary relations $\succ_1$, $\succ_2$ and
	$\succ_3$ is important, because each of them does not lift to a
	bisimulation on its own.
	The decoupling rule turns a $\rot{\lb{C}}$-node to a $\lb{C}$-node,
	which means $\succ_2$ depends on $\succ_1$.
	The deep contraction rule may generate a $\oc$-box whose
	principal door is connected to a $\lb{C}_0$-node, which means
	$\succ_1$ depends on $\succ_3$, and further on $\succ_2$ and
	$\succ_1$ itself, via transitivity.
\end{proof}
\noindent
\textit{Discussion.}
	Congruence of graph equivalence $\equiv$ is more subtle than
	one might expect, due to decoupling.
	Consider the abductive decoupling rule, applied to graphs $\mathcal{G}[G_1]$ and
	$\mathcal{G}[G_2]$ such that $G_1 \equiv G_2$.
	If  graphs $G_1$ and $G_2$ are in the redex of the decoupling rule,
	their output type $\rot{\oc}\F$ is changed to $\oc\F$, which means the
	definition of equivalence does not apply to the graphs subsequent decoupling.
	\begin{conjecture}[Congruence of graph equivalence]
		Graph equivalence $\equiv$ implies  graph-contextual
		equivalence $\cong$, in other words, it is a congruence.
		Formally, for any graphs $G_1$ and $G_2$,
		$G_1 \equiv G_2$ implies $G_1 \cong G_2$.
	\end{conjecture}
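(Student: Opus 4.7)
The plan is to establish the conjectured congruence by constructing a bisimulation-style argument that refines the lifting machinery already used for Prop.~5.2. Concretely, I would define a candidate relation $R$ on graphs that contains $\equiv$ and is closed under placement in graph contexts, then verify that $R$-related initial states give matching ground-type executions so that $R \subseteq \mathit{\cong}$ follows from Prop.~5.2 applied to $\liftRT{R}$. In the easy cases, when the token sits in the common context $\mathcal{G}$ outside the related subgraphs $G_1,G_2$, pass transitions and local rewrites on both sides are syntactically identical and preserve the lifted relation; similarly the copying rules for closed $\oc$-boxes duplicate $G_1,G_2$ simultaneously, which is precisely what the reflexive-transitive closure in $\liftRT{R}$ is designed to absorb.

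When the token enters a subgraph $G_i$ one cannot immediately invoke $G_1 \equiv G_2$, because $\equiv$ only compares ground-type closed composite graphs from root to final state, whereas here the token may enter and leave $G_i$ several times (after box-copying), and the computation and box stacks carry data that straddles the boundary. My approach would be to partition each execution into ``excursions'' inside $G_i$ and to argue that each excursion corresponds to an initial-to-final fragment of a self-contained ground-type execution on $G_i \circ (\vec{q})^\ddag$ for a $\vec{q}$ reconstructed from the boundary-crossing stack data. Proving this would require an auxiliary invariant tracking how the computation stack and box stack interact with $G_i$'s input/output interfaces, in the spirit of the form-preservation proposition.

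The critical obstacle, flagged in the authors' own discussion, is the deep decoupling rule. When a decoupling fires on an $\oc$-box containing $G_i$, \emph{every} internal provisional constant of $G_i$ is extracted and re-exposed with argument type $\oc\F$ instead of provisional type $\rot{\oc}\F$; the definition of $\equiv$ simply does not cover the resulting shape. To close this case I would attempt a structural lemma: if $G_1 \equiv G_2$ with $G_i(1,n)$ ground-type, then $G_1$ and $G_2$ contain the same number of internal provisional constants (so the two decouplings produce matching arities). The proposed tactic is to exhibit a distinguishing ground-type abductive context of the form $\letin{f\,@\,p}{[\cdot]}{p \bullet p}$ --- the pattern of Sec.~4.2.3, Fig.~\ref{fig:figex3} --- together with suitable input coercions, that reads off the number of decoupled parameters and so forces the counts to agree by contraposition on $\equiv$. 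Granted this lemma, $R$ is then enriched to include each pair of decoupled residues, related because they differ only at provisional positions already shown observationally indistinguishable.

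The principal weakness of this plan is precisely the structural lemma above: the bureaucracy of contraction, box-copying, and deep contraction/projection makes it delicate to construct a uniformly distinguishing abductive context without additional hypotheses on how provisional constants of $G_i$ connect to its outputs. If the lemma fails in full generality, the conjecture should be refined in one of two directions: either strengthen $\equiv$ to observe the number of provisional constants as part of equivalence (yielding a coarser but provable congruence), or restrict attention to the class of graphs arising from the inductive translation of well-typed terms, for which the provisional-constant count is syntactically determined and the bisimulation argument should then go through essentially as sketched above.
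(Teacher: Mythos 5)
First, be aware that the paper does not prove this statement: it is stated explicitly as a \emph{conjecture} and left open, and the authors' accompanying discussion pinpoints exactly the obstacle you flag in your third paragraph --- once a deep decoupling fires on a box containing the related subgraphs, their provisional outputs are retyped from $\rot{\oc}\F$ to $\oc\F$ and re-wired through a projection node, a $\lambda$-node and $\oc$-boxes computing dot products with basis vectors, so the residues are no longer definitive graphs of the shape to which the definition of $\equiv$ applies, and the equivalence hypothesis cannot be re-invoked on them. Your diagnosis of where the difficulty lies therefore agrees with the authors', and your own admission that the plan is incomplete is accurate.

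That said, the structural lemma on which your plan hinges is aimed at the wrong target. Graph equivalence $\equiv$ is only defined between definitive graphs $G_1(1,n)$ and $G_2(1,n)$ with the \emph{same} interface, and in a definitive graph the provisional constants are exactly the $n$ output links of type $\rot{\oc}\F$; so the two graphs already expose the same number of provisional constants by typing, the two decouplings produce matching arities for free, and no distinguishing context of the form $\letin{f\,@\,p}{[\cdot]}{p\bullet p}$ is needed to force agreement. What the counting lemma does not supply is the step you actually need: a behavioural relation between the two post-decoupling residues that is preserved by all subsequent transitions, including further box-copying and further decouplings. The definition of $\equiv$ only speaks of top-level executions $\Init(G_i\circ(\vec q)^\ddag)\to^*\Final(-,X)$ in which the parameters are literal $\rot{\oc}$-nodes at the root; after decoupling the same positions are fed coordinates of a runtime vector argument through intervening $\oc$-boxes, deep inside a context the token may enter and leave repeatedly, and the quantification over $\vec q$ in $\equiv$ gives you no handle on that configuration. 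Closing this case is precisely the open problem, and enriching $R$ with ``pairs of decoupled residues'' presupposes rather than establishes that those residues are related by something transition-closed. Your two fallback options (strengthening $\equiv$, or restricting to translations of well-typed terms) are sensible research directions, but each proves a statement different from the conjecture as posed.
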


\subsection{Program equivalence}
The usual way of equating programs if they produce the same value is not applicable in contexts with decoupling since it can observe differences between, for example $\{1\}+2$, $1+2$ or $1+\{2\}$. 
However, the notion of graph equivalence introduced above is generally appropriate to also define program equivalence. 

Programs, as usual, are closed ground-type terms, and we note that parameters of programs can be permuted once the graphs have been computed. This is not a semantic rule, but only a top-level transformation:
\begin{center}
	\includegraphics[]{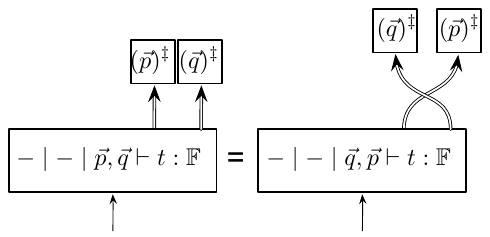}
\end{center}
\begin{definition}[Program equivalence]
	Two programs $(- \mid - \mid \vec{p_i} \vdash t_i : \F)$ are said to be equivalent, written as
	$
	(- \mid - \mid \vec{p_0} \vdash t_0 : \F)
 \dot\approx (- \mid - \mid \vec{p_1} \vdash t_1 : \F)
 $,
 iff there exists a vector $\vec p$ and a permutation $\sigma_i$
 such that
 $\sigma_0\cdot \vec p_0=\sigma_1\cdot \vec p_1=\vec p$, and if
 $(- \mid - \mid \vec{p_i} \vdash t_i)^\ddagger
 = H_i\circ (\vec p)^\ddagger$,
 then $H_0\equiv H_1$.
\end{definition}
This definition can be lifted to open terms in the usual way.
\begin{definition}[Term equivalence]\label{def:equiv}
 $(A \mid \Gamma \mid \vec{q_0} \vdash t : T')
 \approx (A \mid \Gamma \mid \vec{q_1} \vdash u : T')$ iff,
 for any context
 $- \mid - \mid \vec{p},\vec{r} \vdash C\emptyCtxt^{T'} : T$, we have that
 $(- \mid - \mid \vec{p},\vec{q},\vec{r} \vdash \plug{C}{t_0} : T)
 \dot\approx (- \mid - \mid \vec{p},\vec{q},\vec{r}
 \vdash \plug{C}{t_1} : T).$
\end{definition}

\section{Related and further work}

\subsection{Machine learning}

Our belief that there is a significant role for transparent parameterisation of programs in some areas of programming, in particular machine learning, is inspired primarily by \textsc{TensorFlow}~\cite{abadi2016tensorflow}, which already exhibits some of the programming structures we propose. It has support for explicit \textit{learning modes} for its models and it introduces a notion of \textit{variable} which corresponds to our \textit{provisional constants}, so that in learning mode  variables are \textit{implicitly} tuned via training. However, \textsc{TensorFlow} is not a stand-alone programming language but a shallow embedding of a DSL into \textsc{Python} (noting that other language bindings also exist). Our initial aim was to consider it as a proper (functional) programming language by developing a simple and uniform syntax. 

However, what we propose has evolved beyond \textsc{TensorFlow} in several ways. The most significant difference is that \textsc{TensorFlow} only supports gradient descent tuning, by providing built-in support for computing gradients of models via automatic differentiation~\cite{rall1981automatic}. From the point of view of efficiency this is ideal, but it has several drawbacks. It prevents full integration with a normal programming language because computation in the model must be restricted to operations representing derivable functions. We take a black-box approach to models which is less efficient, but fully compositional. It allows for any numerical algorithm to be used for tuning, not just gradient descent, but also simulated annealing or combinatorial optimisations. We also support meta-learning in a way that \textsc{TensorFlow} cannot. 

Semantically, the idea of building a \textit{computational graph} is also present in \textsc{TensorFlow}. The key difference is that our computational graph, implicit in the GoI-style semantics, evolve during computation. This is again potentially less efficient, although efficient compilation of dynamic GoI is an area of ongoing research. However, the efficiency of the functional infrastructure for abduction is dominated by the vector operations, which may involve very large amounts of data. The rules, expressed in the unfoldings in  Fig.~\ref{fig:RewriteTransitionsComput}, can be ``factored out'' of the abductive calculus to a secondary, special-purpose, and  efficient device dedicated to vector operations, in the same way as \textsc{TensorFlow} constructs the model in \textsc{Python} but the heavy-duty computations can be farmed out to GPU back-ends. 

Currently most programming for machine learning is done either in general-purpose languages or in DSLs such as \textsc{Matlab} or R. There is a growing body of work dedicated to programming languages, toolkits and libraries for machine learning. Most of them are not directly relevant to our work as they focus mainly on computational efficiency especially via parallelism. This is an extremely important practical concern, and our vector computation rules can be easily parallelised in practice by using different unfoldings than the sequential ones we use, noting that efficient parallel computation in the GoI semantics requires a more complex, multi-token machine~\cite{dal2014geometry}.

\subsection{Abduction}

Despite being one of the three pillars of inferential reasoning, abduction has been far less influential than deduction and induction as a source of methodological inspiration for programming languages. Abduction has been used as a source of inspiration in logical programming~\cite{kakas1998role} and in verification~\cite{DBLP:journals/jacm/CalcagnoDOY11}. Somewhat related to verification is an interesting perspective that abduction can shed on type inference~\cite{sulzmann2008type}. However, the concept of abduction as a manifestation of the principle of ``discovery of best explanations'' is a powerful one, and our use of ``runtime abduction'' is only a first step towards developing and controlling more expressive or more efficient concepts of programming abduction. Our core calculus is meant to open a new perspective more than providing a definitive solution.

Historically, the relation between abduction and Bayesian inference has been a subject of much discussion among philosophers and logicians in the theory of confirmation. Bayesian inference is established as the dominant methodology, but recently authors have argued that there is a false dichotomy between the two~\cite[Ch.~7]{lipton2003inference} and that the explanatory power of abduction can complement the quantitative Bayesian analysis. Philosophical considerations aside, our hope is that in programming languages abductive decoupling and probabilistic programming for parameter tuning can be combined. Indeed, the theory of probabilistic programming is by now a highly developed reseach area~\cite{carpenter2016stan,gordon2014probabilistic,vajda2014probabilistic} and no striking incompatibilities exist between such languages and abductive decoupling.  

\subsection{Geometry of Interaction}

For the authors, the GoI style was instrumental in making a very complex operational semantics tractable (and implementable). We think that the GoI style semantics can be illuminating for other programming paradigms in which data-flow models are constructed and manipulated, such as self-adjusting computation~\cite{acar2009self} or functional reactive programming~\cite{wan2000functional}. This is part of a larger, on-going, programme of research. Implementation of programming languages from GoI-style semantics is a highly relevant area of research~\cite{Mackie95,FernandezM02} as are parallel GoI machines. It remains to be seen whether such implementation techniques are efficient enough to support an, otherwise highly desirable, semantics-directed compilation or whether completely different approaches are required, such as leveraging more powerful features that imperative programming languages offer. The \textsc{Incremental} library for OCaml\footnote{\url{https://github.com/janestreet/incremental}} is an example of the latter approach.  

But even as a specification formalism only, the GoI style seems to be both expressive and tractable for highly complex semantics. Even though we have introduced a notion of term equivalence in Def.~\ref{def:equiv} it seems more promising  to use equivalence of graphs directly and define program optimisation strategies directly on the graphs, in the style of~\cite{GonthierAL92}. This also remains a subject of further work. 


\bibliographystyle{abbrv}

\bibliography{ref}

\newpage
\appendix

\section{Determinism}
\label{app:Determinism}

The only sources of non-determinism are the choice of fresh names in
replicating a $\oc$-box and the choice of $\wn$-rewrite transitions
(Fig.~\ref{fig:RewriteTransitionsDeep} and
Fig.~\ref{fig:RewriteTransitionsOpenBox}).
Introduction of fresh names has no impact on execution, as we can
prove ``alpha-equivalence'' of graph states.
\begin{proposition}[''alpha-equivalence'' of graph states]
 \label{prop:AlphaEquivBisim}
 The binary relation $\sim_\alpha$ of two graph states, defined by
 $((G,\ell),\delta) \sim_\alpha ((\pi \cdot G,\ell), \delta)$ for any
 name permutation $\pi$, is an equivalence relation and a
 bisimulation.
\end{proposition}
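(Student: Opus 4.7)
The plan is to establish the two parts of the statement separately, with the equivalence-relation part being routine and the bisimulation part being the substantive one.

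\textbf{Equivalence relation.} This follows from the group structure of name permutations on $\A$. Reflexivity uses the identity permutation; symmetry uses $\pi^{-1}$ (noting $\pi^{-1} \cdot (\pi \cdot G) = G$); transitivity uses composition $\pi_2 \circ \pi_1$. Once one has verified that the action of permutations on graphs is indeed a group action (a prerequisite for the definition to make sense), the three properties drop out immediately; I would dispatch this in a couple of lines.

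\textbf{Bisimulation.} I would proceed by case analysis on the transitions, organised by the rewriting flag: pass transitions (Fig.~\ref{fig:PassTransitions}), computational rewrites (Fig.~\ref{fig:RewriteTransitionsComput}), deep rewrites (Figs.~\ref{fig:RewriteTransitionsDeepGeneral}--\ref{fig:RewriteTransitionsDeep}), box-closing rewrites (Fig.~\ref{fig:RewriteTransitionsOpenBox}), and box-copying rewrites (Fig.~\ref{fig:RewriteTransitionsClosedBox}). The guiding idea is equivariance: the applicability of each transition depends only on graph topology, node labels (including any name indices up to their identity or inequality), the token's position, and the token's data, all of which commute with the action of any $\pi$. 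So whenever $((G,\ell),\delta) \to ((G',\ell'),\delta')$ applies, the corresponding transition $((\pi \cdot G,\ell),\delta) \to ((\pi \cdot G',\ell'),\delta')$ also applies, and the two results are related by $\sim_\alpha$ via the same $\pi$. For most cases this is essentially mechanical, since the transitions do not inspect the particular identity of names.

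\textbf{Main obstacle.} The genuinely subtle cases are those that introduce fresh names: the deep projection and deep contraction rewrites of Fig.~\ref{fig:RewriteTransitionsDeep}, which apply the name-refreshing action $\pi_N$, and to a lesser extent the box-copying rewrites of Fig.~\ref{fig:RewriteTransitionsClosedBox} when they replicate sub-graphs containing names. When fired from $((G,\ell),\delta)$ such a rewrite picks a set of globally fresh names $N_1$, while when fired from $((\pi \cdot G,\ell),\delta)$ it may pick a different set $N_2$, so the resulting graph is not literally the $\pi$-image of the original result. The fix is to enlarge $\pi$ to a permutation $\pi'$ that agrees with $\pi$ on every name occurring in the ambient graph and additionally maps $N_1$ bijectively to $N_2$. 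Because fresh names are by construction disjoint from the names already in play on both sides, this extension is well-defined, and $\pi'$ witnesses $\sim_\alpha$ between the post-rewrite states.

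A final sanity check I would perform is that the composite-graph decomposition $H \circ (\vec{p})^\ddag$, together with the ordering of the $\rot\oc$-nodes according to $\vec{p}$, is preserved under $\pi'$: since permutations act only on names in $\A$ and not on the field elements $\vec{p}$, this is immediate. With this in hand the bisimulation squares close for every transition, completing the argument.
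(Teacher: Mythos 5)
Your proposal is correct and follows essentially the same route as the paper, whose own proof is a two-sentence sketch observing that names are irrelevant to all transitions except the $\oc$-box-replicating rewrites of Fig.~\ref{fig:RewriteTransitionsDeep} and Fig.~\ref{fig:RewriteTransitionsClosedBox}, which are the only ones involving name permutation. Your elaboration of the fresh-name case (extending $\pi$ to a permutation carrying one choice of fresh names onto the other) and of the equivalence-relation part via the group structure of permutations correctly fills in the details the paper leaves implicit.
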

\begin{proof}
 Only rewrite transitions that replicate a $\oc$-box (in
 Fig.~\ref{fig:RewriteTransitionsDeep} and
 Fig.~\ref{fig:RewriteTransitionsClosedBox}) involve name permutation.
 Names are irrelevant in all the other transitions.
\end{proof}

We identify graph states modulo name permutation, namely the binary
relation $\sim_\alpha$ in the above proposition.
Now non-determinism boils down to the choice of $\wn$-rewrites, which
however does not yield non-deterministic overall executions.
\begin{proposition}[Determinism]
 \label{prop:TransitionsDeterminism}
 If there exists a sequence
 $((G,\ell),\delta) \to^* ((G',\ell'),(d',\square,S',B'))$, any
 sequence of transitions from the state $((G,\ell),\delta)$ reaches
 the state $((G',\ell'),(d',\square,S',B'))$, up to name permutation.
\end{proposition}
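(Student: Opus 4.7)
The plan is to isolate the sources of non-determinism and show that the only genuine one — the choice among simultaneously applicable $\wn$-rewrites — is confluent, hence all paths to a $\square$-flagged state coincide modulo name permutation. The statement says that reaching a $\square$-flagged state at all guarantees that every maximal reduction sequence from $((G,\ell),\delta)$ reaches the same state (up to $\sim_\alpha$), so it suffices to study the interval between the starting state and the next $\square$-state.

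First I would enumerate non-determinism. Pass transitions are determined by the current node at $\ell$ and the top of the stacks, so they are deterministic. The computational rewrites of Fig.~\ref{fig:RewriteTransitionsComput} and the $\oc$-copying rewrites of Fig.~\ref{fig:RewriteTransitionsClosedBox} are triggered by local information carried in the token (rewriting flag plus box-stack top $\ell$), so once the state is fixed at most one applies. The only transitions with multiple simultaneously applicable instances are the $\wn$-rewrites: the deep decoupling, projection, and contraction rules of Fig.~\ref{fig:RewriteTransitionsDeep}, and the box-absorption rule of Fig.~\ref{fig:RewriteTransitionsOpenBox}. Fresh-name choices inside copying rewrites are already identified by Prop.~\ref{prop:AlphaEquivBisim}.

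Next I would prove the key commutation lemma: if $s \to s_1$ and $s \to s_2$ via two distinct $\wn$-rewrites, then there exists $s'$ with $s_1 \to s'$ and $s_2 \to s'$ (up to $\sim_\alpha$). This holds because each $\wn$-redex is pinned to a specific node: a deep redex is identified by its unique $\lb{A}$, $\lb{P}_{>0}$, or $\lb{C}_{>0}$-node box-reachable from a definitive auxiliary door of the current $\oc$-box, and a box-absorption redex is identified by the absorbed sub-box. The rewrite rules modify only their own redex and the shared $\oc$-box interface in ways that commute, so distinct redexes remain disjoint and applicable after firing either one, with the same resulting graph. The token data changes only in the direction/flag/box-stack entries pertinent to the redex, and these edits commute as well because each $\wn$-rewrite leaves the token flag equal to $\wn$ unless it is the final box-absorption (the one that flips $\wn$ to $\oc$), and this last step commutes with the deep rewrites because those leave the set of auxiliary doors unchanged.

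Then I would establish termination of the $\wn$-phase. A suitable measure is the pair $(d, a)$ ordered lexicographically, where $d$ is the number of definitive auxiliary doors of the current $\oc$-box and $a$ is the number of $\lb{A}$, $\lb{P}_{>0}$, $\lb{C}_{>0}$-nodes at the current box level that are box-reachable from those doors. Box-absorption strictly decreases $d$, while deep rewrites strictly decrease $a$ without increasing $d$ — crucially, any copies of sub-graphs produced by deep projection or contraction live inside freshly-formed $\oc$-boxes and thus sit at a strictly deeper level, so they do not contribute to the current $a$. Once this measure reaches zero the final box-absorption fires, flipping the flag to $\oc$ and ending the $\wn$-phase.

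Strong commutation plus termination give a unique $\wn$-normal form (by a standard diamond-plus-SN argument, or equivalently Newman's lemma). The subsequent $\oc$-phase (the copying rewrites of Fig.~\ref{fig:RewriteTransitionsClosedBox}) is driven by the rewriting flag and the box-stack top and is deterministic up to the name permutations absorbed by $\sim_\alpha$. Iterating this analysis over each pass/rewrite cycle until the flag returns to $\square$ yields uniqueness of $((G',\ell'),(d',\square,S',B'))$ up to name permutation. The main obstacle I expect is getting the termination measure right: showing rigorously that the duplications inside deep projection and contraction only generate redexes at strictly lower levels of the $\oc$-box hierarchy, so they do not spoil the strict decrease of $a$ at the current level.
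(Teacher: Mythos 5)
Your proposal is correct and rests on exactly the same key observation as the paper's proof: the applicability conditions pin each $\wn$-redex to a distinct node, so simultaneously enabled $\wn$-rewrites have disjoint redexes and satisfy the one-step diamond property, while fresh-name choices are absorbed by the $\sim_\alpha$ bisimulation of Prop.~\ref{prop:AlphaEquivBisim}. The only divergence is your termination measure for the $\wn$-phase, which is superfluous here: the proposition already assumes one sequence reaching a $\square$-flagged state, and the strict diamond property alone then forces every maximal sequence from $((G,\ell),\delta)$ to reach that same state, so the paper omits any such measure.
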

\begin{proof}
 The applicability condition of $\wn$-rewrite rules ensures that
 possible $\wn$-rewrites at a state do not share any redexes.
 Therefore $\wn$-rewrites are confluent, satisfying the so-called
 diamond property:
 if two different $\wn$-rewrites
 $((G,\ell),\delta) \to ((G_1,\ell_1),\delta_1)$ and
 $((G,\ell),\delta) \to ((G_2,\ell_2),\delta_2)$ and
 are possible from a single state, both of the data $\delta_1$ and
 $\delta_2$ has rewriting flag $\wn$, and there exists a state
 $((G',\ell'),\delta')$ such that
 $((G_1,\ell_1),\delta_1) \to ((G',\ell'),\delta')$ and
 $((G_2,\ell_2),\delta_2) \to ((G',\ell'),\delta')$.
\end{proof}
\begin{corollary}[Prop.~\ref{prop:Determinism}]
 \label{prop:ExecDeterminism}
 For any initial state $\Init(G)$, the final state $\Final(G,X)$ such
 that $\Init(G) \to^* \Final(G,X)$ is unique up to name permutation,
 if it exists.
\end{corollary}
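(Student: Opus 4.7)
The plan is to decompose the non-determinism in the transition system into two independent sources and dispose of each separately, then combine them to conclude uniqueness of the final state up to name permutation.

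First, I would classify the sources of non-determinism. Pass transitions are fully deterministic: given a position, a direction, and a computation stack top, the outgoing configuration is uniquely specified by the node label. Among rewrite transitions, the $\lambda$-, $\$^0$-, $\$^1(n)$-, and $\oc$-rewrites (the ones triggered by a non-$\wn$ flag) are similarly forced by the token data. This leaves exactly two residual sources of branching: (i) the fresh names introduced by the deep rewrites and by the copy rewrites for closed $\oc$-boxes, and (ii) the choice of which $\wn$-rewrite to apply when several deep redexes are simultaneously box-reachable from the same $\oc$-box, or when an absorption rewrite from Fig.~\ref{fig:RewriteTransitionsOpenBox} is possible.

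To eliminate source (i), I would prove that the relation $\sim_\alpha$ given by $((G,\ell),\delta)\sim_\alpha((\pi\cdot G,\ell),\delta)$ for an arbitrary name permutation $\pi$ is both an equivalence relation and a bisimulation. Each transition either ignores names entirely or introduces fresh names uniformly across the whole state, so a step from $(G,\ell)$ can be mirrored from $(\pi\cdot G,\ell)$ by choosing the new fresh names coherently. This lets us identify graph states modulo $\sim_\alpha$, reducing the problem to showing determinism of the quotient system and justifying the ``up to name permutation'' clause in the statement.

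To eliminate source (ii), I would establish a diamond property for $\wn$-rewrites. The key observation is that the applicability condition for deep rules, together with the fact that all $\wn$-redexes fire at nodes box-reachable from the definitive auxiliary doors of a fixed $\oc$-box, forces any two simultaneously applicable $\wn$-redexes to be spatially disjoint in the sense that neither rewrite modifies the other's redex site and neither changes the rewriting flag away from $\wn$. Thus if $((G,\ell),\delta)\to((G_1,\ell_1),\delta_1)$ and $((G,\ell),\delta)\to((G_2,\ell_2),\delta_2)$ are distinct $\wn$-steps, both post-states still have flag $\wn$, and there is a common successor $((G',\ell'),\delta')$ reached from each in one further $\wn$-step. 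By the standard Newman/diamond argument, this local confluence of $\wn$-steps, together with the absence of infinite $\wn$-chains (a $\wn$-flag is always eventually consumed by an $\oc$-rewrite that returns the state to pass mode), forces the sequence of $\wn$-rewrites between two consecutive pass phases to be confluent. Stitching this together with determinism of all other transitions yields the corollary: any two executions $\Init(G)\to^*\Final(G,X)$ and $\Init(G)\to^*\Final(G',X')$ coincide up to $\sim_\alpha$, giving $X=X'$ and $G=G'$ modulo name permutation.

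The main obstacle I expect is verifying disjointness of competing $\wn$-redexes. Because the deep decoupling, projection, and contraction rules (Fig.~\ref{fig:RewriteTransitionsDeep}) all modify material inside an $\oc$-box that has been reached from its principal door, and because the absorption rewrite (Fig.~\ref{fig:RewriteTransitionsOpenBox}) acts on the auxiliary doors of the \emph{same} $\oc$-box, one must argue carefully that no rewrite rewires or deletes any node belonging to another pending redex, nor creates a situation where the flag unexpectedly flips. This amounts to a case split over the pairs of possible $\wn$-rewrites, using the ``same level'' and box-reachability requirements from Def.~\ref{def:BoxReachability} to ensure spatial separation, after which the rest of the proof is a straightforward structural induction on transition sequences.
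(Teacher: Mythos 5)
Your proposal matches the paper's own argument: the paper likewise isolates the two sources of non-determinism (fresh names and the choice of $\wn$-rewrite), handles the first by showing that $\sim_\alpha$ is a bisimulation (Prop.~\ref{prop:AlphaEquivBisim}), and handles the second by observing that simultaneously applicable $\wn$-rewrites share no redexes and hence satisfy the diamond property (Prop.~\ref{prop:TransitionsDeterminism}). The only cosmetic difference is that you invoke a Newman-style argument with termination of $\wn$-chains where the paper uses the one-step diamond property directly; the substance is the same.
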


\section{Validity}
\label{app:Validity}

This section investigates a property of graph states, \emph{validity},
which plays a key role in disproving any failure of transitions.
It is based on three criteria on graphs.

In the lambda-calculus one often assumes that bound variables in a
term are distinct, using the alpha-equivalence, so that beta-reduction
does not cause unintended variable capturing.
We start with an analogous criterion on names.
\begin{definition}[Bound/free names]
 \label{def:BoundFreeNames}
 A name $a \in \A$ in a graph is said to be:
 \begin{enumerate}
  \item \emph{bound} by an $\lb{A}$-node, if the $\lb{A}$-node has
	input types $V_a \to T)$ and $\oc V_a$, for some type $T$.
  \item \emph{free}, if a $\val{\vec{p}}$-node has input type $V_a$ or
	a $\lb{P}_n$-node has output type $V_a$.
 \end{enumerate}
\end{definition}
\begin{definition}[Bound-name criterion]
 \label{def:BoundNameCriterion}
 A graph $G$ meets the \emph{bound-name criterion} if any bound name
 $a \in \A$ in the graph $G$ satisfies the following.
 \begin{description}
  \item[Uniqueness.]
	     The name $a$ is not free, and is bound by exactly one
	     $\lb{A}$-node.
  \item[Scope.]
	     Bound names do not appear in types of input links of the
	     graph $G$.
	     Moreover, if the $\lb{A}$-node that binds the name $a$ is
	     in a $\oc$-box, the name $a$ appears only strictly inside
	     the $\oc$-box (i.e.\ in the $\oc$-box, but not on its
	     interfaces).
 \end{description}
\end{definition}

The name permutation action accompanying rewrite transitions
(Fig.~\ref{fig:RewriteTransitionsDeep} and
Fig.~\ref{fig:RewriteTransitionsClosedBox}) is an explicit way to
preserve the above requirement in transitions.
\begin{proposition}[Preservation of bound-name criterion]
 \label{prop:PreserveBoundNameCriterion}
 In any transition, if an old state meets the bound-name criterion, so
 does a new state.
\end{proposition}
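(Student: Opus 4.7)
The plan is to proceed by case analysis on the single transition taken from the old state, using the fact that the criterion is a purely local-plus-scope property of A-nodes, value-nodes, and $\lb{P}_n$-nodes. Pass transitions (Fig.~\ref{fig:PassTransitions}) leave the graph entirely unchanged, so they trivially preserve both the uniqueness and scope clauses; this disposes of the entire class of rule with one line. For the rewrite transitions it is convenient to group the rules by how they affect names: (i) rules that touch neither A-nodes nor value-nodes of vector type, (ii) rules that introduce new occurrences of a name already present, (iii) rules that duplicate a subgraph, and (iv) the decoupling rule, which removes an A-node.

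The ``structural'' rules of group (i) — beta-reduction and constant-folding in Fig.~\ref{fig:RewriteTransitionsComput}, the box-absorption rules of Fig.~\ref{fig:RewriteTransitionsOpenBox}, and the trivial contraction/contraction-merge rules of Fig.~\ref{fig:RewriteTransitionsClosedBox} — only rewire existing edges or delete exponential bookkeeping; no A-node is created or destroyed and the enclosing box of every surviving A-node is unchanged, so both clauses are inherited verbatim. For group (ii), the iterated vector unfolding inserts basis nodes $\underline{\vec e_i} : V_a$ at the same box-level as the iterated operation it replaces; since the old operation was already a witness to occurrences of $V_a$ at that position, the new occurrences inherit the same scope, and no new bound name is introduced, so uniqueness and scope are preserved.

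Group (iii) is where the name permutation $\pi_N$ does the work. For deep projection, deep contraction, and $\oc$-box copying, the plan is to verify the invariant that $\pi_N$ replaces every atom not in $N$ by a globally fresh atom, where $N$ collects exactly the names that appear in the types of the preserved interface. Then every bound name $b$ in a copied subgraph is either renamed to a globally fresh $b'$ — in which case the relabelled A-node is uniquely its binder and all of its occurrences remain inside the same box-structure that enclosed the original, preserving both clauses — or $b \in N$, in which case the old scope clause already forced the binder of $b$ to sit outside the copied region, so duplication cannot produce a second binder of $b$. Splitting the bound names of the new graph along these two cases gives both uniqueness and scope.

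The main obstacle, which I would treat last and most carefully, is the deep decoupling rule of Fig.~\ref{fig:RewriteTransitionsDeep}. Here the A-node that binds a name $a$ is eliminated while a value-node $\underline{\vec p}$ and a $\lb{P}_n$-node of output type $V_a$ are introduced, so $a$ flips from ``bound'' to ``free''; this is vacuously compatible with the criterion, which imposes no obligation on free names. The actual work is to check that every \emph{other} bound name of the old graph still satisfies both clauses. The rewrite affects only the A-node's local neighbourhood and the attached subgraph $(\vec p)^\ddag$; by the uniqueness clause of the old state, none of these contain an A-node binding a name other than $a$, and the $\rot{\lb C}\!\to\!\lb C$ conversion and the reconnection to the new $\lambda$- and $\rot{\lb D}$-nodes do not move any surviving A-node relative to its enclosing $\oc$-boxes. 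The delicate sub-case to check is that the newly created $\lambda$-abstraction does not form a new $\oc$-box that would push some surviving A-node one level deeper (and thereby invalidate its scope condition with respect to the outside); verifying, from the rule's shape, that this is not the case completes the argument. Once all four groups are handled, the proposition follows by a one-step inspection.
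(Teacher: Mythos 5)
Your proof follows essentially the same route as the paper's: pass transitions and purely structural rewrites change the status of no name; duplication (deep projection, deep contraction, closed-box copying) is handled exactly as in the paper, by combining the scope clause (every occurrence of a name bound inside the copied box is copied along with its binder, while a name appearing on the preserved interface must be bound outside the copied region) with the freshness guarantee of the permutation $\pi_N$; and decoupling is handled by observing that the eliminated binder's name flips from bound to free, on which the criterion imposes nothing. Your explicit treatment of the iterated-operation unfolding is in fact slightly more careful than the paper, which folds it into ``any other transitions do not change the status of names.''

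The one step I would correct is your justification in the decoupling case: the uniqueness clause does \emph{not} imply that the redex contains no $\lb{A}$-node binding a name other than $a$. Uniqueness only says each bound name has exactly one binder; the redex of the deep decoupling rule is the decoupled $\oc$-box together with its doors and the attached $\lb{A}$-node, and that box may perfectly well contain further $\lb{A}$-nodes binding other names --- the paper's ``meta-learning'' example, where one decoupling is nested inside another, produces exactly this configuration. The conclusion you want still holds, but for a different reason: the decoupling rule leaves the interior of the box, and hence every surviving $\lb{A}$-node together with its $\oc$-box nesting, untouched; it only rewires doors, converts $\rot{\lb{C}}$-nodes to $\lb{C}$-nodes on the provisional outputs, and attaches the new $\lambda$-, $\lb{P}_n$- and $\rot{\lb{D}}$-nodes, none of which creates a new $\oc$-box (your final worry about the $\lambda$-node is indeed vacuous, since $\lambda$-nodes do not delimit boxes). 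With that substitution your argument closes and coincides with the paper's proof.
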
 
\begin{proof}
 In a $\wn$-rewrite transition that eliminates a $\lb{A}$-node, the
 name $a \in \A$ bound by the $\lb{A}$-node turns free.
 As the name $a$ is not bound by any other $\lb{A}$-nodes, it does not
 stay bound after the transition.
 The transition does not change the status of any other names, and
 therefore preserves the uniqueness and scope of bound variables.

 Duplication of a $\oc$-box, in a rewrite transition involving a
 $\lb{C}_n$-node or a $\lb{P}_n$-node applies name permutation.
 The scope of bound names is preserved by the transition, because
 if an $\lb{A}$-node is duplicated, all links in which the name bound
 by the $\lb{A}$-node appears are duplicated together.
 The scope also ensures that, if an $\lb{A}$-node is copied,
 the name permutation makes each copy of the node bind distinct names.
 Therefore the uniqueness of bound names is not broken by the
 transition.

 Any other transitions do not change the status of names.
\end{proof}

The second criterion is on free names, which ensures each free name
indicates a unique vector space $\F^n$.
\begin{definition}[Free-name criterion]
 \label{def:FreeNameCriterion}
 A graph $G$ meets the \emph{free-name criterion} if it comes with a
 ``validation'' map $v \colon \mathrm{FR}_G \to \N$, from
 the set $\mathrm{FR}_G$ of free names in the graph $G$ to the set
 $\N$ of natural numbers, that satisfies the following.
 \begin{itemize}
  \item If a $\val{\vec{p}}$-node has input type $V_a$, the
	vector $\vec{p}$ has the size $v(a)$,
	i.e.\ $\vec{p} \in \F^{v(a)}$
  \item If a $\lb{P}_n$-node has output type $\oc V_a$, it has $v(a)$
	input links, i.e.\ $n = v(a)$.
 \end{itemize}
\end{definition}
The validation map is unique by definition.
We refer to the combination of the bound-name criterion and the
free-name criterion as ``name criteria.''
\begin{proposition}[Preservation of name criteria]
 \label{prop:PreserveNameCriteria}
 In any transition, if an old state meets both the bound-name
 criterion and the free-name criterion, so does a new state.
\end{proposition}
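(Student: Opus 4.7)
The plan is to proceed by case analysis on the transitions, leveraging Prop.~\ref{prop:PreserveBoundNameCriterion} for the bound-name part and constructing, in each case, a new validation map $v'$ from the old one~$v$. Throughout I write $\mathrm{FR}_G$ for the set of free names of the graph $G$, and I verify, for every node that becomes a $\val{\vec{p}}$-node or a $\lb{P}_n$-node after the transition, that the declared dimension agrees with $v'$ of the associated name.

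I handle the easy cases first. Pass transitions do not alter the graph, so $v' = v$. The non-iterated computational rewrites in Fig.~\ref{fig:RewriteTransitionsComput} neither create nor destroy $\val{\vec{p}}$- or $\lb{P}_n$-nodes, so again $v'=v$. The iterated-operation unfoldings introduce the standard basis $\val{\vec{e_0}},\ldots,\val{\vec{e_{n-1}}}$ of some vector space $V_a$; here the side condition that $n$ equals $v(a)$ is built into the rule, so the validation map is unaffected. The box-absorption and box-copying rewrites either leave $v$ unchanged or, when a copy is made with refreshed names, extend $v$ to every fresh name $a'$ replacing an old name $a$ by $v'(a') := v(a)$; each newly introduced $\val{\vec{p}}$- or $\lb{P}_n$-node is a copy of an old one with the same dimension, so validation follows.

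The critical case is the decoupling rule. Here a name $a$ that was previously bound by an $\lb{A}$-node becomes free: the rewrite simultaneously introduces a new $\val{\vec{p}}$-node of input type $V_a$ (carrying the decoupled provisional constants) and a new $\lb{P}_n$-node of output type $\oc V_a$ whose input links are in bijection with the provisional constants. By construction both the length of $\vec{p}$ and the number $n$ equal the number of $\rot{\oc}$-nodes in the decoupled component $(\vec{q})^\ddag$. Setting $v'(a) := n$ and $v'=v$ elsewhere therefore validates both new nodes. The deep projection and deep contraction rules of Fig.~\ref{fig:RewriteTransitionsDeep} duplicate a sub-graph with a name permutation; extending $v$ along the permutation gives a well-defined $v'$, and the duplicated nodes validate immediately because the permutation preserves dimensions.

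The main obstacle is ensuring that the decoupling case is well-posed, i.e.\ that at the moment of decoupling the name $a$ is genuinely fresh as a free name, so that the definition $v'(a):=n$ does not conflict with any pre-existing value. This follows from the bound-name criterion applied to the state \emph{before} the rewrite: its uniqueness clause guarantees that $a$ was bound by exactly one $\lb{A}$-node and did not occur free anywhere, hence $a \notin \mathrm{FR}_G$ prior to the transition. Combining this observation with the case analysis above and with Prop.~\ref{prop:PreserveBoundNameCriterion} yields simultaneous preservation of both criteria.
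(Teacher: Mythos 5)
Your proposal is correct and follows essentially the same route as the paper: it delegates the bound-name part to Prop.~\ref{prop:PreserveBoundNameCriterion} and then extends the validation map case by case, with the crucial observation in the decoupling case that the uniqueness clause of the bound-name criterion guarantees the newly freed name was not already free, so the extension is well-defined. The only (harmless) imprecision is the claim that the non-iterated computational rewrites never touch $\val{\vec{p}}$-nodes --- ground-type vector operations such as $+_a$ or $\bullet_a$ do consume and produce such nodes, but as the paper also notes, this has no impact on the validation map.
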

\begin{proof}
 With Prop.~\ref{prop:PreserveBoundNameCriterion} at hand, we here
 show that the new state fulfills the free-name criterion.

 A free name is introduced by a $\wn$-rewrite transition that
 eliminates a $\lb{A}$-node.
 The name was bound by the $\lb{A}$-node and not free before the
 transition, because of the bound-name criterion (namely the
 uniqueness property).
 Therefore the validation map can be safely extended.

 The name permutation, in rewrite transitions that duplicate a
 $\oc$-box, applies for both bound names and free names.
 It introduces fresh free names, without changing the status of names,
 and therefore the validation map can be extended accordingly.

 Some computational rewrite rules
 (Fig.~\ref{fig:RewriteTransitionsComput}) act on links with vector
 type $V_a$, however they have no impact on the validation map.
 Any other transitions also do not affect the validation map.
\end{proof}

The last criterion is on the shape of graphs.
It is inspired by Danos and Regnier's correctness criterion
\cite{DanosR89} for proof nets.
\begin{definition}[Covering links]
 In a graph $G(1,n)$, a link $\ell$ is \emph{covered} by another link
 $\ell'$, if any box-path (see Def.~\ref{def:BoxReachability}) from
 the root of the graph $G$ to the link $\ell$ contains the covering
 link $\ell'$.
\end{definition}
\begin{definition}[Graph criterion]
 \label{def:GraphCriterion}
 A graph $G(1,n)$ fulfills the \emph{graph criterion} if it satisfies
 the following.
 \begin{description}
  \item[Acyclicity] Any box-path, in which all links have (not
	     necessarily the same) argument types, is acyclic, i.e.\
	     nodes or links appear in the box-path at most once.
	     Similarly, any directed path whose all links have the
	     provisional type $\rot{\oc} \F$ is acyclic.
  \item[Covering] At any $\lambda$-node, its incoming output link is
	     covered by its input link.
	     Any $\lb{A}$-node or $\lb{P}$-node is covered by a
	     $\wn$-node.
 \end{description}
\end{definition}
\begin{proposition}[Preservation of graph criterion]
 \label{prop:PreserveGraphCriterion}
 In any transition, if an old state meets the graph criterion, so does
 a new state.
\end{proposition}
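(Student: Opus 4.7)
The plan is to proceed by case analysis on the transitions, checking the two parts of the graph criterion (acyclicity and covering) separately for each case. Pass transitions (Fig.~\ref{fig:PassTransitions}) leave the underlying graph unchanged, so they preserve the criterion trivially. For each rewrite transition we only need to verify the criterion holds in the modified region and is not broken for nodes outside the redex, since covering is defined via box-paths starting from the root and rewrites only alter a localised sub-graph (plus possibly generating fresh copies of $\oc$-boxes).

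For the computational rewrites in Fig.~\ref{fig:RewriteTransitionsComput}, each rewrite either shortens box-paths (as in the $\lambda$-$@$ cancellation, where an argument edge replaces a two-hop route) or simplifies a redex to constants and simpler primitive nodes. In the $\beta$-like cancellation I would show that any box-path in the new graph lifts to a box-path in the old graph, so acyclicity is inherited; and covering is unaffected because no $\wn$-node covering an $\lb{A}$- or $\lb{P}$-node elsewhere is removed. For the unfolding of iterated operations, the only new nodes are $\val{\vec{e_i}}$-nodes and fresh primitive operations, which carry no covering obligations and cannot introduce new cycles along argument-type or provisional-type paths.

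The main obstacle is the deep decoupling rule of Fig.~\ref{fig:RewriteTransitionsDeep}, which genuinely restructures the graph by introducing a fresh $\lambda$-node, a fresh $\lb{P}_n$-node, a fresh $\rot{\lb{D}}$-node, and a vector node connected to the second input of the $\lb{A}$-node. I would verify covering of the new $\lambda$-node by tracing: its incoming output link is covered by its input link because the $\lb{A}$-node it replaces already had its outgoing edge covered (by a $\wn$-node sitting above the $\oc$-box, as required by the covering clause for $\lb{A}$-nodes). The new $\lb{P}_n$-node sits strictly inside the reconstructed $\oc$-box, hence is still covered by the original $\wn$-node at the principal door. Acyclicity along argument-type paths follows because the new internal structure introduces no new argument-type edges between distinct parts of the old graph; along provisional-type paths one checks that the vector node $\val{\vec{p}}$ introduced is a sink, so it cannot participate in a cycle. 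The deep projection and contraction rewrites duplicate a box with fresh names, and the arguments needed to preserve acyclicity and covering apply componentwise to each copy by an immediate induction on the structure of the duplicated sub-graph.

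For the box-opening rewrites (Fig.~\ref{fig:RewriteTransitionsOpenBox}), absorbing an auxiliary box does not create new box-paths: paths that went through the absorbed $\oc$-box become directly traversable, but covering $\wn$-nodes are preserved along with the relations between covered nodes. For the box-copying rewrites (Fig.~\ref{fig:RewriteTransitionsClosedBox}), we replicate a closed box, which by construction respects both criteria in each fresh copy. Combining the cases, together with Prop.~\ref{prop:PreserveNameCriteria} to ensure that names are handled consistently in duplicated boxes, yields the result.
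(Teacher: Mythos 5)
Your overall strategy (case analysis over transitions, treating acyclicity and covering separately) matches the paper's, and your treatment of the covering clause for the decoupling rule is in line with what the paper does (it merely flags that case as the notable one). However, there is a genuine gap in your handling of acyclicity for the $\lambda$-$@$ cancellation, which is precisely the case the paper singles out as non-trivial. You claim that ``any box-path in the new graph lifts to a box-path in the old graph, so acyclicity is inherited.'' This does not follow: the rewrite creates a genuinely new connection, wiring the argument directly to the incoming output (bound-variable) edge of the $\lambda$-node, so it \emph{concatenates} two box-paths of argument types that were disjoint before. A cycle in the new graph through this fresh connection lifts only to a (non-cyclic) box-path in the old graph running from the free output link of the $@$-node to the incoming output link of the $\lambda$-node; acyclicity of the old graph is silent about such a path, so your lifting argument cannot exclude it.

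The paper closes exactly this hole by invoking the \emph{covering} half of the graph criterion: the incoming output link of the $\lambda$-node is covered by its input link, so every box-path from the root to that link passes through the $\lambda$-node's input, and a pre-existing box-path from the $@$-node's free output to the incoming output link would force a cyclic box-path of argument types in the old graph, a contradiction. In other words, acyclicity after the $\beta$-step is not inherited from acyclicity alone but from the interplay of acyclicity and covering before the step. You should replace your lifting claim with this argument; the rest of your case analysis (computational unfoldings, deep rules, box absorption and copying) is at the same level of detail as, or finer than, the paper's ``easily checked in any other transitions.''
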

\begin{proof}
 An $@$-rewrite transition eliminates a pair of a $\lambda$-node and
 an $@$-node, and connects two acyclic box-paths of argument types.
 The resulting box-path being a cycle means that there
 existed a box-path from the free (i.e.\ not
 connected to the $\lambda$-node) output link of the $@$-node to the
 incoming output link of the $\lambda$-node before the transition.
 This cannot be the case, as the incoming output link must have been
 covered by the input link of the $\lambda$-node.
 Therefore the $@$-rewrite does not break the acyclicity condition.
 The condition can be easily checked in any other transitions.

 The covering condition is also preserved.
 Only notable case for this condition is the decoupling rule that
 introduces a $\lambda$-node and a $\lb{P}$-node.
\end{proof}

Finally the validity of graph states is defined as below.
The validation map of a graph is used to check if the token carries
appropriate data to make computation happen.
\begin{definition}[Queries and answers]
 Let $d \colon A \to \N$ be a map from a finite set
 $A \finsubset \A$ of names to the set $\N$ of natural numbers.
 For each type $\tilde{T}$, two sets $\Qry_{\tilde{T}}$ and
 $\Ans_{\tilde{T}}^d$ are defined inductively as below.
 \begin{align*}
  \Qry_\F = \Qry_{\rot\oc  \F} &:= \{ \star \},
  & \Ans_\F^d = \Ans_{\rot\oc  \F}^d &:= \F \\
  \Qry_{V_a} &:= \{ \star \},
  & \Ans_{V_a}^d &:=
  \begin{cases}
   \F^{d(a)} & \text{(if $a \in A$)}\\
   \emptyset & \text{(otherwise)}
  \end{cases} \\
  \Qry_{T_1 \to T_2} &:= \{ \star, @ \},
  & \Ans_{T_1 \to T_2}^d &:= \{ \lambda \} \\
  \Qry_{\oc T} &:= \Qry_T,
  & \Ans_{\oc T}^d &:= \Ans_T^d.
 \end{align*}
\end{definition}
\begin{definition}[Valid states]
 \label{def:ValidStates}
 A state $((G,\ell),(d,f,S,B))$ is \emph{valid} if the following
 holds.
 \begin{enumerate}
  \item The graph $G$ fulfills the name criteria and the graph
	criterion.
  \item If $d = \up$ and the position $\ell$ has type $\rho$, the
	computation stack $S$ is in the form of $X \cl S'$ such that
	$X \in \Qry_\rho$.
  \item Let $v$ be the validation map of the graph $G$.
	If $d = \dn$ and the position $\ell$ has type $\rho$, the
	set $\Ans_\rho^v$ is not empty, and the
	computation stack $S$ is in the form of	$X \cl S'$ such that
	$X \in \Ans_\rho^v$.
 \end{enumerate}
\end{definition}
\begin{proposition}[Preservation of validity]
 \label{prop:PreserveValidity}
 In any transition, if an old state is valid, so is a new state.
\end{proposition}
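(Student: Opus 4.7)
\textbf{Proof plan for Prop.~\ref{prop:PreserveValidity}.}
The first validity condition (that $G$ satisfies the name criteria and graph criterion) is inherited from Propositions~\ref{prop:PreserveBoundNameCriterion}, \ref{prop:PreserveNameCriteria}, and \ref{prop:PreserveGraphCriterion}, so all the real work concerns conditions~2 and~3, i.e.\ that the direction/stack/position triple remains consistent with $\Qry_{\rho}$ and $\Ans_{\rho}^{v}$ at the new token position. I would proceed by a straightforward but lengthy case analysis on the transition applied, organised by the four figures (Fig.~\ref{fig:PassTransitions}, Fig.~\ref{fig:RewriteTransitionsComput}, Fig.~\ref{fig:RewriteTransitionsDeep}/\ref{fig:RewriteTransitionsDeepGeneral}, Fig.~\ref{fig:RewriteTransitionsOpenBox}, Fig.~\ref{fig:RewriteTransitionsClosedBox}).

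For pass transitions, each rule for a label $X$ is engineered so that the difference between the type of the old and new position links is exactly matched by the modification of the top of the computation stack. For example, at a $\lambda$-node going upward, the output link has type $T_1 \to T_2$ so the old stack begins with some $Q \in \Qry_{T_1\to T_2}=\{\star,@\}$; if $Q=\star$, the token bounces downward with $\lambda \in \Ans_{T_1\to T_2}^v$ on top, and if $Q=@$ the token descends into the argument of type $T_1$ with the popped stack, matching $\Qry_{T_1}$. Analogous checks for $@$-nodes, $\val{p}$/$\val{\vec p}$-nodes, primitive operation nodes, the exponential nodes $\oc,\wn,\rot{\oc},\rot{\wn},\lb{D},\rot{\lb D}$, and the contraction nodes are all mechanical and rely only on the shape of $\Qry$ and $\Ans$ together with the typing rules in Fig.~\ref{fig:GraphConnection}.

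The rewrite transitions are split into local and deep. Local rewrites either preserve the position or move it along a link whose type is determined by the typing discipline; for the $\lambda$-$@$ elimination the new position carries the value $X$ just computed on the argument, and the typing of application guarantees $X \in \Ans^{v}_{T'}$; the primitive and iterative rewrites introduce $\val{p}$- or $\val{\vec p}$-nodes of the correct ground type so that the subsequent answer propagation remains in the corresponding $\Ans$-set. Box-closing (Fig.~\ref{fig:RewriteTransitionsOpenBox}) and box-copying (Fig.~\ref{fig:RewriteTransitionsClosedBox}) rewrites leave the token data unchanged except for flag and box-stack bookkeeping, and they do not alter the type of the position link nor the validation map; verifying validity there reduces to checking that the bureaucratic updates to $B$ consume exactly the element pushed by the matching pass rule, which follows from the graph criterion's covering clause.

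\textbf{Main obstacle.} The only genuinely subtle case is the deep decoupling rule, because it is the unique transition that extends the validation map: a component $(\vec p)^\ddag$ of size $n$ is turned into a $\val{\vec p}$-node together with a projection/abstraction pair indexed by a name $a$ that was bound by the $\lb{A}$-node and becomes free. One has to verify (i) that setting $v(a):=n$ is well-defined, which uses the uniqueness clause of the bound-name criterion to ensure $a$ was not previously in the domain of $v$; (ii) that the type of the unchanged token position, which lies outside the redex by definition of a deep rewrite, is unaffected, so conditions~2 and~3 are preserved verbatim; and (iii) that the new $\val{\vec p}$- and $\lb{P}_n$-nodes introduced satisfy the free-name criterion with the extended $v$, which is exactly how $n$ was chosen. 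The deep projection and contraction rules are handled analogously, with the name permutation $\pi_N$ playing the role of the name extension and justified by Prop.~\ref{prop:AlphaEquivBisim}. Once these deep cases are dispatched, the remaining verifications are routine.
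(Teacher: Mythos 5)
Your plan takes essentially the same route as the paper's proof: reduce preservation of validity to conditions 2 and 3 by invoking Prop.~\ref{prop:PreserveNameCriteria} and Prop.~\ref{prop:PreserveGraphCriterion} (which already absorb your ``main obstacle'' about extending the validation map at decoupling), note that rewrite transitions leave the direction and computation stack unchanged, and observe that the only pass transition requiring a genuine check is the downward pass over a $\$^0$-node, where the result of applying the primitive lands in the right $\Ans$-set. The paper compresses this into a few lines where you sketch a fuller case sweep, but the decomposition and the key observations coincide.
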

\begin{proof}
 Using Prop.~\ref{prop:PreserveNameCriteria} and
 Prop.~\ref{prop:PreserveGraphCriterion}, the proof boils down to
 check the bottom two conditions of validity.
 Note that no rewrite transitions change the direction and the
 computation stack.
 When the token passes a $\$^0$-node downwards, application of the
 primitive operation $\$^0$ preserves the last condition of validity.
 All the other pass transitions are an easy case.
\end{proof}

In an execution, validity of intermidiate states can be reduced to the
criteria on its initial graph.
\begin{proposition}[Validity condition of executions]
 \label{prop:ExecValidityCondition}
 For any execution $\Init(G_0) \to^* ((G,\ell),\delta)$, if the
 initial graph $G_0$ meets the name criteria and the graph criterion,
 the state $((G,\ell),\delta)$ is valid.
\end{proposition}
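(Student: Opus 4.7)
The plan is a short induction on the length of the execution, leveraging Proposition~\ref{prop:PreserveValidity} as the inductive step. All the substantive work has already been done in establishing that validity is preserved under transitions; what remains is just to verify that the initial state is valid.

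First I would unfold the definition of $\Init(G_0)$, which gives the state $((G_0,\ell_0),(\up,\square,\star\cl\square,\square))$ where $\ell_0$ is the root of $G_0$. To check validity in the sense of Def.~\ref{def:ValidStates}, I proceed through its three clauses. Clause~(1), that $G_0$ meets the name criteria and the graph criterion, holds by hypothesis. Clause~(3) is vacuously satisfied because the direction is $\up$, not $\dn$. For clause~(2), with $d=\up$ and position $\ell_0$ of some enriched type $\rho$, I need $\star\in\Qry_\rho$; inspecting the definition of $\Qry_{\tilde{T}}$ by induction on $\tilde{T}$ shows that $\star$ belongs to $\Qry_{\tilde{T}}$ for every enriched type, so the computation stack $\star\cl\square$ has the required shape.

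With the base case in hand, the inductive step is immediate: assuming a state $((G_n,\ell_n),\delta_n)$ reached after $n$ transitions from $\Init(G_0)$ is valid, Proposition~\ref{prop:PreserveValidity} tells us that any state $((G_{n+1},\ell_{n+1}),\delta_{n+1})$ reached by one more transition is also valid. Hence by induction every state in the execution, and in particular the state $((G,\ell),\delta)$, is valid.

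There is no real obstacle here; the proposition is effectively a corollary of Prop.~\ref{prop:PreserveValidity} together with a straightforward check on the shape of initial states. The only point that deserves care is the verification that $\star \in \Qry_\rho$ for the root type, but this is an elementary check on the inductive definition of $\Qry$, noting that the $@$ marker (the only other element of a query set) is needed solely at function types during execution, not at the very start.
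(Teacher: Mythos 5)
Your proof is correct and follows essentially the same route as the paper's: check that the initial state is valid (using the hypothesis for clause~(1), the direction $\up$ for the vacuity of clause~(3), and the fact that $\star\in\Qry_\rho$ for every enriched type $\rho$ for clause~(2)), then propagate validity along the execution via Prop.~\ref{prop:PreserveValidity}. The paper states this more tersely but with identical content, so there is nothing to add.
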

\begin{proof}
 The initial state $\Init(G_0)$ has the direction $\up$, and its
 computation stack has the top element $\star$.
 Since any type $\rho$ satisfies $\star \in \Qry_\rho$, the criteria
 implies validity at the initial state $\Init(G_0)$.
 Therefore the property is a consequence of
 Prop.~\ref{prop:PreserveValidity}.
\end{proof}

\section{Stability}
\label{app:Stability}

This section studies executions in which the underlying graph is never
changed.
\begin{definition}[Stable executions/states]
 An execution $\Init(G) \to^* ((G,\ell),\delta)$ is \emph{stable} if
 the graph $G$ is never changed in the execution.
 A state is \emph{stable} is there exists a stable execution to the
 state itself.
\end{definition}
A stable execution can include pass transitions, and rewrite
transitions that just lower the rewrite flag, as well.
Since the only source of non-determinism is rewrite transitions that
actually change a graph, a stable state comes with a unique stable
execution to the state itself.

The stability property enables us to backtrack an execution in certain
ways, as stated below.
\begin{proposition}[Factorisation of stable executions]
 \label{prop:FactoriseStableExec}
 \noindent
 \begin{enumerate}
  \item If an execution $\Init(G) \to^* ((G,\ell),\delta)$ is stable,
	it can be factorised as
	$\Init(G) \to^* ((G,\ell'),\delta') \to^* ((G,\ell),\delta)$
	where the link $\ell'$ is any link covering the link $\ell$.
  \item If an execution
	$\Init(G) \to^* ((G,\ell),(\dn,\square,X \cl S,B))$ is stable,
	it can be factorised as
	$\Init(G) \to^* ((G,\ell),(\up,\square,\star \cl S,B))
	\to^* ((G,\ell),(\dn,\square,X \cl S,B))$.
 \end{enumerate}
\end{proposition}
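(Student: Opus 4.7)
I would prove both parts together by induction on the length of the stable execution, exploiting that a stable execution induces a deterministic path of token positions (unique up to name permutation by Prop.~\ref{prop:AlphaEquivBisim} and Prop.~\ref{prop:Determinism}), since the only source of non-stable non-determinism is the choice among graph-changing $\wn$-rewrites, which by assumption do not occur. The general strategy is to extract the required intermediate state from the given execution rather than construct a new one.

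For part (1), the plan is to establish the invariant that in a stable execution, each token position is the endpoint of some box-path from the root of $G$ to the current link, and that the token visits every link lying on such a box-path. Since by definition the covering link $\ell'$ lies on \emph{every} box-path from the root to $\ell$, it follows that $\ell'$ appears at some earlier state. The inductive step is a case analysis on the last pass transition in Fig.~\ref{fig:PassTransitions}: each transition moves the token one step along an edge, and the covering conditions in Def.~\ref{def:GraphCriterion} (preserved by Prop.~\ref{prop:PreserveGraphCriterion}) are designed precisely so that the new position's covering links are contained in those of the old position plus possibly the old position itself. In particular, at a $\lambda$-node the incoming output link is covered by the input link, forcing the token to enter through the input before exiting the body; at $\lb{A}$- or $\lb{P}$-nodes, the covering $\wn$-node reflects that entry into the enclosing $\oc$-box must occur through the principal door.

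For part (2), the key observation is that the direction flag flips from $\up$ to $\dn$ only at leaf pass transitions (constants and fully evaluated primitive operations), and that between a matching upward and downward visit at the same link the tail $S$ of the computation stack is preserved (only the top element is consumed on going up and replaced with the answer on going down). Inspecting each pass transition in Fig.~\ref{fig:PassTransitions} confirms this bookkeeping: every node that pushes a symbol on the way up also pops one on the way down, and leaves the remainder untouched. Combining this with part (1), which guarantees that the token does visit $\ell$ in the upward direction before its terminal downward visit, we can identify the unique matching state $((G,\ell),(\up,\square,\star \cl S,B))$. Its existence and uniqueness (up to alpha-equivalence) follow once the invariant above is established.

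The principal obstacle is handling pass transitions through contraction nodes ($\lb{C}_n$ and $\rot{\lb{C}}_n$) and through the auxiliary doors of $\oc$-boxes, since these modify the box stack by pushing or popping the previous link. I would address this with a nested-balance invariant: every push performed when entering a shared sub-graph via $\lb{C}_n$ upward is matched by a corresponding pop when exiting via the same contraction, and these push/pop events nest in the natural manner mirroring the box-path structure of Def.~\ref{def:BoxReachability}. Under this invariant, the box stack $B$ at the terminal downward state uniquely determines the box stack at the matching upward state, sealing both the covering argument of part (1) and the matched-visit argument of part (2).
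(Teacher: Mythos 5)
Your overall strategy coincides with the paper's: both parts are proved by induction on the length of the stable execution, with a case analysis on the last transition, and both exploit that stability rules out the graph-changing rewrites that are the only source of non-determinism. The differences are in the key technical devices. For part (1) the paper does not introduce a global invariant about the token tracing out a box-path; instead it uses a purely node-local fact — for any link and any node, the link covers one of the outgoing output links of the node if and only if it covers all of its input links — which makes the inductive step immediate for every pass transition. Your trajectory-as-box-path invariant would also work, but as stated it needs repair: the token's upward trajectory is not literally a box-path (box-paths skip $\oc$-boxes by jumping from the principal door to an output link, whereas the token enters boxes through the principal door after the $\wn$/$\oc$ flags are lowered, and revisits links via the box stack at $\rot{\lb{C}}_n$-nodes), so the claim that the token ``visits every link lying on such a box-path'' requires a careful correspondence that the paper's local formulation sidesteps entirely. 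For part (2) the paper gives a self-contained induction that peels off the last transition and re-factorises through the node just traversed (invoking the induction hypothesis twice for $\$^0$-rewrites, once per argument), rather than deriving the earlier upward visit from part (1) and then pinning down its token data by a stack-discipline argument. Your route is viable but shifts the burden onto identifying the \emph{matching} upward visit among possibly several visits to the same link $\ell$ with different box stacks; your proposed nested-balance invariant for pushes at $\lb{C}_n$/$\rot{\lb{C}}_n$-nodes and pops at $\rot{\lb{C}}_n$-nodes is exactly what is needed there, but it is an extra lemma the paper's direct peeling avoids having to state. Neither difference is a fatal gap, but both the box-path invariant and the balance invariant would need to be proved explicitly for your version to be complete.
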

\begin{proof}[Proof of Prop.~\ref{prop:FactoriseStableExec}.1]
 The proof is by induction on the length of the stable execution
 $\Init(G) \to^* ((G,\ell),\delta)$.
 When the execution has null length, the last position $\ell$ is the
 root of the graph $G$, and the only link that can cover it is the
 root itself.

 When the execution has a positive length, we examine each possible
 transition.
 Rewrite transitions that only lower the rewriting flag are trivial
 cases.
 Cases for pass transitions are the straightforward use of induction
 hypothesis, because for any link and a node, the following are
 equivalent: (i) the link covers one of outgoing output links of the
 node, and (ii) the link covers all input links of the node.
\end{proof}
\begin{proof}[Proof of Prop.~\ref{prop:FactoriseStableExec}.2]
 The proof is by induction on the length $n$ of the stable execution
 $\Init(G) \to^n ((G,\ell),(\dn,\square,X \cl S,B))$.
 
 As the first state and the last state cannot be equal, base cases are
 for single transitions, i.e.\ when $n = 1$.
 Only possibilities are pass transitions over a $\lambda$-node, a
 $\val{p}$-node or a $\val{\vec{p}}$-node, all of which is in the form
 of $((G,\ell), (\up,\square,\star \cl S,B)) \to
 ((G,\ell), (\dn,\square,X \cl S,B))$.

 In inductive cases, we will use induction hypothesis for any
 length that is less than $n$.
 If the last transition is a pass transition over a $\lambda$-node, a
 $\val{p}$-node or a $\val{\vec{p}}$-node, the discussion goes in the
 same way as in base cases.
 All the other possible last transitions are: pass transitions over a
 node labelled with $\oc$, $\rot{\oc}$, $\rot{\wn}$,
 $\rot{\lb{D}}$ or $\rot{\lb{C}}_n$; and rewrite transitions that do
 not change the underlying graph but discard the rewriting flag
 $\$^0$.

 If the last transition is a pass transition over a $Z$-node such that
 $Z \in \{ \oc, \rot{\oc}, \rot{\wn},
 \rot{\lb{D}}, \rot{\lb{C}}_n \}$,
 the last position (referred to as $\In(Z)$) is input to the
 $Z$-node, and the second last position (referred to as $\Out(Z)$) is
 output of the $Z$-node.
 Induction hypothesis (on $n-1$) implies the factorisation below,
 where $n = m + l + 1$:
 \begin{align*}
  \Init(G)
  &\to^m ((G,\Out(Z)), (\up,\square,\star \cl S,B')) \\
  &\to^l ((G,\Out(Z)), (\dn,\square,X \cl S,B'))
  \to ((G,\In(Z)), (\dn,\square,X \cl S,B)).
 \end{align*}
 Moreover the state $((G,\Out(Z)),(\up,\square,\star \cl S,B'))$
 must be the result of a pass transition over the $Z$-node.
 This means we have the following further factorisation if
 $Z \neq \oc$,
 \begin{align*}
  \Init(G)
  &\to^{m-1} ((G,\In(Z)), (\up,\square,\star \cl S,B))
  \to ((G,\Out(Z)), (\up,\square,\star \cl S,B')) \\
  &\to^l ((G,\Out(Z)), (\dn,\square,X \cl S,B'))
  \to ((G,\In(Z)), (\dn,\square,X \cl S,B))
 \end{align*}
 and the one below if $Z = \oc$.
 \begin{align*}
  \Init(G)
  &\to^{m-3} ((G,\In(Z)), (\up,\square,\star \cl S,B))
  \to ((G,\Out(Z)), (\up,\wn,\star \cl S,B)) \\
  &\to ((G,\Out(Z)), (\up,\oc,\star \cl S,B))
  \to ((G,\Out(Z)), (\up,\square,\star \cl S,B')) \\
  &\to^l ((G,\Out(Z)), (\dn,\square,X \cl S,B'))
  \to ((G,\In(Z)), (\dn,\square,X \cl S,B))
 \end{align*}

 If the last transition is a rewrite transition that discards the
 rewriting flag $\$^0$, it must follow a pass transition over a
 $\$^0$-node.
 Let $\In$, $\Out_1$ and $\Out_2$ denote input, left output and right
 output, respectively, of the $\$^0$-node.
 We obtain the following factorisation where $n = m + l_2 + l_1 + 3$,
 using induction hypothesis twice (on $n-2$ and $n-l_1-3$).
 \begin{align*}
  \Init(G)
  &\to^{m-1} ((G,\In), (\up,\square,\star \cl S,B))
  \to ((G,\Out_2), (\up,\square,\star \cl \star \cl S,B)) \\
  &\to^{l_2} ((G,\Out_2), (\dn,\square,\val{k_2} \cl \star \cl S,B))
  \to ((G,\Out_1),
  (\up,\square,\star \cl \val{k_2} \cl \star \cl S,B)) \\
  &\to^{l_1} ((G,\Out_1),
  (\dn,\square,\val{k_1} \cl \val{k_2} \cl \star \cl S,B)) \\
  &\to ((G,\In), (\dn,\$^0,\val{k_1 \$^0 k_2} \cl S,B))
  \to ((G,\In), (\dn,\square,\val{k_1 \$^0 k_2} \cl S,B))
 \end{align*}
\end{proof}

Inspecting the proof of Prop.~\ref{prop:FactoriseStableExec}.2 gives
some (graphically-)intensional characterisation of graphs in stable
executions.
We say a transition ``involves'' a node, if it is a pass transition
over the node or it is a rewrite transition whose (main-)redex
contains the node.
\begin{proposition}[Stable executions, intensionally]
 \label{prop:StableExecIntensionally}
 Any stable execution of the form
 $\Init(G) \to^h ((G,\ell),(\up,\square,\star \cl S,B))
 \to^k ((G,\ell),(\dn,\square,X \cl S,B))$
 satisfies the following.
 \begin{itemize}
  \item If the position $\ell$ has a ground type or the provisional
	type $\rot{\oc}\F$, the last $k$ transitions
	of the stable execution involve nodes labelled with only
	$\{ p,\vec{p},\$^0, \rot{\oc},\rot{\lb{D}},\rot{\lb{C}}_m
	\mid p \in \F,\ \vec{p} \in \F^n,\ \$^0 \in \Sigma,\ 
	n \in \N,\ m \in \N \}$.
  \item If the position $\ell$ has a function type,
	i.e.\ $T_1 \to T_2$, it is the input of a $\lambda$-node, and
	$k = 1$.
 \end{itemize}
\end{proposition}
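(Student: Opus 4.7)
The plan is to proceed by strong induction on $k$, the length of the return sub-execution from $((G,\ell),(\up,\square,\star \cl S,B))$ to $((G,\ell),(\dn,\square,X \cl S,B))$. The case analysis follows the one in the proof of Proposition~\ref{prop:FactoriseStableExec}.2, but additionally tracks, for each transition, the label of the node involved and its input/output types.

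In the base case $k=1$, the only transitions of this shape are turn-around passes at $\ell$: over a $\lambda$-node (pushing $\lambda$, input type $T_1 \to T_2$), over a $\val{p}$-node (pushing $\val{p}$, input type $\F$), or over a $\val{\vec{p}}$-node (pushing $\val{\vec{p}}$, input type $V_a$). Hence the type of $\ell$ matches the input type of the unique node visited: a function type yields the second bullet with $k=1$, while the scalar or vector case yields a single involved node $p$ or $\vec p$, both in the allowed set.

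In the inductive step $k>1$, I would examine the last transition. Mirroring the factorisation proof, it is either (i) a turn-around pass at $\ell$, (ii) a pass transition over a node $Z \in \{\oc,\rot{\oc},\rot{\wn},\rot{\lb{D}},\rot{\lb{C}}_n\}$ with $\ell$ as input and the prior position at some output of $Z$, or (iii) the rewrite discarding a flag $\$^0$ after pass transitions over a $\$^0$-node at $\ell$. Case (i) is inconsistent with $k>1$: the pre-state then equals the starting state, but by determinism of stable executions (Prop.~\ref{prop:ExecDeterminism}) no state can be revisited progressively. In case (iii) the $\$^0$-node is in the allowed set; each of the two preceding operand sub-executions ends at a ground-type position and the induction hypothesis applies. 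In case (ii), I would use Fig.~\ref{fig:GraphConnection} to match the input type of $Z$ to the ground or provisional type of $\ell$: this rules out $Z=\oc$ (whose input is $\oc T$) and $Z=\rot{\wn}$ (whose input sits inside a $\oc$-box and is therefore unreachable without a prior graph-changing $\oc$-rewrite), while $Z\in\{\rot{\oc},\rot{\lb{D}},\rot{\lb{C}}_n\}$ all have compatible input types and lie in the allowed set. The prior position $\Out(Z)$ in each admitted case also carries a ground or provisional type, so the induction hypothesis applies to the shorter sub-execution up to $\Out(Z)$.

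For the second bullet, $\ell$ of function type $T_1 \to T_2$ requires $X \in \Ans^v_{T_1\to T_2} = \{\lambda\}$; since the only transition that ever pushes $\lambda$ on the computation stack is the turn-around pass over a $\lambda$-node, the last transition must be such a pass at $\ell$, forcing $k=1$. The principal obstacle is the type analysis in case (ii) of the inductive step, especially eliminating $\rot{\wn}$: one must combine the typing constraints of Fig.~\ref{fig:GraphConnection} with the stability assumption to argue that a ground or provisional position cannot appear as the input of a $\rot{\wn}$-node during a stable execution, since reaching it would require prior $\oc$-box absorption or copying that would change the graph.
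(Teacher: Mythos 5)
Your overall strategy --- inducting on $k$ by replaying the case analysis of Prop.~\ref{prop:FactoriseStableExec}.2 while tracking the label and the interface types of each node involved --- is exactly the route the paper takes, and your treatment of the base case, of the $\$^0$ case, of the exclusion of $\oc$-nodes via the argument type $\oc T$, and of the function-type bullet are all sound.

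The gap is in your elimination of $\rot{\wn}$-nodes, which you yourself flag as the principal obstacle. You argue that a ground or provisional position cannot be the input of a $\rot{\wn}$-node in a stable execution because ``reaching it would require prior $\oc$-box absorption or copying that would change the graph.'' That is not so: a $\oc$-box with no definitive auxiliary doors, whose principal door sits below a node other than a $\lb{D}$-node or a $\lb{C}_k$-node, is entered by the token through flag-only rewrites ($\wn \to \oc \to \square$) that leave the graph unchanged --- this is precisely the $Z = \oc$ case spelled out in the proof of Prop.~\ref{prop:FactoriseStableExec}.2, which exhibits such a passage inside a stable execution. Once inside, the token can reach the input of a provisional auxiliary door on its way up to a provisional constant, so stability of the whole execution does not do the work you want. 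The paper's argument is different and is localised to the last $k$ transitions: every position in the backward factorisation chain issuing from $\ell$ carries a ground or provisional type, so the chain never traverses a principal door (whose input has argument type $\oc T$), hence never descends into the interior of a $\oc$-box, and therefore never arrives at the input of a $\rot{\wn}$-node, which lies on a $\oc$-box's output boundary and is reachable only from inside that box. In short, the exclusion of $\rot{\wn}$ is a consequence of the exclusion of $\oc$ within the return journey itself --- this is what the paper's ``hence nor $\rot{\wn}$-nodes'' is compressing --- and you should replace your stability argument with this box-nesting argument.
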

\begin{proof}
 The proof is by looking at how factorisation is given in the proof of
 Prop.~\ref{prop:FactoriseStableExec}.2.
 Note that, since we are ruling out argument types, i.e.\ enriched
 types of the form of $\oc T$, the factorisation never encounters
 $\oc$-nodes (hence nor $\rot{\wn}$-nodes).
\end{proof}

The fundamental result is that stability of states is preserved by any
transitions.
This means, in particular, rewrites triggerd by the token in an
execution can be applied beforehand to the initial graph without
changing the end result.
Another (rather intuitive) insight is that, in an execution, the token
leaves no redexes behind it.
\begin{proposition}[Preservation of stability]
 \label{prop:PreserveStability}
 In any transition, if an old state is stable, so is a new state.
\end{proposition}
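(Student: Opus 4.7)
My plan is to proceed by case analysis on the form of the transition $s \to s'$, where $s = ((G,\ell),\delta)$ is assumed stable via some stable execution $\Init(G) \to^* s$. The overall strategy is to show that any graph-changing rewrite fired at $s$ can be ``commuted'' to the very start of the execution, thereby exhibiting a stable execution on the rewritten graph.

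The easy cases are pass transitions and those rewrite transitions that leave the graph unchanged, such as the top-left rules in Fig.~\ref{fig:RewriteTransitionsClosedBox} that merely reset the rewriting flag to $\square$, or the rewrite that discards a $\$^0$ flag after a primitive computation. In these cases $G' = G$, so the stable execution $\Init(G) \to^* s$ is extended by the transition itself to witness stability of $s'$.

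The substantive case is when $s \to s'$ genuinely rewrites $G$ into some $G' \neq G$. Here I would construct a stable execution $\Init(G') \to^* s'$ directly. The key observation is that, by stability of $s$, the graph $G$ was never modified along $\Init(G) \to^* s$, so the redex that fires at $s$ was already present in $G$ from the start; applying the same structural rewrite to $G$ therefore yields $G'$, giving a well-defined $\Init(G')$. For a \emph{deep} rewrite the redex is disjoint from the token position $\ell$, and the duplicated or rewritten subgraphs (in the projection and contraction rules) sit inside $\oc$-boxes whose interiors the token has not yet entered; the same sequence of pass transitions therefore carries the token from the root of $G'$ to $\ell$ with unchanged data, and one additional non-graph-changing step reaches $s'$. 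For a \emph{local} rewrite the redex contains $\ell$; here I would invoke Prop.~\ref{prop:FactoriseStableExec} to factor the execution up to the covering link of the redex and argue that pre-applying the rewrite re-routes the token through the transformed region to arrive precisely at $\ell'$ with data $\delta'$.

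The main difficulty I anticipate is handling the deep rules whose definitions involve fresh name permutations $\pi_N$: the copies produced by pre-applying such a rewrite will carry different fresh names from those produced when the rewrite fires at $s$, so the alternative final state will match $s'$ only up to alpha-equivalence of names. This is precisely what Prop.~\ref{prop:AlphaEquivBisim} provides, and identifying graph states modulo $\sim_\alpha$ makes the commutation argument sound. A secondary subtlety concerns the deep decoupling rule, whose redex connects an $\oc$-box with an $\lb{A}$-node placed elsewhere via box-reachability; using the box-reachability and level conditions that gate deep rules, one must verify that pre-applying the rewrite does not disturb the pass transitions that led the token to the principal door in the first place.
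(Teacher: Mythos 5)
Your proposal is correct and follows essentially the same route as the paper: graph-preserving transitions extend the stable execution directly, and for graph-changing rewrites you commute the rewrite to the start by factoring the stable execution at the entry of the redex (via Prop.~\ref{prop:FactoriseStableExec}), replaying the redex-free prefix on the rewritten graph, and then traversing the rewritten region to reach the new state --- which is precisely what the paper packages as Lem.~\ref{lem:StableExecInCtxt} and Lem.~\ref{lem:StabiliseActualRewrites}. Your handling of fresh names via $\sim_\alpha$ matches the paper's convention of identifying states modulo name permutation.
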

\begin{proof}
 If the transition does not change the underlying graph, it clearly
 preserves stability.
 If not, the preservation is a direct consequence of
 Lem.~\ref{lem:StableExecInCtxt} and
 Lem.~\ref{lem:StabiliseActualRewrites} below.
\end{proof}
\begin{lemma}[Stable executions in graph context]
 \label{lem:StableExecInCtxt}
 If all positions in a stable execution
 $\Init(\mathcal{G}[G]) \to^* ((\mathcal{G}[G],\ell),\delta)$ are in
 the graph context $\mathcal{G}$, there exists a stable execution
 $\Init(\mathcal{G}[G']) \to^* ((\mathcal{G}[G'],\ell),\delta)$ for
 any graph $G'$ with the same interfaces as the graph $G$.
\end{lemma}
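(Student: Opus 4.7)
The plan is induction on the length $n$ of the stable execution $\Init(\mathcal{G}[G]) \to^n ((\mathcal{G}[G],\ell),\delta)$, keeping the graph context $\mathcal{G}$ fixed and the replacement $G'$ arbitrary. The base case ($n=0$) is immediate: the root of $\mathcal{G}[G']$ coincides with the root of $\mathcal{G}[G]$ (both sit in $\mathcal{G}$), and the initial token data agrees, so $\Init(\mathcal{G}[G']) = ((\mathcal{G}[G'],\ell),\delta)$ and the empty execution suffices.

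For the inductive step, let the penultimate state be $((\mathcal{G}[G],\ell'),\delta')$. By the induction hypothesis, applied to the prefix (which is still stable and still has all positions in $\mathcal{G}$), the prefix can be replayed to obtain $\Init(\mathcal{G}[G']) \to^{n-1} ((\mathcal{G}[G'],\ell'),\delta')$. The task is then to show that the same last transition applies over $\mathcal{G}[G']$. Because the execution is stable the transition does not modify the underlying graph, which restricts it either to a pass transition or to a rewrite that only updates token data (for instance the $\$^0 \to \square$ flag discard, or the $\oc \to \square$ flag change that fires when a $\oc$-box has no outstanding auxiliary doors). In each case the applicability and effect are determined by the label of the node at $\ell'$, the data $\delta'$, and, for the box-related flag changes, the structure of a $\oc$-box whose boundary lies entirely in $\mathcal{G}$, since graph-context substitution never splits a $\oc$-box across the hole. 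As $\ell'$ is in $\mathcal{G}$ by hypothesis, none of the consulted information depends on whether $G$ or $G'$ occupies the hole, and the same transition fires in $\mathcal{G}[G']$ to produce $((\mathcal{G}[G'],\ell),\delta)$.

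The main obstacle is the case analysis needed to certify that every transition compatible with stability is locally determined relative to $\mathcal{G}$. This is essentially the content already cached in Prop.~\ref{prop:StableExecIntensionally}: the only nodes the token passes through are those whose pass transitions do not trigger structural rewrites, and the only rewrites consistent with stability are flag manipulations whose enabling conditions refer to $\oc$-boxes on the $\mathcal{G}$-side of the hole. A secondary bookkeeping point is that the box stack $B$ may contain link references; but stack entries are pushed only at previously-visited positions, and those positions all lie in $\mathcal{G}$ by assumption, so the references remain meaningful after swapping $G$ for $G'$.
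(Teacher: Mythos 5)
Your proposal is correct and follows essentially the same route as the paper: induction on the length of the stable execution, a trivial base case, and an inductive step showing that each last transition compatible with stability --- a pass transition or a flag-only rewrite ($\$^0$ discard, $\wn\to\oc$, $\oc\to\square$) --- remains enabled after substituting the hole, because everything it consults lies in the context $\mathcal{G}[\square]$. The paper is merely more explicit where you gesture: it invokes the factorisation lemma to show that, e.g., the two argument nodes of a $\$^0$-redex and the node beneath a copied $\oc$-box have already been visited by the token and hence sit in $\mathcal{G}[\square]$, rather than being determined by the label at the current position alone.
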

\begin{proof}
 The proof is by induction on the length of the stable
 execution
 $\Init(\mathcal{G}[G]) \to^* ((\mathcal{G}[G],\ell),\delta)$.
 The base case for null length is trivial.
 Inductive cases are respectively for all possible last transitions.
 When the last transition is a pass transition, the single node
 involved by the transition must be in the graph context
 $\mathcal{G}[\square]$.
 Therefore the last transition is still possible when the graph $G$ is
 replaced, which enables the straightforward use of induction
 hypothesis.

 When the last transition is a ``stable'' rewrite transition which
 simply changes the rewriting flag $f$ to $\square$, we need to
 inspect its redex.
 Whereas a part of the redex may not be inside the graph context
 $\mathcal{G}[\square]$,
 we confirm below that the same last transition is possible for any
 substitution of the hole, by case analysis of the rewriting flag $f$.
 Once this is established, the proof boils down to the straightforward
 use of induction hypothesis.
 Possible rewriting flags are primitive operations $\$^0$, and symbols
 $\wn$ and $\oc$ for $\oc$-box rewrites.

 If the rewriting flag is $\$^0$, the redex consists of one
 $\$^0$-nodes with two nodes connected to its output.
 The flag must have been raised by a pass transition over the
 $\$^0$-node, which means the $\$^0$-node is in the graph context
 $\mathcal{G}[\square]$.
 Moreover, by Lem.~\ref{prop:FactoriseStableExec}.2, the two other
 nodes in the redex are also in the graph context
 $\mathcal{G}[\square]$.
 Therefore the stable rewrite transition, for the flag $\$^0$, is not
 affected by substitution of the hole.

 If the rewriting flag is $\wn$, the redex is a $\oc$-box with all its
 doors.
 Since the rewriting flag must have been raised by the pass transition
 over the principal door, the principal has to be in the graph context
 $\mathcal{G}[\square]$.
 All the auxiliary doors of the same $\oc$-box are also in the graph
 context $\mathcal{G}[\square]$, by definition of graphs.
 The stable rewrite transition for the flag $\wn$ is hence possible,
 regardless of any substitution of the hole, while the $\oc$-box
 itself may be affected by the substitution.
 If the rewriting flag is $\oc$, the redex is a $\oc$-box, all its
 doors, and a node connected to its principal door.
 This case is similar to the last case.
 The connected node, to the principal door, is also in the graph
 context because the token must have visited the node before passing
 the principal door.
\end{proof}
\begin{lemma}[Stabilisation of actual rewrites]
 \label{lem:StabiliseActualRewrites}
 Let
 $((\mathcal{G}[G],\ell),\delta)
 \to ((\mathcal{G}[G'],\ell'),\delta')$
 be a rewrite transition, where $G$ is the redex replaced with a
 different graph $G'$.
 If the rewrite transition follows the stable execution
 $\Init(\mathcal{G}[G]) \to^* ((\mathcal{G}[G],\ell),\delta)$,
 there exist an input link $\ell_0$ of the hole $\square$
 (equivalently of $G$ and $G'$) and token data $\delta_0$ such that:
 \begin{itemize}
  \item the stable execution can be factorised as
	$\Init(\mathcal{G}[G])
	\to^* ((\mathcal{G}[G],\ell_0),\delta_0)
	\to^* ((\mathcal{G}[G],\ell),\delta)$,
	where all positions in the first half sequence are in the
	graph context $\mathcal{G}[\square]$
  \item there exists a sequence
	$((\mathcal{G}[G'],\ell_0),\delta_0)
	\to^* ((\mathcal{G}[G'],\ell'),\delta')$
	in which the graph $\mathcal{G}[G']$ is never changed.
 \end{itemize}
\end{lemma}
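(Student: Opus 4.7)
The plan is a case analysis on the rewrite rule. For each rule I exhibit the entry link $\ell_0$ on the boundary of the redex, the accompanying token data $\delta_0$, and a replay sequence of pass transitions (plus graph-preserving flag steps) on $\mathcal{G}[G']$ that reproduces $(\ell',\delta')$ from $(\ell_0,\delta_0)$. The heavy lifting is done by Prop.~\ref{prop:FactoriseStableExec}, which lets us backtrack the stable prefix through any covering link, together with Lem.~\ref{lem:StableExecInCtxt}, which transfers a stable subexecution across any replacement of the hole provided the positions stay in the context.

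For the local rewrites (Fig.~\ref{fig:RewriteTransitionsComput}, Fig.~\ref{fig:RewriteTransitionsOpenBox} and Fig.~\ref{fig:RewriteTransitionsClosedBox}), the position $\ell$ lies inside the redex $G$ or on its boundary. The covering condition of the graph criterion (Def.~\ref{def:GraphCriterion}) together with Prop.~\ref{prop:FactoriseStableExec}.1 backtracks the stable prefix to an input link $\ell_0$ of the hole with data $\delta_0$, all earlier positions lying in the context $\mathcal{G}[\square]$. From each rule one then reads off directly from the contractum $G'$ the sequence of pass transitions (and possibly a graph-preserving flag raise-and-lower) carrying $((\mathcal{G}[G'],\ell_0),\delta_0)$ to $((\mathcal{G}[G'],\ell'),\delta')$: the contractum is in each case designed so that its data-flow effect on the token coincides with what the rewrite of $G$ delivered at $\ell'$.

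For deep rewrites (Fig.~\ref{fig:RewriteTransitionsDeep}) the position $\ell$ sits at the principal door of an outer $\oc$-box while the redex $G$ is a deeper subgraph connected to $\ell$ through a box-path in the sense of Def.~\ref{def:BoxReachability}. Here I would choose $\ell_0$ to be the input link of the hole on that box-path through which the stable prefix last passed; such a link exists because the box-reachability condition forces at least one definitive auxiliary door of the outer $\oc$-box to lie on the path from the root to the redex, and the token must have crossed it while ascending to the principal door. Prop.~\ref{prop:FactoriseStableExec}.1 again supplies the factorisation, and the suffix on $\mathcal{G}[G']$ is obtained by replaying the tail of the original stable prefix plus the deep rewrite step itself: every position visited by the tail lies in $\mathcal{G}[\square]$ so, by Lem.~\ref{lem:StableExecInCtxt}, the replacement $G \leadsto G'$ leaves the replay intact.

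The hard part will be the rule-by-rule verification that the replay on $\mathcal{G}[G']$ never triggers a fresh rewrite, i.e.\ that the pass transitions required to reach $((\mathcal{G}[G'],\ell'),\delta')$ are not blocked by a newly created redex at some intermediate node. This rests on the determinism of pass transitions (Prop.~\ref{prop:TransitionsDeterminism}), on the preservation of name criteria (Prop.~\ref{prop:PreserveNameCriteria})---which is essential for the deep decoupling, projection and contraction rules because of name refreshment---and on the concrete shape of each contractum. The $\oc$-box absorption and copying rules pose an additional bookkeeping challenge, since the auxiliary-door structure of $G'$ differs from that of $G$ and one must match box stacks carefully so that $\delta'$ is recovered exactly; this is the most tedious, but in each case entirely finite, part of the analysis.
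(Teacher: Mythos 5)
Your overall strategy---case analysis over the rewrite rules, backtracking the stable prefix to an entry link $\ell_0$ of the redex via Prop.~\ref{prop:FactoriseStableExec}, and then exhibiting a short graph-preserving suffix on $\mathcal{G}[G']$---is the same as the paper's, which organises the cases by the rewriting flag $\lambda,\$^0,\$^1,\wn,\oc$. But two of your steps have genuine gaps. First, your justification that the prefix stays in $\mathcal{G}[\square]$ leans entirely on covering plus Prop.~\ref{prop:FactoriseStableExec}.1. Covering handles the links strictly inside the redex (e.g.\ the incoming output of the $\lambda$-node, or the interface of an $\lb{A}$- or $\lb{P}$-node, which the token never visits), but it does \emph{not} handle the entry links themselves, such as $\In(@)$ or $\In(\$^0)$: for those the paper's argument is a stability contradiction --- if the token had visited such a link earlier in the prefix, the corresponding flag would already have been raised and a rewrite would have changed the graph, contradicting stability of the prefix. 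Without that argument the first bullet of the lemma is not established.

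Second, your treatment of the deep rewrites misplaces the redex and consequently the suffix. In the paper the redex of a deep rule \emph{includes} the $\oc$-box at whose principal door the token sits (together with the deep $\lb{A}$-, $\lb{P}_n$- or $\lb{C}_n$-node), so $\ell_0$ is simply the input of that principal door, $\delta_0 = \delta$, and the required suffix on $\mathcal{G}[G']$ is a \emph{single} pass transition over the $\oc$-node, because deep rewrites keep the token's position and data. Your plan to take $\ell_0$ somewhere on a box-path and to ``replay the tail of the original stable prefix plus the deep rewrite step itself'' does not produce a sequence in which $\mathcal{G}[G']$ is never changed, and your appeal to Lem.~\ref{lem:StableExecInCtxt} for that replay is inapplicable: that lemma requires all positions of the transferred execution to lie in the graph context, which is exactly what fails for the tail entering the redex. (Its correct role, as in Prop.~\ref{prop:PreserveStability}, is to transfer the \emph{prefix}.) Finally, the ``hard part'' you defer---checking that the replay triggers no fresh rewrite---is largely moot once the suffixes are written out explicitly, since they consist of one or two pass transitions read off directly from each rule; the substantive work of the proof sits in the prefix containment, not there.
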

\begin{proof}
 The proof is by case analysis of the rewriting flag of the data
 $\delta$.
 Note that we only look at rewrites that actually change the graph.

 When the rewriting flag $f$ is $\lambda$, the redex contains a
 connected pair of an $@$-node and a $\lambda$-node.
 We represent the outgoing output of the $\lambda$-node by
 $\Out(\lambda)$, one output of the $@$-node connected to the
 $\lambda$-node by $\In(\lambda)$, the other output of the
 $@$-node by $\Out(@)$, and the input of the $@$-node by $\In(@)$.
 Lem.~\ref{prop:FactoriseStableExec}.2 implies that the stable
 execution
 $\Init(\mathcal{G}[G]) \to^* ((\mathcal{G}[G],\ell),\delta)$
 can be factorised as below.
 \begin{align*}
  \Init(\mathcal{G}[G])
  &\to^* ((\mathcal{G}[G],\In(@)),(\up,\square,S,B)) \\
  &\to ((\mathcal{G}[G],\Out(@)),(\up,\square,\star \cl S,B)) \\
  &\to^* ((\mathcal{G}[G],\Out(@)),(\dn,\square,X \cl S,B)) \\
  &\to ((\mathcal{G}[G],\In(\lambda))),(\up,\square,@ \cl S,B) \\
  &\to ((\mathcal{G}[G],\Out(\lambda)),(\up,\lambda,S,B))
 \end{align*}
 The four links $\Out(\lambda)$, $\In(\lambda)$, $\Out(@)$ and
 $\In(@)$ cannot happen in the stable prefix execution
 $\Init(\mathcal{G}[G])
 \to^* ((\mathcal{G}[G],\In(@)),(\up,\square,S,B))$,
 except for the last state, otherwise the rewriting flag $\lambda$
 must have been raised in this execution, causing the change of the
 graph.
 The other link in the redex, the incoming output of the
 $\lambda$-node, neither appears in the prefix execution, as no pass
 transition is possible at the link.
 Therefore the prefix execution contains only links in the graph
 context $\mathcal{G}[\square]$, and we can take $\ell_0$ as $\In(@)$
 and $(\up,\square,S,B)$ as $\delta_0$.
 The rewrite yields the state
 $((\mathcal{G}[G'],\ell'),(\up,\square,S,B))
 = ((\mathcal{G}[G'],\ell_0),\delta_0)$.

 When the rewriting flag $f$ is $\$^0$, the redex is a $\$^0$-node
 with two constant nodes ($\val{k_1}$ and $\val{k_2}$) connected.
 Let $\In(\$^0)$, $\In(k_1)$ and $\In(k_2)$ denote the unique input of
 these three nodes, respectively.
 The stable execution
 $\Init(\mathcal{G}[G]) \to^* ((\mathcal{G}[G],\ell),\delta)$ is
 actually in the following form.
 \begin{align*}
  \Init(\mathcal{G}[G])
  &\to^* ((\mathcal{G}[G],\In(\$^0)),(\up,\square,\star \cl S',B')) \\
  &\to ((\mathcal{G}[G],\In(k_2)),
  (\up,\square,\star \cl \star \cl S',B')) \\
  &\to ((\mathcal{G}[G],\In(k_2)),
  (\dn,\square,\val{k_2} \cl \star \cl S',B')) \\
  &\to ((\mathcal{G}[G],\In(k_1)),
  (\up,\square,\star \cl \val{k_2} \cl \star \cl S',B')) \\
  &\to ((\mathcal{G}[G],\In(k_1)),
  (\dn,\square,\val{k_1} \cl \val{k_2} \cl \star \cl S',B')) \\
  &\to ((\mathcal{G}[G],\In(\$^0)),
  (\dn,\$^0,\val{k_1 \$^0 k_2} \cl S',B))
 \end{align*}
 The links $\In(\$^0)$, $\In(k_1)$ and $\In(k_2)$ cannot appear in the
 stable prefix execution
 $\Init(\mathcal{G}[G])
 \to^* ((\mathcal{G}[G],\In(\$^0)),(\up,\square,\star \cl S',B'))$,
 except for the last state,
 otherwise the rewriting flag $\$^0$ must have been raised and have
 triggered the change of the graph.
 As the links $\In(k_1)$ and $\In(k_2)$ are the only ones outside the
 graph context $\mathcal{G}[G]$ and the link $\In(\$^0)$ is input of
 the redex $G$, the prefix execution is entirely in the graph context
 $\mathcal{G}[\square]$.
 We can take $\In(\$^0)$ as $\ell_0$ and
 $(\up,\square,\star \cl S',B')$ as $\delta_0$.
 The rewrite of the redex does not change the position, which means
 $\ell_0 = \ell' = \In(\$^0)$.
 The resulting graph $G'$ consists of one constant node
 ($\val{k_1 \$^0 k_2}$), and we have a single transition
 $((\mathcal{G}[G'],\ell_0)),(\up,\square,\star \cl S',B'))
 \to ((\mathcal{G}[G'],\ell)),
 (\dn,\square,\val{k_1 \$^0 k_2} \cl S',B))$
 to the result state of the rewrite.

 When the rewriting flag is $\$^1$, the redex is a $\$^1$-node with
 one node connected to one of its output links.
 By Lem.~\ref{prop:FactoriseStableExec}.2, the stable execution
 $\Init(\mathcal{G}[G]) \to^* ((\mathcal{G}[G],\ell),\delta)$
 is in the form of:
 \begin{align*}
  \Init(\mathcal{G}[G])
  &\to^* ((\mathcal{G}[G],\In),(\up,\square,\star \cl S',B')) \\
  &\to ((\mathcal{G}[G],\Out_2),
  (\up,\square,\star \cl \star \cl S',B')) \\
  &\to ((\mathcal{G}[G],\Out_2),
  (\dn,\square,X_2 \cl \star \cl S',B')) \\
  &\to ((\mathcal{G}[G],\Out_1),
  (\up,\square,\star \cl X_2 \cl \star \cl S',B')) \\
  &\to ((\mathcal{G}[G],\Out_1),
  (\dn,\square,X_1 \cl X_2 \cl \star \cl S',B')) \\
  &\to ((\mathcal{G}[G],\In),
  (\up,\$^1(n),\star \cl S',B))
 \end{align*}
 where $\In$, $\Out_2$ and $\Out_1$ denote the input, the right output
 and the left output of the $\$^1$-node, respectively.
 These three links are the only links in the redex, and they do not
 appear in the prefix execution
 $\Init(\mathcal{G}[G])
 \to^* ((\mathcal{G}[G],\In),(\up,\square,\star \cl S',B'))$
 except for the last.
 The last state of the prefix execution has the same token position
 and token data as the result of the rewrite.
 
 The rewriting flag $\wn$ is raised by a pass transition over a
 $\oc$-node, principal door of a $\oc$-box.
 The pass transition is the last one of the stable execution
 $\Init(\mathcal{G}[G]) \to^* ((\mathcal{G}[G],\ell),\delta)$, i.e.\
 \begin{align*}
  \Init(\mathcal{G}[G])
  &\to^* ((\mathcal{G}[G],\In),\delta)
  \to ((\mathcal{G}[G],\Out),\delta)
 \end{align*}
 where $\In$ and $\Out$ are respectively the input and output of the
 $\oc$-node.
 Since any $\wn$-rewrite leaves the $\oc$-node in place and keeps the
 position and data of the token, we have a pass transition
 $((\mathcal{G}[G'],\In),\delta)
 \to ((\mathcal{G}[G'],\Out),\delta)$
 to the resulting state of the rewrite.
 It remains to be seen whether the stable prefix execution
 $\Init(\mathcal{G}[G]) \to^* ((\mathcal{G}[G],\In),\delta)$ is
 entirely in the graph context $\mathcal{G}[\square]$.

 First, the links $\In$ and $\Out$ cannot appear in the prefix
 execution except for the last, otherwise there must have been a
 non-stable $\wn$-rewrite.
 When the rewriting flag $\wn$ triggers the contraction rule
 (bottom-right in Fig.~\ref{fig:RewriteTransitionsDeep}) or the
 absorption rule (left in Fig.~\ref{fig:RewriteTransitionsOpenBox}),
 any links in the redex are covered by the link $\In$, by definition
 of graphs.
 Therefore by Lem.~\ref{prop:FactoriseStableExec}.1, these links
 neither appear in the prefix execution.
 When the projection rule (top-right in
 Fig.~\ref{fig:RewriteTransitionsDeep}) occurs, the interface links of
 the $\lb{P}_n$-node do not appear in the prefix execution, as there
 is no pass transition over the $\lb{P}_n$-node.
 Since the input link of the $\lb{P}_n$-node covers all the other
 links in the redex, by Lem.~\ref{prop:FactoriseStableExec}.1, no
 links in the redex have been visited by the token.
 The case of decoupling rule (left in
 Fig.~\ref{fig:RewriteTransitionsDeep}) is similar to the projection
 case.
 Recall that the redex for the decoupling rule excludes the sub-graph
 ($G \circ (\vec{p})^\ddag$ in the figure) that stays the same.
 In decoupling case, all links in the redex do not appear in the
 prefix execution, while the unchanged sub-graph is included by the
 graph context $\mathcal{G}[\square]$ by assumption.

 Finally for the rewriting flag $\oc$, the stable execution
 $\Init(\mathcal{G}[G]) \to^* ((\mathcal{G}[G],\ell),\delta)$
 ends with several pass transitions, including one over a $\oc$-node,
 and a rewrite transition that sets the flag:
 \begin{align*}
  \Init(\mathcal{G}[G])
  &\to^* ((\mathcal{G}[G],\ell_0),\delta_0)
  \to^* ((\mathcal{G}[G],\In),(\up,\square,S,B)) \\
  &\to ((\mathcal{G}[G],\Out),(\up,\wn,S,B))
  \to ((\mathcal{G}[G],\Out),(\up,\oc,S,B))
 \end{align*}
 where $\ell_0$ is an input link of the redex $G$, and
 $\In$ and $\Out$ are respectively the input and output of the
 $\oc$-node.
 Inspecting each $\oc$-rewrites yields a stable sequence
 $((\mathcal{G}[G'],\ell_0),\delta_0)
 \to^* ((\mathcal{G}[G'],\Out),(\up,\oc,S,B))$
 to the result state of the rewrite.
 The inspection also confirms that the stable prefix execution
 $\Init(\mathcal{G}[G]) \to^* ((\mathcal{G}[G],\ell_0),\delta_0)$ is
 entirely in the graph context $\mathcal{G}$, as below.

 In rewrites in the first row of
 Fig.~\ref{fig:RewriteTransitionsClosedBox}, the
 link $\ell_0$ is the only input of the redex and it covers the whole
 redex.
 As the link $\ell_0$ cannot appear in the stable prefix execution,
 neither any link in the redex, by
 Lem.~\ref{prop:FactoriseStableExec}.1.
 In the other rewrites (bottom row of
 Fig.~\ref{fig:RewriteTransitionsClosedBox}), the input link $\ell_0$
 of the redex covers the whole redex except for the other input links.
 The uncovered input links, in fact, must have not been visited by the
 token, otherwise the token has proceeded to a $\oc$-node and
 triggered copying.
\end{proof}

Since stability is trivial for initial states, we can always assume
stability at any states in an execution.
\begin{proposition}[Stability of executions]
 \label{prop:ExecStability}
 In any execution $\Init(G_0) \to^* ((G,\ell),\delta)$, the state
 $((G,\ell),\delta)$ is stable.
\end{proposition}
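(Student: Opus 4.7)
The plan is a short induction on the length of the execution $\Init(G_0) \to^* ((G,\ell),\delta)$, using Proposition (Preservation of stability) as the sole engine.

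For the base case, consider an execution of length $0$. Then $((G,\ell),\delta) = \Init(G_0)$, and the null sequence $\Init(G_0) \to^0 \Init(G_0)$ is trivially a stable execution witnessing stability of the initial state (the underlying graph $G_0$ is not changed, vacuously). Hence $\Init(G_0)$ is stable.

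For the inductive step, suppose the claim holds for all executions of length at most $n$, and consider an execution of length $n+1$, which factors as
\begin{equation*}
\Init(G_0) \to^n ((G',\ell'),\delta') \to ((G,\ell),\delta).
\end{equation*}
By the induction hypothesis applied to the length-$n$ prefix, $((G',\ell'),\delta')$ is stable. Applying Proposition (Preservation of stability) to the final transition $((G',\ell'),\delta') \to ((G,\ell),\delta)$ yields that $((G,\ell),\delta)$ is stable as well, which completes the induction.

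The result is therefore a direct corollary of Preservation of stability; there is essentially no new difficulty here. The only subtle point to mention is that stability of $((G',\ell'),\delta')$ is witnessed by a stable execution ending at that state, and Preservation of stability then supplies a (possibly different) stable execution ending at $((G,\ell),\delta)$ — the two stable executions need not agree on how they reach the intermediate state, but we need only existence. All the genuine work sits in Proposition (Preservation of stability), whose proof in turn is reduced via Lemma (Stable executions in graph context) and Lemma (Stabilisation of actual rewrites) to a case analysis of the rewriting flag; none of that machinery needs to be re-invoked here beyond citing the preservation proposition itself.
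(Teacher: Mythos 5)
Your proof is correct and is exactly the paper's argument: the paper also derives this as a direct consequence of Preservation of stability together with the trivial stability of initial states, merely leaving the routine induction on execution length implicit. Your added remark about the witnessing stable executions not needing to agree is accurate but not needed.
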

\begin{proof}
 This is a consequence of Prop.~\ref{prop:PreserveStability}, since
 any initial states are trivially stable.
\end{proof}

\section{Productivity and safe termination}
\label{app:Productivity}

By assuming both validity and stability, we can prove
\emph{productivity}: namely, a transition is always possible at a
valid and stable intermediate state.
\begin{proposition}[Productivity]
 \label{prop:Productivity}
 If a state is valid, stable and not final, there exists a possible
 transition from the state.
\end{proposition}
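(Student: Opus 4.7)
The plan is to proceed by case analysis on the rewriting flag $f$ in the token data $\delta = (d,f,S,B)$, treating the pass-transition case ($f = \square$) separately from the five rewrite-transition cases ($f \in \{\lambda, \$^0, \$^1(n), \wn, \oc\}$). In each case I would exhibit a transition explicitly, leaning on validity to guarantee that the token's computation stack has the right shape, and on stability to guarantee that the local graph structure needed to fire a rewrite is actually present.

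When $f = \square$, the state is not final, so either $d = \up$ with stack $\star \cl S$ (or $@\cl S$, $\lambda \cl S$, etc.), or $d = \dn$ at a non-root link. I would sub-divide by the direction $d$ and by the label of the node incident to $\ell$ on the side the token is about to traverse, and read off from Fig.~\ref{fig:PassTransitions} that a pass transition applies. Validity (Def.~\ref{def:ValidStates}) is used critically here: if $d = \up$ and $\ell$ has (enriched) type $\rho$, then the top of $S$ lies in $\Qry_\rho$, which matches the stack premises in the pass rules; dually, if $d = \dn$ then the top of $S$ lies in $\Ans_\rho^v$, which matches the answer cases (in particular supplying the scalars, vectors and $\lambda$-tags consumed by the rules). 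The non-finality assumption prevents the only state in which no pass rule fires at the root with $d = \dn$.

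When $f \neq \square$, I would show that the flag itself was necessarily raised by an earlier pass transition in the unique stable execution witnessing stability, and that the corresponding redex is still present. Concretely: by stability, the state is reached by a stable execution, so the final state of any such stable execution must sit immediately after the pass transition that set $f$; for $f = \lambda$ the token is at the input of a $\lambda$-node facing an $@$ on top of $S$ with an $@$-node across the edge; for $f = \$^0$ or $\$^1(n)$ the token is at the input of the corresponding operation node with constant (resp.\ iterated-operation-shaped) arguments below; for $f = \wn$ and $f = \oc$ the token sits on the output of a $\oc$-node, i.e.\ the principal door of a $\oc$-box, so the redexes of Figs.~\ref{fig:RewriteTransitionsOpenBox} and~\ref{fig:RewriteTransitionsClosedBox} are present. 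In each case I would read off an applicable rule from Figs.~\ref{fig:RewriteTransitionsComput}--\ref{fig:RewriteTransitionsClosedBox}; the graph criterion (Def.~\ref{def:GraphCriterion}) rules out degenerate geometries that would block firing.

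The main obstacle will be the case $f \in \{\wn,\oc\}$ at the principal door of a $\oc$-box, where several candidate rules compete (deep decoupling, deep projection, deep contraction, absorption of another box, closing the box, copying) and one must show that at least one always applies. The key step is to examine what sits at each auxiliary door of the box: either some auxiliary door is directly connected to a $\lb{A}$-, $\lb{P}_n$- or $\lb{C}_n$-node satisfying the box-reachability and same-level conditions (firing a deep rule), or all doors are connected to $!$-box doors or $\lb{D}$-nodes (firing absorption, box-closing, contraction-combining, or copying). Ruling out the ``stuck'' configurations requires the bound-name and free-name criteria together with the covering condition of the graph criterion, which guarantee that every $\lb{A}$- and $\lb{P}$-node is covered by a $\wn$-node and so is in principle reachable from a door; the validation map from the free-name criterion additionally ensures that vector dimensions line up so that the rules in Fig.~\ref{fig:RewriteTransitionsDeep} are well-typed on the current graph.
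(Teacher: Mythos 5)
Your overall architecture (case split on the rewriting flag, pass transitions for $f=\square$, exhibiting a redex for each non-$\square$ flag) matches the paper's, and your treatment of the non-$\square$ flags is broadly on the right track. But there is a genuine gap in the $f=\square$ case. You write that you would ``read off from Fig.~\ref{fig:PassTransitions} that a pass transition applies'' for whatever node is incident to $\ell$, with non-finality only excluding the downward-at-the-root configuration. This is not true: there are several position/direction combinations for which \emph{no} pass transition exists at all, no matter what the stack contains --- the incoming output of a $\lambda$-node travelled downwards, the left output of an $@$-node with a $\lambda$-value on the stack (the token ``never exits an application node''), the input of a $\wn$-node travelled upwards, any interface link of an $\lb{A}$- or $\lb{P}$-node, and the downward outputs of $\lb{D}$-, $\lb{C}_n$- and $\wn$-nodes. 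The real content of the paper's proof of this half is a sequence of contradiction arguments showing that a valid, \emph{stable} state can never have its token at such a position: each argument backtracks the unique stable execution using the factorisation lemma (Prop.~\ref{prop:FactoriseStableExec}) and the covering condition of Def.~\ref{def:GraphCriterion} to show that reaching such a position would have forced an earlier flag-raising and hence a graph rewrite, contradicting stability. Your proposal contains no mechanism for ruling these configurations out, so as written the $f=\square$ case does not go through.

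A second, smaller gap: you claim validity guarantees the stack has ``the right shape'' for the pass rules. Validity (Def.~\ref{def:ValidStates}) only constrains the \emph{top} element of the computation stack, but the downward pass over the left output of a $\$^0$-node consumes the top \emph{three} elements ($X_1 \cl X_2 \cl \star \cl S''$). Establishing the shape of the second and third elements again requires backtracking the stable execution with Prop.~\ref{prop:FactoriseStableExec}.2 and invoking validity at the earlier intermediate states, not just at the current one. Similarly, in the $f=\wn$ case your dichotomy on what sits at the auxiliary doors omits the two configurations the paper must explicitly exclude --- a definitive auxiliary door connected to another $\wn$-node, or to the incoming output of a $\lambda$-node --- both of which are again eliminated by covering-plus-stability contradictions rather than by the name criteria you cite. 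In short: the missing idea throughout is that stability, via the factorisation lemma, is the tool that makes ``stuck'' positions unreachable; validity alone cannot do this work.
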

\begin{proof}[Proof in Sec.~\ref{app:proof:Productivity}]
\end{proof}

We can obtain a sufficient condition for the safe termination of an
execution, which is satisfied by the translation of any program.
\begin{proposition}[Safe termination]
 \label{prop:SafeTermination}
 Let $\Init(G_0)$ be an initial state whose graph $G_0$ meets the name
 criteria and the graph criterion.
 If an execution $\Init(G_0) \to^* ((G,\ell),\delta)$ can be followed
 by no transition, the last state $((G,\ell),\delta)$ is a final
 state.
\end{proposition}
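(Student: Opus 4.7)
The proof is essentially a direct assembly of three previously established propositions, so the plan is straightforward. The idea is to show that the state $((G,\ell),\delta)$ reached at the end of the execution satisfies both hypotheses of Prop.~\ref{prop:Productivity} (validity and stability), and then invoke its contrapositive: since no transition can follow, the state cannot be a non-final valid-and-stable state, hence it must be final.

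The first step is to obtain validity. Since the initial graph $G_0$ is assumed to satisfy the name criteria and the graph criterion, Prop.~\ref{prop:ExecValidityCondition} (Validity condition of executions) applies directly to the execution $\Init(G_0) \to^* ((G,\ell),\delta)$ and yields that $((G,\ell),\delta)$ is valid. The second step is to obtain stability: this is immediate from Prop.~\ref{prop:ExecStability} (Stability of executions), which tells us that every state reachable from an initial state via a sequence of transitions is stable.

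With both validity and stability in hand, the final step is to apply the contrapositive of Prop.~\ref{prop:Productivity} (Productivity). That proposition states that any valid, stable, non-final state admits a transition. Since by assumption no transition is possible from $((G,\ell),\delta)$, and we have just shown this state is both valid and stable, the only remaining possibility is that it is a final state, which is the desired conclusion.

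There is essentially no obstacle at this level: the statement of safe termination is designed to be a corollary of productivity, and all the substantial work lies in the earlier preservation results (Prop.~\ref{prop:PreserveValidity}, Prop.~\ref{prop:PreserveStability}) and in the case analysis of Prop.~\ref{prop:Productivity} itself. The role of safe termination here is to package these pieces into a statement about executions of translations of programs, for which the initial graph automatically satisfies the name and graph criteria, so that the hypothesis is discharged whenever we start from a well-typed program.
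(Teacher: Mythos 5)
Your proof is correct and follows exactly the same route as the paper: validity from Prop.~\ref{prop:ExecValidityCondition}, stability from Prop.~\ref{prop:ExecStability}, and the contrapositive of Prop.~\ref{prop:Productivity} to conclude finality. The paper states this as a one-line "direct consequence" of those three results; your write-up merely spells out the same argument.
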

\begin{proof}
 This is a direct consequence of
 Prop.~\ref{prop:ExecValidityCondition},
 Prop.~\ref{prop:ExecStability} and Prop.~\ref{prop:Productivity}.
\end{proof}
\begin{proposition}[Safe termination of programs]
 For any closed program $t$ such that
 $- \mid - \mid \vec{p} \vdash t : T$, if an execution on the
 translation $(- \mid - \mid \vec{p} \vdash t : T)^\ddag$ can be
 followed by no transition, the last state of the execution is a final
 state.
\end{proposition}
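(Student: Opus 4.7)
The plan is to derive this corollary from the stronger Prop.~\ref{prop:SafeTermination}, which applies to any initial state whose underlying graph meets the name criteria (Def.~\ref{def:BoundNameCriterion}, Def.~\ref{def:FreeNameCriterion}) and the graph criterion (Def.~\ref{def:GraphCriterion}). So the whole task reduces to showing that, for any well-typed closed program $- \mid - \mid \vec{p} \vdash t : T$, the graph $G_0 = (- \mid - \mid \vec{p} \vdash t : T)^\ddag$ satisfies those three criteria. Once that is proved, Prop.~\ref{prop:SafeTermination} gives the result directly: a stuck execution from $\Init(G_0)$ must already be at a final state.

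To verify the criteria I would induct on the typing derivation, following the clauses of the inductive translation in Fig.~\ref{fig:InductiveTranslation}. The free-name criterion is trivial, since the remark after Fig.~\ref{fig:InductiveTranslation} tells us that the translation contains no $\lb{P}_n$-nodes and no $\val{\vec q}$-nodes; consequently no free names appear and the validation map $v \colon \emptyset \to \N$ vacuously satisfies Def.~\ref{def:FreeNameCriterion}. For the bound-name criterion, each abductive decoupling rule in the type system introduces a new name $a$ into the context $A,a$, and the induction hypothesis together with the usual convention that names are fresh (adjust by a name permutation if necessary, using Prop.~\ref{prop:AlphaEquivBisim} to justify this is harmless) ensures uniqueness; the scope condition follows because in the translation of $\abd{a}{f}{x}{T'}{t}$ the node $\lb{A}$ together with the subgraph translating $t$ are placed inside a $\oc$-box, and the name $a$ occurs only on links internal to that box.

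The graph criterion splits into acyclicity and covering. Acyclicity holds compositionally: each clause of the translation produces a DAG-like layering in which the only way to form a cycle along a box-path would require reconnecting an output of the current subgraph back into one of its inputs, but the inductive clauses never introduce such feedback, and in particular provisional-type links $\rot\oc\F$ only flow upwards from $\rot\oc$-nodes attached to constants. The covering condition is the case that needs the most care: for each $\lambda$-node introduced by the abstraction clause, its incoming output link (the body's root after boxing) is by construction routed through the $\oc$-node at the principal door, which in turn is reached only via the $\lambda$-node's input link, so the covering holds; for each $\lb{A}$-node, the decoupling clause places it inside a $\oc$-box whose principal door becomes a $\wn$-node only after box-rewrites, but at translation time the $\lb{A}$-node is already covered by the $\wn$-node corresponding to the enclosing abstraction or the program root — this is the step I expect to be the main obstacle, since it needs a careful case analysis checking that every translation clause propagates covering rather than disrupting it.

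Finally, with all three criteria verified for $G_0$, Prop.~\ref{prop:SafeTermination} applies verbatim: the hypothesis ``can be followed by no transition'' forces the last state of the execution to be a final state. Since the two auxiliary arguments above are entirely structural on the type derivation, and the interesting dynamic content has already been absorbed into Prop.~\ref{prop:Productivity}, Prop.~\ref{prop:PreserveValidity}, and Prop.~\ref{prop:PreserveStability}, the proof of the corollary itself is little more than bookkeeping once the criteria-preservation lemma for the translation is in place.
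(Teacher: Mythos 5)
Your proposal follows exactly the paper's route: reduce to Prop.~\ref{prop:SafeTermination} and verify inductively that the translation $(- \mid - \mid \vec{p} \vdash t : T)^\ddag$ meets the name criteria and the graph criterion (the paper's own proof is precisely this, noting in passing, as you do, that all names in the translation are bound so the free-name criterion is vacuous). Your additional detail on the covering condition for $\lambda$- and $\lb{A}$-nodes is a faithful elaboration of what the paper leaves as ``can be checked inductively,'' not a different argument.
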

\begin{proof}
 The translation $(- \mid - \mid \vec{p} \vdash t : T)^\ddag$ fulfills
 the name criteria and the graph criterion, which can be checked
 inductively.
 Note that all names in the translation are bound.
 This proposition is hence a consequence of
 Prop.~\ref{prop:SafeTermination}.
\end{proof}

\subsection{Proof of Prop.~\ref{prop:Productivity}}
\label{app:proof:Productivity}
 First we assume that the state has rewriting flag $\square$.
 Failure of pass transitions can be caused by either of the following
 situations: (i) the position is eligible but the token data is not
 appropriate, or (ii) the position is not eligible.

 The situation (i) is due to the wrong top elements of a
 computation/box stack.
 In most cases, it is due to the single top element of the
 computation stack, or top elements of the box stack, which can be
 desproved easily by validity, or respectively, stability.
 The exception is when the token points downwards at the left output
 of a primitive operation node ($\$$), and the top three elements of
 the computation stack have to be checked.
 Let $\In$, $\Out_1$ and $\Out_2$ denote the input, the left output
 and the right output of the $\$$-node, respectively.
 By stability and Lem.~\ref{prop:FactoriseStableExec}.2, the state is
 the last state of the following stable execution.
 \begin{align*}
  \Init(G)
  \to^* ((G,\Out_1), (\up,\square,\star \cl S,B))
  &\to^* ((G,\Out_1), (\dn,\square,X_1 \cl S,B))
 \end{align*}
 The intermediate state has to be the result of a pass transition,
 i.e.\
 \begin{align*}
  \Init(G)
  \to^* ((G,\Out_2), (\dn,\square,S,B))
  \to ((G,\Out_1), (\up,\square,\star \cl S,B))
  &\to^* ((G,\Out_1), (\dn,\square,X_1 \cl S,B)).
 \end{align*}
 Since the last state is valid, the graph $G$ fulfills the criteria
 (Def.~\ref{def:BoundNameCriterion}, Def.~\ref{def:FreeNameCriterion}
 and Def.~\ref{def:GraphCriterion}) and any states in this
 execution is valid by Prop.~\ref{prop:ExecValidityCondition}.
 Therefore the computation stack $S$ is in the form of
 $S = X_2 \cl S'$, and using Lem~\ref{prop:FactoriseStableExec}.2
 again yields:
 \begin{align*}
  \Init(G)
  &\to^* ((G,\Out_2), (\up,\square,\star \cl S',B)) \\
  &\to^* ((G,\Out_2), (\dn,\square,X_2 \cl S',B))
  \to ((G,\Out_1), (\up,\square,\star \cl X_2 \cl S',B)) \\
  &\to^* ((G,\Out_1), (\dn,\square,X_1 \cl X_2 \cl S',B)).
 \end{align*}
 The first intermediate state, again, has to be the result of a pass
 transition, i.e.\
 \begin{align*}
  \Init(G)
  &\to^* ((G,\In), (\up,\square,S',B))
  \to ((G,\Out_2), (\up,\square,\star \cl S',B)) \\
  &\to^* ((G,\Out_2), (\dn,\square,X_2 \cl S',B))
  \to ((G,\Out_1),
  (\up,\square,\star \cl X_2 \cl S',B)) \\
  &\to^* ((G,\Out_1),
  (\dn,\square,X_1 \cl X_2 \cl S',B)).
 \end{align*}
 Since the first intermediate state of the above execution is valid,
 the computation stack $S'$ is in the form of $S' = \star \cl S''$,
 which means $S = X_1 \cl X_2 \cl \star \cl S''$.
 Moreover validity ensures that the elements
 $X_1$ and $X_2$ are eligible for a pass transition from the last
 state; in particular vector operations $+$, $\times$ and $\cdot$
 are always given two vectors of the same size.

 We move on to the situation (ii), where the token position is not
 eligible to pass transitions.
 To disprove this situation, we assume a valid, stable and non-final
 state from which no pass transition is possible, and derive
 contradiction.

 The first case is the state $((G,\ell),(\dn,\square,S,B))$ where the
 position $\ell$ is the incoming output of a $\lambda$-node.
 By the graph criterion, the position is covered by the
 outgoing output $\Out(\lambda)$ of the $\lambda$-node, and
 Lem.~\ref{prop:FactoriseStableExec} implies the following stable
 execution.
 \begin{align*}
  \Init(G) \to^* ((G,\Out(\lambda)),(\up,f,S',B'))
  \to^* ((G,\ell),(\dn,\square,S,B))
 \end{align*}
 Due to stability, the intermediate state
 $((G,\Out(\lambda)),\delta)$ must be the
 result of a pass transition over the $\lambda$-node.
 However the transition sets $\lambda$ as the rewriting flag $f$,
 which triggers elimination of the $\lambda$-node and contradicts
 stability.

 The second case is the state $((G,\ell),(\dn,\square,S,B))$ where the
 position $\ell$ is the left output of an $@$-node.
 By validity the computation stack $S$ is in the form of
 $S = \lambda \cl S'$, and Lem.~\ref{prop:FactoriseStableExec}.2 gives
 the stable execution
 \begin{align*}
  Init(G) \to^* ((G,\ell),(\up,\square,\star \cl S',B))
 \to^* ((G,\ell),(\dn,\square,\lambda \cl S',B))
 \end{align*}
 to the state.
 The only transitions that can yield the intermediate state, at the
 left output of the $@$-node, are rewrite transitions that change the
 graph, which is contradiction.

 The third case is the state $((G,\ell),(\up,\square,S,B))$ where the
 position $\ell$ is the input of a $\wn$-node, an auxiliary door of a
 $\oc$-box.
 Since the link $\ell$ is covered by the root of the $\oc$-box, by
 Lem.~\ref{prop:FactoriseStableExec}.1, the token has visited its
 principal door, i.e.\ $\oc$-node.
 This visit must have raised rewriting flag $\wn$.
 Because of the presence of the $\wn$-node, the flag must have
 triggered a rewrite that eliminates the $\wn$-node, which is
 contradiction.

 The fourth case is when the position $\ell$ is one of the interface
 (i.e.\ either input or output) links of an $\lb{A}$-node or a
 $\lb{P}$-node.
 By the covering condition of the graph criterion, this case reduces
 to the previous case.

 The last case is the state $((G,\Out(X)),(\dn,\square,S,B))$ where
 the position $\Out(X)$ is the output of an $X$-node, for
 $X \in \{ \lb{D}, \lb{C}_n, \wn \mid n \in \N\}$.
 If $X = \wn$, i.e.\ the node is an auxiliary door of a $\oc$-box, the
 position is covered by the root of the $\oc$-box.
 This reduces to the previous case.
 If not, i.e.\  $X \in \{ \lb{D}, \lb{C}_n \mid n \in \N\}$,
 Lem.~\ref{prop:FactoriseStableExec}.2 gives the stable execution
 \begin{align*}
  Init(G) \to^* ((G,\Out(X)),(\up,\square,S,B))
 \to^* ((G,\Out(X)),(\dn,\square,S,B))
 \end{align*}
 to the state.
 The graph criterion (Def.~\ref{def:GraphCriterion}) implies the
 $X$-node belongs to an acyclic box-path of argument types.
 By typing, any maximal acyclic box-path ends with either a
 $\lambda$-node or a $\oc$-node, and the token must visit this node
 in the second half of the stable execution.
 This implies contradiction as follows.
 If the last node of the maximal box-path is a $\lambda$-node, the
 token must visit the incoming output link of the $\lambda$-node, from
 which the token cannot been proceeded.
 If the last node is a $\oc$-node, the token must pass the $\oc$-node
 and trigger rewrites that eliminate the $X$-node.
 
 This completes the first half of the proof, where we assume the state
 has rewriting flag $\square$.
 In the second half, we assume that the state has a rewriting flag
 which is not $\square$, and show the graph of the state contains an
 appropriate redex for the rewriting flag.

 When the flag is $\lambda$, by stability, the token is at the
 outgoing output of a $\lambda$-node which is connected to the left
 output of an $@$-node.
 Since the graph of the state has no output
 (Def.~\ref{def:GraphStates}), both output links of the $\lambda$-node
 are connected to some nodes.
 Therefore the $\lambda$-rewrite is possible.
 We can reason in the same way when the flag is $\$^1$.
 Rewrite transitions for rewrite flags $\$^0$ are exhaustive.

 When the flag is $\oc$, by stability, the state is the result of the
 $\wn$-rewrite that only changes the flag.
 This means the token is at the root of a $\oc$-box with no definitive
 auxiliary doors ($\wn$-nodes).
 Rewrite transitions for flag $\oc$ are exhaustive for the closed
 $\oc$-box.

 When the flag is $\wn$, the token is at the root of a $\oc$-box,
 which we here call ``inhabited $\oc$-box.''
 By typing, output links of definitive auxiliary doors of the
 inhabited $\oc$-box can be connected to $\lb{C}_n$-nodes,
 $\lb{P}_n$-nodes, $\lb{A}$-nodes, $\oc$-nodes, $\wn$-nodes or
 $\lambda$-nodes.
 However $\wn$-nodes and $\lambda$-nodes are not the case, as we see
 below.

 First, we assume that a definitive auxiliary door of the inhabited
 $\oc$-box is connected to another $\wn$-node.
 This means that the inhabited $\oc$-box is inside
 another $\oc$-box, and therefore the token position is covered by the
 root of the outer $\oc$-box.
 By Lem.~\ref{prop:FactoriseStableExec}.1, the token must
 have visited the principal door of the outer $\oc$-box and triggered
 the change of the graph, which contradicts stability.

 Second, we assume that a definitive auxiliary door of the inhabited
 $\oc$-box is connected to the incoming output link of a
 $\lambda$-node.
 This $\lambda$-node cannot be inside the inhabited $\oc$-box, since
 no $\oc$-box has incoming output.
 Clearly there exists a box-path from the token position to the root
 of the $\oc$-tree.
 Therefore the token position, the unique input of the $\oc$-box, is
 covered by the input of the $\lambda$-node; otherwise the graph
 criterion is violated.
 This covering implies that the token must have passed the
 $\lambda$-node upwards and triggered its elimination, by
 Lem.~\ref{prop:FactoriseStableExec}.1, which contradicts stability.

 The last remark for the rewriting flag $\wn$ is about the
 replacement of nodes in the decoupling rule.
 Typing of links ensures that the replacement never fails and produces
 a correct graph.
 In particular the sub-graph $G$ in
 Fig.~\ref{fig:RewriteTransitionsDeep} only consists of
 $\rot{\lb{C}}_n$-nodes and $\rot{\wn}$-nodes.
 Finally, in conclusion, the state with rewriting flag $\wn$ is always
 eligible for at least one of the rules in
 Fig.~\ref{fig:RewriteTransitionsDeep} and
 Fig.~\ref{fig:RewriteTransitionsOpenBox}.
 \qed

\section{Provisional contexts and congruence of execution}
\label{app:ProvisionalCtxtAndCongruence}

To deal with shared provisional constants that arise in an execution,
we introduce another perspective on composite graphs which takes
$\rot{\lb{C}}$-nodes into account.
\begin{definition}[Provisional contexts]
 A graph context of the form
 \begin{center}
  \includegraphics{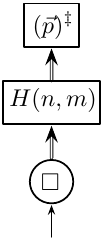}
 \end{center}
 denoted by $\mathcal{P}[\square]$, is a
 \emph{provisional context} if it satisfies the following.
 \begin{itemize}
  \item The graph $H(n,m)$ consists solely of $\rot{\lb{C}}$-nodes,
	and $\vec{p} \in \F^m$.
  \item For any graph $G(1,n)$ that fulfills the graph criterion, the
	graph $\mathcal{P}[G]$ also fulfills the graph criterion.
 \end{itemize}
\end{definition}

In the above definition, the second condition implies that the graph
$H$ contains no loops.
We sometimes write $\mathcal{P}[\square]_{\tilde{T}}^n$ to make
explicit the input type and the number of output links of the hole.
Note that a graph $\mathcal{P}[G]$, where $\mathcal{P}[\square]$ is a
provisional context and $G$ is a definitive graph, is a composite
graph.
As an extension of Prop.~\ref{prop:FormPreservation}, we can see a
provisional context is preserved by transitions.
\begin{proposition}[Provisional context preservation]
 \label{prop:ProvisionalCtxtPreservation}
 When a transition sends a graph $G$ to a graph $G'$, if
 the old graph $G$ can be decomposed as $\mathcal{P}[H]$ where
 $\mathcal{P}[\square]$ is a provisional context and $H(1,n)$ is a
 definitive graph, the new graph $G'$
 can be also decomposed as $\mathcal{P}[H']$ for some definitive graph
 $H'(1,n)$, using the same provisional context.
\end{proposition}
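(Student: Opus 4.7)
The plan is to prove this by case analysis over the type of the transition, strengthening the reasoning already used to establish Prop.~\ref{prop:FormPreservation}. The earlier form-preservation result guarantees that the $\rot{\oc}$-nodes holding provisional constants are untouched; what we additionally need here is that the $\rot{\lb{C}}$-nodes sitting above them (and the way they are wired into the $\rot{\oc}$-nodes) are also untouched, so that the very same provisional context $\mathcal{P}[\square]$ decomposes $G'$. Concretely, from $G = \mathcal{P}[H]$ we read off that every $\rot{\oc}$- and $\rot{\lb{C}}$-node of $G$ belongs to $\mathcal{P}$ (since $H$ is definitive and therefore contains no $\rot{\oc}$-nodes), and it suffices to show that no transition touches these nodes or their adjacent edges.

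First I would dispose of pass transitions immediately: they leave the underlying graph unchanged, so the decomposition is trivially preserved. For local rewrite transitions (Fig.~\ref{fig:RewriteTransitionsComput}, Fig.~\ref{fig:RewriteTransitionsOpenBox}, Fig.~\ref{fig:RewriteTransitionsClosedBox}), I would simply inspect each rule and verify two things: (i) the redex consists only of nodes whose labels lie in the definitive fragment (i.e.\ $\lambda$, $@$, $\$$, $\val{p}$, $\val{\vec p}$, $\oc$, $\wn$, $\lb{D}$, $\lb{C}_n$, $\lb{A}$, $\lb{P}_n$), never of $\rot{\lb{C}}$- or $\rot{\oc}$-nodes; and (ii) the replacement subgraph contains no new $\rot{\oc}$-nodes, so it remains definitive. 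For the box-closing and box-copying rules this needs a brief argument that the duplicated $\oc$-box itself is definitive, which follows from the assumption that $H$ is definitive and the fact that $\rot{\oc}$-nodes never appear strictly inside an $\oc$-box arising from the translation.

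The delicate case is the deep decoupling rule of Fig.~\ref{fig:RewriteTransitionsDeep}, because its local picture mentions $(\vec p)^\ddag$ explicitly. Here I would exploit the fact, stated directly after the rule, that the redex is defined to be the $\oc$-box with its doors and the connected $\lb{A}$-node \emph{excluding} the unchanged subgraph $G \circ (\vec p)^\ddag$. So the $\rot{\oc}$- and $\rot{\lb{C}}$-nodes sitting below the auxiliary doors of the rewritten $\oc$-box are by definition not modified; they remain wired to the same (former provisional) auxiliary doors, which after the rewrite are now inputs of a $\lb{P}_n$-node still inside the definitive graph. The remaining deep rules (projection and contraction of Fig.~\ref{fig:RewriteTransitionsDeep}) duplicate a subgraph $H$ together with the refresh of names, but their action is confined to the definitive part and to interfaces that connect to the definitive graph, so again $\mathcal{P}$ is left alone and the resulting subgraph contains no $\rot{\oc}$-nodes.

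The main obstacle I anticipate is making the phrase ``the provisional context is preserved'' rigorous in the face of deep rewrites that visibly reshape edges with provisional type $\rot{\oc}\F$ running inside the definitive part (e.g.\ the copies of $\rot{\wn}$-nodes generated by duplicating an $\oc$-box). For this I would separate two notions that the notation conflates: edges \emph{carrying} provisional type versus nodes of $\mathcal{P}$ itself. Only the latter are constrained, and they appear strictly below the hole of $\mathcal{P}[\square]$; the former may be freely duplicated or rerouted within $H$. Once this distinction is made, the new definitive graph $H'$ can be defined as $G'$ with the $\rot{\lb{C}}$- and $\rot{\oc}$-nodes of $\mathcal{P}$ excised, and the identity $G' = \mathcal{P}[H']$ is then immediate from the inspection above.
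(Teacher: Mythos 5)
Your argument is essentially the paper's own proof, which simply combines Prop.~\ref{prop:FormPreservation} with the observation that no transition alters the existing $\rot{\lb{C}}$-nodes and $\rot{\oc}$-nodes constituting $\mathcal{P}[\square]$; your case analysis, in particular the treatment of the decoupling redex and the distinction drawn in your last paragraph between nodes of $\mathcal{P}$ and provisional-type material inside $H$, is a sound elaboration of that one-line argument. The only slip is the early claim that every $\rot{\lb{C}}$-node of $G$ lies in $\mathcal{P}$ --- definitiveness of $H$ excludes only $\rot{\oc}$-nodes, and the copying and decoupling rewrites do create, duplicate or relabel $\rot{\lb{C}}$-nodes strictly inside the definitive part --- but your final paragraph already accounts for this, so the conclusion stands.
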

\begin{proof}
 In addition to Prop.~\ref{prop:FormPreservation}, no transition
 changes existing $\rot{\lb{C}}$-nodes.
 Therefore a provisional context is preserved in any transition.
\end{proof}

Since our operational semantics is based on low-level graphical
representation and local token moves, rather than structured
syntactical representation, any structural reasoning requires extra
care.
For example, evaluation of a term of function type is not exactly the
same, depending on whether the term appears in the argument position
or the function position of function application $\app{t}{u}$.
The token distinguishes the evaluation using elements $\star$ and $@$
of a computation stack, which is why we explicitly require termination
in definition of $P_{T_1 \to T_2}$.
Moreover congruence of execution is not trivial.

To prove a specific form of congruence, we begin with ``extracting'' a
provisional context out of a graph context.
Let $\mathcal{G}[\square]$ be a graph context, such that for any
definitive graph $G(1,n)$ of type $T$, the graph $\mathcal{G}[G]$ is
a composite graph.
We can decompose the graph context $\mathcal{G}[\square]$ as
\begin{align*}
 \mathcal{G}[\square]
 =
 \includegraphics{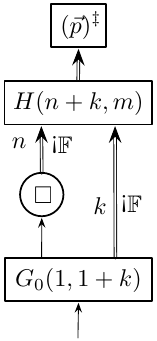}
\end{align*}
where the graph $H(n+k,m)$ consists of all reachable nodes from
output links of the hole $\square$.
By the assumption on the graph context $\mathcal{G}[\square]$, all
the output links of the hole $\square$ have the provisional type
$\rot{\oc}\F$.
Therefore typing ensures the graph $H$ in fact consists of only
$\rot{\lb{C}}$-nodes and $\rot{\wn}$-nodes.
Therefore, we can turn the graph $H \circ (\vec{p}^\ddag)$ to a
provisional context
$\mathcal{P}[\square]$, by dropping all $\rot{\wn}$-nodes and
adding $k$ $\rot{\lb{C}}_0$-nodes, as below.
\begin{align*}
 \mathcal{P}[\square]
 =
\includegraphics{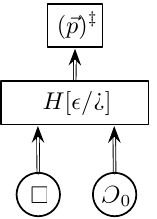}
\end{align*}
We say the provisional context $\mathcal{P}[\square]$ is ``induced''
by the graph context $\mathcal{G}[\square]$.
\begin{proposition}[Congruence of execution]
 \label{prop:ExecutionCongruence}
 Let $\mathcal{G}[G]$ be a composite graph where $G(1,n)$ is a
 definitive graph of type $T$, and $e$ be the root of the
 hole of the graph context $\mathcal{G}[\square]$.
 Assume an execution
 $\Init(\mathcal{P}[G])
 \to^* ((\mathcal{P}[G'],\ell'),(d,f,S',B'))$,
 where the provisional context $\mathcal{P}[\square]$ is induced by
 the graph context $\mathcal{G}[\square]$.
 Then, for any stacks $S$ and $B$, there exists a sequence
 \begin{align}
  ((\mathcal{G}[G],\ell),(\up,\square,\star \cl S,B))
  \to^* ((\mathcal{G}[G'],\ell'),(d,f,S'',B''))
  \label{eq:1}
 \end{align}
 of transitions, where $S'' = S' {::} S$ and $B'' = B' {::} B$.
 The decomposition ${::}$ replaces the bottom element $\square$ of
 the first stack with the second stack.
 Moreover, if $T$ is a function type, there also exists a sequence
 \begin{align}
  ((\mathcal{G}[G],\ell),(\up,\square,@ \cl S,B))
  \to^* ((\mathcal{G}[G'],\ell'),\delta')
  \label{eq:2}
 \end{align}
 of transitions, for some token data $\delta'$.
\end{proposition}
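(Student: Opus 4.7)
The plan is to proceed by induction on the length of the assumed execution $\Init(\mathcal{P}[G]) \to^* ((\mathcal{P}[G'],\ell'),(d,f,S',B'))$. In the base case (length zero), the root of the composite graph $\mathcal{P}[G]$ coincides with the root of the hole of $\mathcal{P}[\square]$, namely $\ell$; moreover $(\star \cl \square) {::} S = \star \cl S$ and $\square {::} B = B$, so the required target state equals the source state and the empty sequence witnesses the claim.

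For the inductive step, suppose the last transition sends $((\mathcal{P}[G_0],\ell_0),(d_0,f_0,S_0,B_0))$ to $((\mathcal{P}[G'],\ell'),(d,f,S',B'))$. By the inductive hypothesis some sequence in $\mathcal{G}[G_0]$ reaches $((\mathcal{G}[G_0],\ell_0),(d_0,f_0,S_0 {::} S,B_0 {::} B))$, and I extend it by a single matching transition. Two observations do most of the work. First, a \emph{stack-extension} property: every rule in Figs.~\ref{fig:PassTransitions} through \ref{fig:RewriteTransitionsClosedBox} inspects and modifies only a constant bounded number of top elements of each stack, so padding at the bottoms with $S$ and $B$ neither disables a rule nor alters its effect, and the outgoing stacks are automatically $S' {::} S$ and $B' {::} B$. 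Second, a \emph{context-agnosticism} property: Prop.~\ref{prop:ProvisionalCtxtPreservation} shows that every transition preserves the provisional context, so each rewrite redex sits inside the definitive subgraph, or inside the $\rot{\lb{C}}$-region immediately above the hole, which coincides in $\mathcal{P}$ and $\mathcal{G}$ by construction of the induced context. The parts of $\mathcal{P}$ and $\mathcal{G}$ that differ are therefore never touched, and the very same rewrite applies in $\mathcal{G}[G_0]$.

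For the function-type second sequence, I would replay the same transitions starting from $((\mathcal{G}[G],\ell),(\up,\square,@ \cl S,B))$. The symbol $@$ sits at the bottom of the computation stack throughout the simulated steps, so by the stack-extension argument it never becomes the top during the replay, every rule still fires identically, and we arrive at $((\mathcal{G}[G'],\ell'),\delta')$ for whatever data $\delta'$ the replay happens to produce; no matching of $\delta'$ with a distinguished value is required.

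The main obstacle I anticipate is justifying context-agnosticism for \emph{deep rewrites}, whose redex is not spatially adjacent to the token. The argument is by typing: the targeted $\lb{A}$, $\lb{P}_n$ and $\lb{C}_n$-nodes of the three deep rules carry definitive types that cannot appear in the purely provisional region of the ambient context (which contains only $\rot{\lb{C}}$-, $\rot{\wn}$- and $\rot{\oc}$-nodes), and the $\oc$-box driving the box-reachability search is likewise definitive. Consequently each deep redex on $\mathcal{P}[G_0]$ is recognisably the same deep redex on $\mathcal{G}[G_0]$, and the decoupling rule's reference to the unchanged sub-graph $G \circ (\vec{p})^\ddag$ needs no adaptation because both contexts retain the same provisional constants with the same values.
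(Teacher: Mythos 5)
Your overall strategy---induction on the length of the execution on $\mathcal{P}[G]$, simulating it step by step in $\mathcal{G}[G]$ with padded stacks---is the same shape as the paper's argument, and the stack-extension observation is sound. But there are two genuine gaps. First, your ``context-agnosticism'' step rests on the claim that the region above the hole ``coincides in $\mathcal{P}$ and $\mathcal{G}$ by construction of the induced context.'' It does not: the induced provisional context is obtained from $\mathcal{G}[\square]$ by \emph{dropping all $\rot{\wn}$-nodes} (and adding $\rot{\lb{C}}_0$-nodes), so $\mathcal{G}$ contains auxiliary doors that $\mathcal{P}$ lacks. Consequently a single transition on $\mathcal{P}[G_0]$ need not correspond to a single transition on $\mathcal{G}[G_0]$: the token may have to traverse extra $\rot{\wn}$-nodes, and a link of $\mathcal{P}[G]$ may have several corresponding links in $\mathcal{G}[G]$. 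The paper makes this precise as a \emph{weak} simulation (one step maps to $\to^*$, and the target position is only required to be \emph{one of} the corresponding links); your one-step-to-one-step induction breaks exactly at these dropped nodes and must be weakened accordingly.

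Second, the function-type case is handled incorrectly. In the state $((\mathcal{G}[G],\ell),(\up,\square,@ \cl S,B))$ the symbol $@$ is at the \emph{top} of the computation stack, not the bottom; it is precisely the element that replaces the $\star$ of sequence~(\ref{eq:1}), and it resurfaces as the top element whenever the replayed stack returns to $\star \cl S$. At that moment a pass transition over a $\lambda$-node behaves differently for $\star$ (push $\lambda$, turn the token around) than for $@$ (raise the rewriting flag $\lambda$), so the replay does \emph{not} ``fire identically.'' The missing argument, supplied in the paper, is that such a $\lambda$-pass with the distinguished element on top can occur at most once in sequence~(\ref{eq:1})---since by stability nothing can pop the resulting $\lambda \cl \square$---and only as the very last transition, because any transition following it would require nodes ($\oc$, $\rot{\oc}$, $\rot{\wn}$, $\rot{\lb{D}}$) whose presence contradicts $T = T_1 \to T_2$. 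Sequence~(\ref{eq:2}) is then obtained either by replacing $\star$ with $@$ throughout (if no such transition occurs) or by truncating that final transition, which changes neither the graph nor the position. Without this analysis the existence of sequence~(\ref{eq:2}) ending at the same graph $\mathcal{G}[G']$ and position $\ell'$ is not established.
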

\begin{proof}
 By the way the provisional context $\mathcal{P}[\square]$ is induced
 by the graph context $\mathcal{G}[\square]$,
 any link of the graph $\mathcal{P}[G]$ has at least one (canonically)
 corresponding link in the graphs $G$, $H$ and $(\vec{p})^\ddag$.
 A link may have several corresponding links, because of
 $\rot{\wn}$-nodes dropped in the induced provisional context
 $\mathcal{P}[\square]$.
 The sequence (\ref{eq:1}) is given as a consequence of the following
 weak simulation result, where weakness is due to $\rot{\wn}$-nodes.
 \begin{description}
  \item[Weak simulation.]
	     Let
	     $((\mathcal{P}[H],\ell),(d,f,S,B))
	     \to ((\mathcal{P}[H'],\ell'),(d',f',S',B'))$
	     be a single transition of the assumed execution
	     $\Init(\mathcal{P}[G])
	     \to^* ((\mathcal{P}[G'],\ell'),(d,f,S',B'))$.
	     For any computation stack $S_0$ and any box stack $B_0$,
	     there exists a sequence
	     $((\mathcal{G}[H],\ell),
	     (d,f,S \mathrel{::} S_0,B \mathrel{::} B_0))
	     \to^* ((\mathcal{G}[H'],\ell''),
	     (d',f',S' \mathrel{::} S_0,B' \mathrel{::} B_0))$
	     of transitions from a stable state, where the position
	     $\ell''$ is one of those corresponding to $\ell'$.
 \end{description}
 The proof of the weak simulation follows the fact that the
 presence of the graph $G_0$ below the graph $G$ does not raise any
 extra rewriting, so long as the token data is taken from the
 execution on the graph $\mathcal{P}[G]$.

 Finally, if $T$ is a function type, replacing the element $\star$
 with the element $@$ in the sequence (\ref{eq:1}) only affects a pass
 transition over a $\lambda$-node, which sends the computation stack
 $\star \cl S$ to $\lambda \cl S$.

 The execution
 $\Init(\mathcal{P}[G])
 \to^* ((\mathcal{P}[G'],\ell'),(d,f,S',B'))$,
 in fact, can contain at most one pass transition over a
 $\lambda$-node which changes the computation stack
 $\star \cl \square$ to $\lambda \cl \square$.
 To make the second such transition happen, some other transition has
 to remove the top element of the computation stack
 $\lambda \cl \square$, however by stability
 (Prop.~\ref{prop:ExecStability}), no transition can do this.
 Moreover, such pass transition can be only the last transition of the
 execution.
 Any transitions that can possibly follow the pass transition, which
 sets direction $\dn$ and computation stack $\lambda \cl \square$, are
 pass transitions over $\oc$-nodes, $\rot{\oc}$-nodes,
 $\rot{\wn}$-nodes or $\rot{\lb{D}}$-nodes; the existence of these
 nodes contradicts with the type $T = T_1 \to T_2$ of the underlying
 graph.

 Since the sequence (\ref{eq:1}) weakly simulates the execution, where
 the weakness comes from only $\rot{\wn}$-nodes, we can conclude that
 there is no occurrence, or exactly one occurrence at the last, in the
 sequence (\ref{eq:1}), of the pass transition which is
 affected by the replacement of the element $\star$ with the element
 $@$.
 Therefore if the sequence (\ref{eq:1}) contains no such pass
 transition, the sequence (\ref{eq:2}) can be directly obtained by
 replacing the element $\star$ with the element $@$.
 Otherwise, cutting the last transition of the sequence (\ref{eq:1})
 just does the job, as the transition does not change the underlying
 graph and the token position.
\end{proof}

\section{Data-flow graphs}
\label{app:DataFlowGraphs}

This section looks at graphs consisting of specific nodes.
The restriction on nodes rules out some rewrites, especially
$@$-rewrites for function application and the decoupling rule.
\begin{definition}[Data-flow graphs]
 A \emph{data-flow} graph is a graph with no $\rot{\oc}$-nodes, that
 satisfies the following.
 \begin{itemize}
  \item All its input links have ground types.
  \item Any reachable (in the normal graphical sense) nodes from
	input links are labelled with
	$\{ p,\vec{p},\$^0,\oc,\wn,\rot{\wn},\lb{D},\lb{C}_m,
	\rot{\lb{D}},\rot{\lb{C}}_m
	\mid p \in \F,\ \vec{p} \in \F^n,\ \$^0 \in \Sigma,\ 
	n \in \N,\ m \in \N \}$.
 \end{itemize}
 In particular, a data-flow graph is called \emph{pure} if these
 reachable nodes are not labelled with
 $\{ \oc, \wn, \rot{\wn}, \lb{D}, \lb{C}_m \mid m \in \N \}$.
\end{definition}

Data-flow graphs intensionally characterises graphs of final states.
Graphs of final states play the role of ``values,'' since our
semantics implements (right-to-left) call-by-value evaluation.
\begin{proposition}[Final graphs intensionally]
 \label{prop:FinalGraphsIntensionally}
 Let $G \circ (\vec{p})^\ddag$ be a composite graph of (non-enriched)
 type $T$.
 If a final state $\Final(G \circ (\vec{p})^\ddag,X)$ is stable, the
 definite graph $G$ satisfies the following.
 \begin{enumerate}
  \item When $T$ is a ground type, the graph $G$ is a pure
	data-flow graph.
  \item When $T$ is a function type, i.e.\ $T = T_1 \to T_2$, the root
	of the graph $G$ is the input of a $\lambda$-node.
 \end{enumerate}
\end{proposition}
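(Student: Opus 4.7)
The plan is to derive both cases from Proposition~\ref{prop:StableExecIntensionally}. By assumption the final state $\Final(G \circ (\vec{p})^\ddag, X) = ((G \circ (\vec{p})^\ddag, \ell), (\dn, \square, X \cl \square, \square))$, with $\ell$ the root, is stable, so there is a stable execution $\Init(G \circ (\vec{p})^\ddag) \to^* \Final(G \circ (\vec{p})^\ddag, X)$. Noting that $\Init(G \circ (\vec{p})^\ddag) = ((G \circ (\vec{p})^\ddag, \ell), (\up, \square, \star \cl \square, \square))$ shares the token position $\ell$ with the final state, this stable execution already has the factored shape required by Proposition~\ref{prop:StableExecIntensionally}, with $h = 0$ and $S = B = \square$.

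Case (2) then falls out directly: when $T = T_1 \to T_2$, the second bullet of Proposition~\ref{prop:StableExecIntensionally} states that the position $\ell$ is the input of a $\lambda$-node, which is exactly the conclusion sought.

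Case (1) needs more work. When $T$ is ground, the first bullet of Proposition~\ref{prop:StableExecIntensionally} restricts the labels of every proper node \emph{involved} in any of the $k$ transitions to the set $\{p, \vec{p}, \$^0, \rot{\oc}, \rot{\lb{D}}, \rot{\lb{C}}_m\}$. Since $G$ is definitive it carries no $\rot{\oc}$-nodes, so the involved nodes lying inside $G$ in fact have labels drawn from the pure data-flow vocabulary $\{p, \vec{p}, \$^0, \rot{\lb{D}}, \rot{\lb{C}}_m\}$. The remaining task is to show that every proper node reachable (in the normal graphical sense) from the root in $G$ is involved in at least one of those $k$ transitions; this will pin $G$ down as a pure data-flow graph.

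My plan for this reachability-implies-involvement step is an induction following the token's own traversal. The token enters $G$ at the root going upward, and by the previous restriction the first node it meets must already lie in the pure data-flow set. For each such label I would then use the shape of the pass transitions in Fig.~\ref{fig:PassTransitions}: scalar and vector constant nodes act as leaves and are trivially visited; a $\$^0$-node drives the token into both of its output subtrees before returning; and $\rot{\lb{D}}$- and $\rot{\lb{C}}_m$-nodes forward the token through their outputs, while the box stack ensures the correct input branch is re-entered on the way down. Recursively, every output successor of a visited node is itself visited, so the visited subgraph of $G$ coincides with its reachable subgraph. The main obstacle I anticipate is the careful treatment of the $\rot{\lb{C}}_m$-nodes: each of their $m$ input branches can only be traversed via a distinct descent governed by the box stack, and I expect to lean on Proposition~\ref{prop:ExecStability} and Lemma~\ref{prop:FactoriseStableExec} to argue that every branch is in fact entered before the final state is reached, so that no reachable branch escapes being counted as an involved node.
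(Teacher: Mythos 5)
Your proposal is correct and takes essentially the same route as the paper: both cases are read off from Prop.~\ref{prop:StableExecIntensionally}, and for the ground-type case the paper likewise reduces the claim to ``every node reachable from the root is involved in the stable execution,'' proved by induction (the paper inducts on the maximum path length from the root, rather than following the token's traversal, but the step is the same appeal to Prop.~\ref{prop:FactoriseStableExec}.2 to get a transition passing the token upwards over an already-involved node). One small remark: the obstacle you anticipate at $\rot{\lb{C}}_m$-nodes is not actually there, since reachability propagates only through a contraction node's single output link, so its $m$ input branches are predecessors rather than successors and need not each be entered for the argument to close.
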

\begin{proof}
 The second case, where $T = T_1 \to T_2$, is a direct consequence of
 Prop.~\ref{prop:StableExecIntensionally}.
 For the first case, where $T$ is a ground type,
 Prop.~\ref{prop:StableExecIntensionally} tells us that the stable
 execution
 $\Init(G \circ (\vec{p})^\ddag)
 \to^* \Final(G \circ (\vec{p})^\ddag,X)$
 only involves nodes labelled with
 $\{ p,\vec{p},\$^0, \rot{\oc},\rot{\lb{D}},\rot{\lb{C}}_m
 \mid p \in \F,\ \vec{p} \in \F^n,\ \$^0 \in \Sigma,\ 
 n \in \N,\ m \in \N \}$.
 It boils down to show that any reachable node from the root of the
 graph $G$ is involved by the stable execution.
 We can show this by induction on the maximum length of paths from the
 root to a reachable node.
 The base case is trivial, as the root of the graph $G$ coincides with
 an input link of the reachable node.
 In the inductive case, induction hypothesis implies that any
 reachable node is connected to a reachable node which is involved by
 the stable execution.
 By Prop.~\ref{prop:FactoriseStableExec}.2 and labelling of the
 involved node, the stable execution contains a transition that passes
 the token upwards over the involved node, and hence makes the
 reachable node of interest involved by the following transition.
\end{proof}

We can directly prove soundness of data-flow graphs.
\begin{proposition}
 \label{prop:GenSoundnessDataFlowGraphs}
 Let $G(1,n)$ be a data-flow graph, with a link $\ell$ of ground type
 which is reachable from the root of $G$.
 For any vector $\vec{p} \in \F^n$,
 if a state $((G \circ (\vec{p})^\ddag,\ell),(\up,\square,S,B))$ is
 stable and valid, there exists a
 data-flow graph $G'(1,n)$ that agrees with $G$ on the link $\ell$,
 and a computation stack $S'$, such that
 $((G \circ (\vec{p})^\ddag,\ell),(\up,\square,S,B))
 \to^* ((G' \circ (\vec{p})^\ddag,\ell),(\dn,\square,S',B))$.
\end{proposition}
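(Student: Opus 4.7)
The plan is to prove this by strong induction on a well-founded measure of the subgraph of $G$ lying strictly above $\ell$ (relative to the token's initial upward travel). Because $G$ is a data-flow graph, every node reachable from its input falls in the restricted set $\{\val p, \val{\vec p}, \$^0, \oc, \wn, \rot{\wn}, \lb D, \lb C_m, \rot{\lb D}, \rot{\lb C}_m\}$; no $\lambda$, $@$, $\lb A$, $\lb P_n$, or iterative $\$^1$ nodes can be encountered. Consequently only the computational $\$^0$-rewrites of Fig.~\ref{fig:RewriteTransitionsComput} and the $\oc$-box bureaucracy of Figs.~\ref{fig:RewriteTransitionsOpenBox} and \ref{fig:RewriteTransitionsClosedBox} (together with the deep contraction rule of Fig.~\ref{fig:RewriteTransitionsDeep}) can fire; in particular no $@$-rewrite, decoupling or projection is possible.

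I would then proceed by case analysis on the proper node $N$ immediately above $\ell$. Constant nodes ($\val p$, $\val{\vec p}$) are dispatched by a single pass transition that returns the value onto the computation stack. A primitive $\$^0$-node is handled by applying the induction hypothesis twice, first at its right output and then at its left (using Prop.~\ref{prop:PreserveStability}, Prop.~\ref{prop:PreserveValidity} and Prop.~\ref{prop:ProvisionalCtxtPreservation} to propagate the hypotheses into the recursive calls), after which the $\$^0$-rewrite collapses the subgraph to a fresh constant node. Nodes from $\{\lb D, \lb C_m\}$ (respectively $\{\rot{\lb D}, \rot{\lb C}_m\}$) are handled by routing the token upward with appropriate manipulation of the box stack until a principal door (respectively a $\rot\oc$-node inside $(\vec p)^\ddag$) is reached, triggering the corresponding exponential rewrites; because these rewrites live in the definitive part of the composite graph only, the $(\vec p)^\ddag$ component is preserved.

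The main obstacle is the $\oc$-box machinery: once the token reaches a $\oc$-node, a cascade of $\wn$-rewrites can absorb neighbouring boxes, close the current one, and copy it through any contraction it meets, so a naive node-count measure does not decrease. The induction measure must instead be chosen to track the total first-order computational content remaining after all copying has been materialised; concretely, I would weight each $\oc$-box node by the accumulated contraction degree surrounding it, obtaining a bound on the number of $\$^0$-rewrites that can still fire. Since data-flow graphs exclude abstractions, decoupling and iterative operators, this unfolded measure is finite and strictly decreases at every rewrite. Throughout, all rewrites occur strictly above $\ell$, so $G'$ agrees with $G$ on $\ell$; and the closure of the allowable rewrites under the restricted label set guarantees that $G'$ is again a data-flow graph, completing the induction and producing the desired descending state at $\ell$ with some stack $S'$ extending $S$.
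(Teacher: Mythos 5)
Your overall architecture --- induction on a well-founded measure combined with a case analysis on the node immediately above $\ell$ (constant, $\$^0$-node, dereliction/contraction) --- matches the paper's, and your preliminary observations about which rewrites can fire in a data-flow graph (no $@$-rewrite, no decoupling, no projection, no $\$^1$-unfolding) are correct. The gap is in the measure. You propose to weight nodes by accumulated contraction degree so as to bound ``the number of $\$^0$-rewrites that can still fire,'' and you assert that this strictly decreases at every rewrite. It does not: copying a closed $\oc$-box through a contraction, absorbing one box into another, merging contraction nodes and eliminating $\lb{C}_1$ all preserve that count. More damagingly, in your $\lb{D}$/$\lb{C}$ case the entire cascade of $\oc$-box rewrites triggered when the token reaches the principal door performs no $\$^0$-rewrite at all, so when the token returns to $\ell$ pointing upwards your measure has not moved and the recursive appeal to the induction hypothesis is unjustified. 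This is exactly the point your proof needs to get past, and the weighted node count cannot do it.

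The paper's proof avoids the problem by ranking the \emph{link} $\ell$ rather than counting residual work: $\rho(\ell) = 0$ at inputs of constant nodes and $\rot{\lb{D}}$-nodes, $\rho(\ell) = \max(\rho(\ell_1),\rho(\ell_2))+1$ through a $\$^0$-node, and $\rho(\ell) = \rho(\ell')+1$ through $\oc$-, $\wn$-, $\lb{D}$- and $\lb{C}_n$-nodes, with well-definedness supplied by the acyclicity clause of the graph criterion. This rank depends only on the (box-)paths above $\ell$, so duplication of boxes elsewhere in the graph is irrelevant, and it strictly decreases in the $\lb{D}$-case precisely because the exponential rewrites \emph{eliminate} the $\lb{D}$-node together with the chain of $\lb{C}_n$-nodes and auxiliary doors separating $\ell$ from the box contents, leaving $\ell$ with the same type but a strictly smaller rank. (A further small point: since $\ell$ has ground type it cannot be the input of a $\lb{C}_m$-node, whose input carries an argument type $\oc T$, so that subcase of your analysis is vacuous.) If you replace your weighted node count by this link rank --- or by any measure recording the height of the dag above $\ell$ rather than its size --- the rest of your case analysis goes through essentially as in the paper.
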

\begin{proof}
 The first observation is that any transition sends a data-flow graph,
 composed with a graph $(\vec{p})^\ddag$, to a data-flow graph with
 the same graph $(\vec{p})^\ddag$.

 Given a composite graph $G \circ (\vec{p})^\ddag$ where $G$ is a
 data-flow graph, we define a partial \emph{ranking} map
 $\rho$ which assigns natural numbers to some links of $G$.
 The ranking is only defined on links which are reachable from the
 root of $G$ and labelled with either a ground type or an argument
 type, as below.
 \begin{align*}
  \rho(\ell) &:= 0
  \tag{if $\ell$ is input of a $p$-node ($p \in \F$), a
  $\vec{q}$-node ($\vec{q} \in \F^k$) or a $\rot{\lb{D}}$-node} \\
  \rho(\ell) &:= \mathit{max}(\rho(\ell_1),\rho(\ell_2))+1
  \tag{if $\ell$ is input of a $\$^0$-node
  ($\$^0 \in \Sigma$), and $\ell_1$ and $\ell_2$ are output links of
  the $\$^0$-node} \\
  \rho(\ell) &:= \rho(\ell')+1
  \tag{if $\ell$ is input of a $\oc$-node, a $\wn$-node, a
  $\lb{D}$-node or a $\lb{C}_n$-node, and $\ell'$ is the corresponding
  output link}
 \end{align*}
 This ranking on reachable links is well-defined, as the composite
 graph $G \circ (\vec{p})^\ddag$ meets the graph criterion.

 Since the state $((G \circ (\vec{p})^\ddag,\ell),(\up,\square,S,B))$
 is stable and the position $\ell$ has ground type, the ranking $\rho$
 of the composite graph $G \circ (\vec{p})^\ddag$ is defined on the
 position $\ell$.
 The proof is by induction on the rank $\rho(\ell)$.

 Base cases are when $\rho(\ell) = 0$.
 If the position $\ell$ is input of a $\rot{\lb{D}}-node$, the graph
 criterion implies an acyclic directed path from the
 $\rot{\lb{D}}-node$ to a $\rot{\oc}$-node.
 Intermediate nodes of this path are only $\rot{\lb{C}}_n$-nodes, and
 the proof is by induction on the number of these
 $\rot{\lb{C}}_n$-nodes.
 Otherwise, the position $\ell$ is input of a constant node ($p$ or
 $\vec{q}$), and the proof is by one pass transition over the node.

 In inductive cases, induction hypothesis is for any natural number
 that is less than $\rho(\ell)$.
 When the position $\ell$ is input of a $\lb{D}$-node, the graph
 criterion implies an acyclic directed path from the $\lb{D}$-node to
 a $\oc$-node, with only $\lb{C}_n$-nodes as intermidiate nodes.
 This means, from the state
 $((G \circ (\vec{p})^\ddag,\ell),(\up,\square,S,B))$, the token goes
 along the directed path, reaches the $\oc$-node, and trigger
 rewrites.
 These rewrites eliminate all the nodes in the path, and possibly
 include deep rules that eliminate other $\wn$-nodes and
 $\lb{C}$-nodes.
 When these rewrites are completed, the position $\ell$ and its type
 are unchanged, but its rank $\rho(\ell)$ is strictly decreased.
 Therefore we can use induction hypothesis to prove this case.
 The last case, when the position $\ell$ is input of a $\$^0$-node,
 boils down to repeated but straightforward use of induction
 hypothesis, which may be followed by a $\$^0$-rewrite.
\end{proof}
\begin{corollary}[Soundness of data-flow graphs]
 \label{cor:SoundnessDataFlowGraphs}
 If a data-flow graph $G(1,n)$ meets the name criteria and the graph
 criterion, for any
 vector $\vec{p} \in \F^n$, there exists a data-flow graph $G'(1,n)$
 and an element $X$ of a computation stack such that
 $\Init(G \circ (\vec{p})^\ddag)
 \to^* \Final(G' \circ (\vec{p})^\ddag,X)$.
\end{corollary}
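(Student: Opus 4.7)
The plan is to directly reduce the corollary to Prop.~\ref{prop:GenSoundnessDataFlowGraphs} applied at the root $\ell$ of $G$. By definition of a data-flow graph, every input link carries a ground type, so the root has ground type; it is trivially reachable from itself, meeting the position hypothesis of the preceding proposition.

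Next I would confirm the two side conditions on the initial state $\Init(G\circ(\vec{p})^\ddag)$. Stability is immediate since no transition has yet been taken. Validity follows from Prop.~\ref{prop:ExecValidityCondition}, using the assumed name criteria and graph criterion on $G$, which transfer to the composite $G\circ(\vec{p})^\ddag$ because $(\vec{p})^\ddag$ only adds constant nodes and $\rot\oc$-nodes that do not disturb either criterion. Applying Prop.~\ref{prop:GenSoundnessDataFlowGraphs} then delivers a data-flow graph $G'(1,n)$ and a computation stack $S'$ with
\[
\Init(G\circ(\vec{p})^\ddag) \to^* ((G'\circ(\vec{p})^\ddag,\ell),(\dn,\square,S',\square)).
\]

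It remains to show $S' = X \cl \square$ for some element $X$, so that the endpoint is exactly $\Final(G'\circ(\vec{p})^\ddag,X)$. The existence of a top answer $X$ is guaranteed by Prop.~\ref{prop:PreserveValidity}: validity is preserved along the execution, and a downward-pointing token at a ground-type link must carry an answer on top of the stack. The main obstacle is to confirm that nothing remains below $X$. My plan is to inspect the inductive construction in the proof of Prop.~\ref{prop:GenSoundnessDataFlowGraphs}: its induction on the rank $\rho(\ell)$ assembles the execution from pass transitions and $\oc$-box rewrites, each of which consumes the topmost query and eventually restores the stack below it untouched. Consequently the whole transformation sends $\star \cl S$ to $X \cl S$ uniformly in the stack-tail $S$, and since our initial stack is $\star \cl \square$, we end with $X \cl \square$ as required.
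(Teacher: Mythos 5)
Your proof is correct and matches the paper's (implicit) argument: the corollary is stated without proof precisely because it is the instantiation of Prop.~\ref{prop:GenSoundnessDataFlowGraphs} at the root with the initial token data, using triviality of stability for initial states and Prop.~\ref{prop:ExecValidityCondition} for validity. Your extra care in checking that the residual stack is exactly $X \cl \square$ is a detail the paper glosses over, and your justification of it via the tail-uniformity of the inductive construction is sound.
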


Roughly speaking,  abduction enables us to replace
data-flow graphs with other data-flow graphs, which resembles
parameter updating.
This replacement is not at all simple; in an execution, it can happen
inside a $\oc$-box, or happen outside a $\oc$-box before the
resulting graph gets absorbed by the $\oc$-box.
Moreover, it can change the number of provisional constants extracted
by decoupling.
Our starting point to formalise this idea of replacement is the notion
of ``data-flow chain''.
It is a sequence of sub-graphs, which are partitioned by
auxiliary doors and essentially representing data flow.
\begin{definition}[Data-flow chains]
 In a graph $G$, a \emph{data-flow chain}
 $\mathbf{D}$ is given by a sequence
 $D_0(n_0,n_1),\ldots,D_k(n_k,n_{k+1})$ of $k+1$ sub-graphs, where
 $k$ is a natural number, that satisfies the following.
 \begin{itemize}
  \item The first sub-graph $D_0(n_0,n_1)$ is a data-flow graph.
  \item If $k > 0$, there exists a unique number $h$ such that
	$0 < h \leq k$.
	
	For each $i = 1,\ldots,h-1$, the sub-graph $D_i(n_i,n_{i+1})$
	can contain only
	$\lb{C}$-nodes, $\lb{P}$-nodes, $\rot{\lb{C}}$-nodes or
	$\oc$-boxes with their doors, where these $\oc$-boxes
	are data-flow graphs.
	Input links of the sub-graph $D_i(n_i,n_{i+1})$ are connected
	to output links of the previous sub-graph
	$D_{i-1}(n_{i-1},n_i)$, via $n_i$ parallel $\wn$-nodes.
	These delimiting parallel doors ($\wn$) belong to the same
	$\oc$-box, whose principal door ($\oc$) is not included in the
	whole sequence of sub-graphs.

	For each $j = h,\ldots,k$, the sub-graph $D_j(n_j,n_{j+1})$
	solely consists of $\rot{\lb{C}}$-nodes.
	Input links of the sub-graph $D_j(n_j,n_{j+1})$ are connected
	to output links of the previous sub-graph
	$D_{j-1}(n_{j-1},n_j)$, via $n_j$ parallel $\rot{\wn}$-nodes.
	These delimiting parallel doors ($\rot{\wn}$) belong to the
	same $\oc$-box, whose principal door ($\oc$) is not included
	in the whole sequence of sub-graphs.
  \item The final sub-graph $D_k(n_k,n_{k+1})$ satisfies either one of
	the following: (i) all its output links have the provisional
	type $\rot{\oc}\F$ and connected to $\rot{\oc}$-nodes, or (ii)
	all its output links are input links of a single
	$\lb{P}_{n_{k+1}}$-node, whose output link is connected to a
	$\lambda$-node.
  \item If a node of the graph $G$ is box-reachable from an input link
	of the first sub-graph $G_0$, it is either (i) in the
	sub-graphs $D_0,\ldots,D_k$, (ii) in auxiliary doors
	partitioning them, or (iii) box-reachable from an output link
	of the last sub-graph $G_k$.
 \end{itemize}
\end{definition}
\noindent
We refer to input of the first sub-graph $D_0$ as input of the
data-flow chain $\mathbf{D}$, and output of the last sub-graph $D_k$
as output of the data-flow chain $\mathbf{D}$.
The following illustrates some forms of data-flow chains.
\begin{center}
 \includegraphics{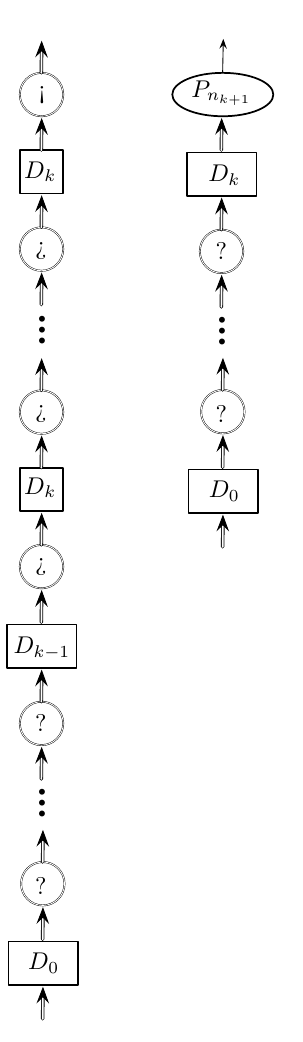}
\end{center}

We define a binary relation $\propto$ between definitive graphs that
applies single replacement of a data-flow chain.
It is lifted to a binary relation $\overline{\propto}$ on some states.
\begin{definition}[Data-flow replacement of graphs]
 Let $G(1,n)$ and $H(1,m)$ be two definitive graphs, that contain
 data-flow chains $\mathbf{D}_G$ and $\mathbf{D}_H$, respectively.
 Two definitive graphs $G(1,n)$ and $H(1,m)$ satisfies $G \propto H$
 if the following holds.
 \begin{itemize}
  \item Two data-flow chains have the same number of input.
	The data-flow chain $\mathbf{D}_H$ has no more length than the
	data-flow chain $\mathbf{D}_G$.
	The data-flow chain $\mathbf{D}_H$ can have an artibrary
	number of output, whereas the data-flow chain $\mathbf{D}_H$
	has at least one output.
  \item Exactly the same set of names appears in both graphs $G$ and
	$H$.
  \item The graphs $G$ and $H$ only differ in the data-flow chains
	$\mathbf{D}_G$ and $\mathbf{D}_H$, and their partitioning
	auxiliary doors.
 \end{itemize}
\end{definition}
\begin{definition}[Data-flow replacements of states]
 Two stable and valid states
 $((G \circ (\vec{p})^\ddag,\ell),(d,f_1,S_1,B_1))$ and
 $((H \circ (\vec{q})^\ddag,\ell),(d,f_2,S_2,B_2))$ are related by a
 binary relation $\overline{\propto}$ if the following holds.
 \begin{itemize}
  \item The definitive graphs satisfy $G \propto H$.
  \item The position $\ell$ is either an input link of the data-flow
	chain replaced by $\propto$, or (strictly) outside the
	data-flow chain.
  \item Two stable executions to these states give the exactly same
	sequence of positions.
  \item The rewriting flags $f_1$ and $f_2$ may only differ in numbers
	$n$ of $\$^1(n)$.
  \item The validation maps $v_G$ and $v_H$ of these states,
	respectively, satisfy that: $v_G(a) = 0$ implies $v_H(a) = 0$,
	for any $a \in \A$ on which they both are defined.
 \end{itemize}
\end{definition}

As usual, we use $\overline{\propto}^*$ to denote the reflexive and
transitive closure of the relation $\overline{\propto}$.
\begin{proposition}
 \label{prop:DataFlowReplacement}
 If
 $((G \circ (\vec{p})^\ddag,\ell),(\up,f_1,S_1,B_1))
 \overline{\propto}^*
 ((H \circ (\vec{q})^\ddag,\ell),(\up,f_2,S_2,B_2))$
 holds, a sequence
 \begin{align}
  ((G \circ (\vec{p})^\ddag,\ell),(\up,f_1,S_1,B_1))
  \to^* ((G' \circ (\vec{p}^\ddag),\ell),(\dn,\square,S'_1,B'_1)),
  \label{eq:5}
 \end{align}
 implies a sequence
 \begin{align}
  ((H \circ (\vec{q})^\ddag,\ell),(\up,f_2,S_2,B_2))
  \to^* ((H' \circ (\vec{q}^\ddag),\ell),(\dn,\square,S'_2,B'_2))
  \label{eq:6}
 \end{align}
 such that the resulting states are again related by
 $\overline{\propto}^*$.
\end{proposition}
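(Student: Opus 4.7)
The plan is a double induction: an outer induction on the number of single-step $\propto$ replacements witnessing $\overline{\propto}^*$, and an inner induction on the length of execution~(\ref{eq:5}). After reducing to a single replacement $G \propto H$ which differs in a pair of data-flow chains $\mathbf{D}_G$ and $\mathbf{D}_H$, I would walk the execution transition by transition, maintaining $\overline{\propto}^*$ as an invariant between the two parallel states on $G\circ(\vec p)^\ddag$ and $H\circ(\vec q)^\ddag$.

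The case analysis splits on the token's location. If the position and the redex of the next transition both lie strictly outside the data-flow chain, then the relevant local subgraph is common to both sides: node labels, types, box-paths, and the relevant box-stack entries all agree, so exactly the same transition fires on both and $\overline{\propto}$ is preserved verbatim. If the token sits at the input of the chain moving upwards, I would invoke soundness of data-flow graphs (Prop.~\ref{prop:GenSoundnessDataFlowGraphs}, with Cor.~\ref{cor:SoundnessDataFlowGraphs} for the boxes internal to the chain), combined with the congruence result of Prop.~\ref{prop:ExecutionCongruence}, to fast-forward both executions through the chain. Each execution returns the token to $\ell$ with direction $\dn$, flag $\square$, and a ground-type answer on the top of the computation stack; the returned values may differ between $G$ and $H$, but the stack shape and flags agree up to the discrepancy in $\$^1(n)$ indices that $\overline{\propto}$ already allows, and the resulting graphs are still related by $\propto$.

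The delicate cases involve the deep rewrites of Fig.~\ref{fig:RewriteTransitionsDeep} and the box-absorption/copying rewrites of Fig.~\ref{fig:RewriteTransitionsOpenBox} and Fig.~\ref{fig:RewriteTransitionsClosedBox}, which can target nodes at the boundary of a chain. The key observation is that applicability of these rewrites is determined by purely structural conditions (box-reachability, level, node labels) and by the top of the box stack, all of which are shared by $G$ and $H$ outside the chain. When such a rewrite duplicates part of a chain (e.g.\ by copying a closed box whose body contains chain material), the duplicated copies each give rise to a fresh single-step $\propto$-pair, so that the relation is preserved by appeal to the transitive closure $\overline{\propto}^*$. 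The asymmetric validation-map condition $v_G(a)=0 \Rightarrow v_H(a)=0$ is exactly what is needed here: it guarantees that vanishing vector dimensions on one side are mirrored on the other, so that the side-conditions on $\lb{C}_n$- and $\lb{P}_n$-nodes for deep rules align across the two executions.

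The main obstacle is the interaction of iterated operators $\$^1(n)$ with replacement: the iteration count $n$ is dictated by the vector dimension delivered by a chain, which can legitimately differ between $G$ and $H$. Unfolding then produces different numbers of basis-indexed copies on each side, and I must show that these newly generated sub-graphs, together with the surrounding scaffolding of $\lb{C}$-, $\lb{P}$- and auxiliary-door nodes, still constitute compatible data-flow chains so that $\overline{\propto}^*$ is re-established after the unfolding concludes. This is the technical crux of the argument and explains both why $\overline{\propto}$ tolerates differing $\$^1(n)$ flags and why the statement uses the reflexive–transitive closure $\overline{\propto}^*$ rather than $\overline{\propto}$ itself.
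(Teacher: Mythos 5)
Your proposal follows essentially the same route as the paper's proof: induction on the length of the execution (\ref{eq:5}) with a case split on whether the token sits outside the replaced chain, at its input, or at a $\$^1$-node, using Cor.~\ref{cor:SoundnessDataFlowGraphs} together with Prop.~\ref{prop:ExecutionCongruence} to traverse chains, the reflexive--transitive closure to absorb duplication, and the validation-map condition to align the degenerate ($n=0$) cases. The one point you flag as an open crux --- re-establishing $\overline{\propto}^*$ after unequal $\$^1(n_1)$ versus $\$^1(n_2)$ unfoldings --- is closed in the paper by a short sub-induction on $n_2$ with $n_2 \leq n_1$ as the base case, so no change of strategy is needed there.
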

\begin{proof}
 The proof is by induction on the length of the sequence (\ref{eq:5}).
 Base cases are when the sequence (\ref{eq:5}) consists of one pass
 transition over a constant node (scalar or vector) or a
 $\lambda$-node, and hence $f = \square$.
 If the transition is over a $\lambda$-node, the same transition is
 possible at the state
 $((H \circ (\vec{q})^\ddag,\ell),(\up,f,S_2,B_2))$.
 If the transition is over a constant node, the constant node may be a
 part of a data-flow chain replaced by a data-flow chain $\mathbf{D}$
 in the graph $H$.
 By stability of the state
 $((H \circ (\vec{q})^\ddag,\ell),(\up,f,S_2,B_2))$, we can conclude
 that any partitioning auxiliary doors of the data-flow chain
 $\mathbf{D}$ are $\rot{\wn}$-nodes, if they are box-reachable from
 the position $\ell$.
 This can be confirmed by contradiction as follows:
 otherwise the position $\ell$ must be in a $\oc$-box with definitive
 auxiliary doors (i.e.\ $\wn$-nodes), which contradicts with stability
 and the fact $f = \square$.
 This concludes the proof for base cases.

 First inductive case is when the position $\ell$ is an input link of
 a data-flow chain $\mathbf{D}_G$, which is replaced with a data-flow
 chain $\mathbf{D}_H$ in the graph $H$.
 If the rewriting flag is $f = \square$, similar to base cases, we can
 see that the data-flow chain $\mathbf{D}_H$ is in fact not
 partitioned by $\wn$-nodes (but possibly $\rot{\wn}$-nodes).
 Moreover by stability and the graph criterion, output links of the
 data-flow chain $\mathbf{D}_H$ are not connected to a $\lb{P}$-node.
 This implies that the data-flow chain $\mathbf{D}_H$ with
 partitioning auxiliary doors altogether gives a data-flow graph.
 Therefore the sequence (\ref{eq:6}) can be obtained by
 Prop.~\ref{cor:SoundnessDataFlowGraphs} and
 Prop.~\ref{prop:ExecutionCongruence}.

 If the rewriting flag $f$ is not equal to $\square$, possibilities
 are $f = \lambda,\wn,\oc$.
 The $\lambda$-rewrite in the graph $G$ sets the flag $\square$ and
 does not change the position $\ell$.
 The same $\lambda$-rewrite is also possible in the graph $H$, and we
 can use induction hypothesis.
 If $f = \wn$, there will be at least one rewrite transitions, in both
 graphs $G$ and $H$, until the flag is changed to $\oc$.
 These $\wn$-rewrites may affect the data-flow chains $\mathbf{D}_G$
 and $\mathbf{D}_H$.
 Since the position $\ell$ is an input of the data-flow chains, these
 $\wn$-rewrites can only eliminate $\lb{C}$-nodes, $\lb{P}$-nodes, or
 $\wn$-nodes that partition the data-flow chains.
 Elimination of $\lb{C}$-nodes and $\lb{P}$-nodes is where the
 transitive closure $\propto^*$ plays a role.
 It does not change the partitioning structure of the data-flow chains
 $\mathbf{D}_G$ and $\mathbf{D}_H$.
 Elimination of $\wn$-nodes in the graph $G$ must introduce
 $\rot{\wn}$-nodes, because the replacement $\propto$ requires the
 data-flow chain $\mathbf{D}_G$ to have at least one output.
 Therefore the $\wn$-rewrites changes the data-flow chain
 $\mathbf{D}_G$ to a new one while keeping its length.
 On the other hand, elimination of $\wn$-nodes may not happen in the
 graph $H$, or may decrease the length of the data-flow chain
 $\mathbf{D}_H$.
 As a result, after the maximal number of $\wn$-rewrites until the
 rewriting flag is changed to $\oc$, resulting graphs are still
 related by $\propto^*$ and the postion $\ell$ is not changed.
 Finally if $f = \oc$, until the rewriting flag is changed to
 $\square$, the same nodes ($\lb{D}$-nodes and $\lb{C}$-nodes) are
 eliminated in both graphs $G$ and $H$, and both the data-flow chains
 decrease their length by one, if possible.
 Once rewrites are done and the rewriting flag $\square$ is set, the
 position $\ell$ is still unchanged, and we use induction hypothesis.
 
 Second inductive case is when the position $\ell$ is the input of a
 $\$^1$-node, with rewriting flags $f_1 = \$^1(n_1)$ and
 $f_2 = \$^1(n_2)$ are raised.
 If $n_1 = 0$, by definition of the relation $\overline{\propto}$, it
 holds that $n_2 = 0$, and the proof follows stability.
 If not, the sequence (\ref{eq:5}) begins with non-trivial unfolding
 of the $\$^1$-node.
 The sequence (\ref{eq:6}) can be proved by induction on $n_2$, which
 is an arbitrary natural number, with $n_2 \leq n_1$ being the base
 case.

 Finally, the last inductive case is when the position $\ell$ is not
 in data-flow graphs replaced by $\overline{\propto}^*$, or the input
 of any $\$^1$-node.
 If $f = \square$ and the sequence (\ref{eq:5}) begins with a pass
 transition, the same transtion is possible in the graph
 $H \circ (\vec{q})^\ddag$, and we can use induction hypothesis.
 We use it more than once, when the pass transition is over a
 $\$$-node.
 Possibly the sequence (\ref{eq:5}) ends with a $\$^0$-rewrite, which
 may not be possible on the other side.
 However, this is when the position $\ell$ becomes an input of a
 data-flow chain in the resulting graph $G'$, and the difference of
 $\$^0$-rewrites is dealt with by the replacement $\propto$.
 If $f \neq \square$, discussion in the first inductive case is valid,
 except for any $\wn$-rewrites being possible, namely the decoupling
 rule.
 We use induction hypothesis once consecutive rewrites are done.
 The key fact is that, when the decoupling rule applies to graphs
 related by the replacement $\propto$, the resulting graphs are again
 related by $\propto$.
 The resulting graphs may differ in the size of extracted vectors and
 in the number of input links of the introduced $\lb{P}$-nodes.
 This is dealt with by the replacement $\propto$ of data-flow chains,
 in particular, a single constant node itself is a data-flow chain.
 Note that, if a $\lb{P}_0$-node is introduced on the side of graph
 $G$, it is also introduced on the other side, because any data-flow
 chain of null output is not replaced by $\propto$.
 The decoupling rule is essentially the only transition that is
 relevant to the condition of validation maps for the relation
 $\overline{\propto}$, and it does not violate the condition. 
 This concludes the whole proof.
\end{proof}
\begin{corollary}[Safety of dara-flow replacement]
 \label{cor:DataFlowReplacementSafety}
 Let $G \circ (\vec{p})^\ddag$ and $H \circ (\vec{q})^\ddag$ be
 composite graphs, meeting the name criteria and the graph criterion,
 such that $G \propto^* H$.
 If an execution on the graph $G \circ (\vec{p})^\ddag$ reaches a
 final state, an execution on the graph $H \circ (\vec{q})^\ddag$ also
 reaches a final state.
\end{corollary}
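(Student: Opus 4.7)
The plan is to reduce the corollary to a direct application of Prop.~\ref{prop:DataFlowReplacement}, by first establishing that the two initial states $\Init(G \circ (\vec p)^\ddag)$ and $\Init(H \circ (\vec q)^\ddag)$ stand in the relation $\overline{\propto}^*$. Once this is in hand, instantiating the proposition at the given terminating execution produces a matching sequence of transitions on $H \circ (\vec q)^\ddag$ that ends in a state having exactly the shape of a final state.

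First I would verify, by induction on the length of the chain witnessing $G \propto^* H$, that the two initial states are related by $\overline{\propto}^*$, reducing to the case of a single $\propto$-step. For such a step, the clauses of the definition of $\overline{\propto}$ are discharged as follows. Both initial states share the data $(\up, \square, \star \cl \square, \square)$, so all conditions on rewriting flags and stacks match; both are trivially stable and are valid by Prop.~\ref{prop:ExecValidityCondition} since $\star \in \Qry_\rho$ for every type $\rho$; the stable prefixes have length zero and so give identical empty sequences of positions. The position is the root of the composite graph, which, being the unique input link of the outer definitive graph, is either strictly outside the replaced data-flow chain or coincides with its external input link. Finally, the validation-map clause is vacuous: by the note following Fig.~\ref{fig:InductiveTranslation}, an initial translation contains no $\val{\vec p}$-nodes or $\lb{P}_n$-nodes, so the set of free names is empty.

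Applying Prop.~\ref{prop:DataFlowReplacement} then yields a sequence $\Init(H \circ (\vec q)^\ddag) \to^* ((H' \circ (\vec q)^\ddag, \ell), (\dn, \square, S'_2, B'_2))$ for some graph $H'$ and stacks $S'_2, B'_2$, where $\ell$ is the root. Since the token at the root with direction $\dn$ and flag $\square$ admits no further transition, Prop.~\ref{prop:SafeTermination} forces this state to be a final state, so $S'_2 = X' \cl \square$ and $B'_2 = \square$ for some $X'$, giving the required execution on $H \circ (\vec q)^\ddag$.

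I expect the main obstacle to be the careful discharge of the positional clause of $\overline{\propto}$ along the $\propto^*$-chain. Because intermediate graphs may differ from $G$ in delicate ways, it must be checked that at every step the root still sits compatibly with respect to the data-flow chain being replaced at that step. This ultimately reduces to observing that data-flow chains lie strictly within the outer definitive graph and so cannot contain its root-input link as an internal node. The validation-map clause, the other potential trouble spot, collapses because initial translations are free-name-free.
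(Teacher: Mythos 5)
Your overall strategy is exactly the intended one: the paper gives no proof of this corollary at all, treating it as immediate from Prop.~\ref{prop:DataFlowReplacement}, and your reduction --- relate the two initial states by $\overline{\propto}^*$, apply the proposition to the terminating execution, and then use Prop.~\ref{prop:SafeTermination} to upgrade the resulting stuck root state on the $H$-side to a genuine final state --- is a sound way to fill in the missing details. The endgame via Prop.~\ref{prop:SafeTermination} is a nice touch, since the proposition by itself only guarantees a state at the root with direction $\dn$ and flag $\square$, not that the stacks have the shape required of $\Final(\cdot,X)$.

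There is, however, one genuinely wrong step: your discharge of the validation-map clause of $\overline{\propto}$. You argue it is vacuous because ``an initial translation contains no $\val{\vec p}$-nodes or $\lb{P}_n$-nodes, so the set of free names is empty.'' But the corollary is not about translations of terms; its hypotheses admit arbitrary composite graphs meeting the name criteria and the graph criterion, and this generality is essential. In the place where the corollary is actually invoked (the proof of Prop.~\ref{prop:AbductionSafety}), the graphs compared are $\mathcal{P}[G]$ and $\mathcal{P}'[R]$, where $R$ contains $\oc$-boxes with $\val{\vec p}$-nodes and the graph $\phi(G)$ contains a $\lb{P}$-node --- precisely the node shapes that introduce free names per Def.~\ref{def:BoundFreeNames}. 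So the free-name set is not empty, and the clause cannot be waved away by appeal to $\Init$. The clause does in fact hold, but for a different reason: the definition of $\propto$ forces the replaced chain $\mathbf{D}_G$ to have at least one output, so any name $a$ whose $\lb{P}$-node or $\val{\vec p}$-node is touched by the replacement has $v_G(a)\geq 1$, making the implication $v_G(a)=0\Rightarrow v_H(a)=0$ vacuous there, while names untouched by the replacement satisfy $v_G(a)=v_H(a)$. Relatedly, your induction along the $\propto^*$-chain quietly needs each intermediate composite graph to be valid (since $\overline{\propto}$ is defined only between stable \emph{and valid} states), which the corollary's hypotheses guarantee only for the two endpoints; this should at least be remarked upon.
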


\section{Soundness}
\label{app:Soundness}

Our soundness proof uses logical predicates on definitive graphs.
These logical predicates are analogous to known ones on typed
lambda-terms.
\begin{definition}[Logical predicates]
 Given a term $T$, a logical predicate $P_T$ is on definitive graphs,
 that meet the name criteria and the graph criterion, of type $T$.
 It is inductively defined as below.
 \begin{itemize}
  \item When $T$ is a ground type, $G(1,n) \in P_T$ holds if: for any
	provisional context $\mathcal{P}[\square]_T^n$, there exist a
	composite graph	$H$ and an element $X$ of a computation stack
	such that $\Init(\mathcal{P}[G]) \to^* \Final(H,X)$.
  \item When $T = T_1 \to T_2$, $G(1,n) \in P_T$ holds if:
	\begin{enumerate}
	 \item for any provisional context
	       $\mathcal{P}[\square]_T^n$,
	       there exists a composite graph $H$ such that
	       $\Init(\mathcal{P}[G]) \to^* \Final(H,\lambda)$
	 \item for any $H(1,m) \in P_{T_1}$,
	       the following graph, denoted by
	       $G @ \oc H$, satisfies ${G @ \oc H} \in P_{T_2}$.
	\end{enumerate}
 \end{itemize}
 \begin{center}
  \includegraphics{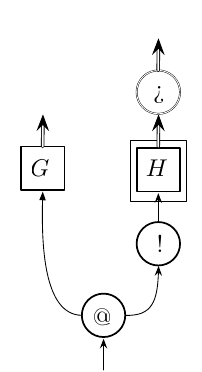}
 \end{center}
\end{definition}
\begin{proposition}[Deterministic termination]
 \label{prop:DeterministicTermination}
 If $G(1,n) \in P_T$, for any provisional context
 $\mathcal{P}[\square]_T^n$, there exist a unique definitive graph
 $G'(1,n)$ of type $T$ and a unique element $X$ of a computation stack
 such that $\Init(\mathcal{P}[G]) \to^* \Final(\mathcal{P}[G'],X)$.
\end{proposition}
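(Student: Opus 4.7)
The plan is to decompose the claim into existence (from the logical-predicate hypothesis) and uniqueness (from determinism), doing just enough extra bookkeeping to verify the shape of the final state.

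For existence, I would unfold the definition of $P_T$ by cases on $T$. When $T$ is a ground type, the definition directly supplies a composite graph $H$ and computation-stack element $X$ with $\Init(\mathcal{P}[G]) \to^* \Final(H, X)$; when $T = T_1 \to T_2$, clause (1) of the definition supplies the same kind of data with $X = \lambda$. In either case I then need to rewrite $H$ as $\mathcal{P}[G']$ for some definitive $G'(1,n)$ of type $T$. This is obtained by inducting on the length of the execution and applying provisional-context preservation (Prop.~\ref{prop:ProvisionalCtxtPreservation}) at each step: the decomposition into the fixed provisional context $\mathcal{P}[\square]$ and a definitive part persists throughout the run. Form preservation (Prop.~\ref{prop:FormPreservation}) guarantees that the interface and type of the definitive part agree with those of $G$, which fixes the type of $G'$ at $T$.

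For uniqueness, I would invoke determinism of execution (Cor.~\ref{prop:ExecDeterminism}): any two final states reachable from $\Init(\mathcal{P}[G])$ coincide up to name permutation. Since the definitive-graph component is determined by the full composite (subtract the fixed $\mathcal{P}$), this forces $G'$ to be unique in the same sense, and simultaneously pins down $X$. For $T$ a function type, $X = \lambda$ is already dictated by the $P_T$ clause; for $T$ a ground type, one can additionally use the intensional characterization of final graphs (Prop.~\ref{prop:FinalGraphsIntensionally}) to confirm that $X$ is forced to be a scalar or vector constant matching $T$, so the statement of the proposition is genuinely satisfied.

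The main subtlety I expect is not in the logical structure of the argument but in checking that the shape conditions line up: namely, that provisional-context preservation really applies all the way along (no transition eats into $\mathcal{P}$, because $\mathcal{P}$ consists only of $\rot{\lb{C}}$-nodes and $(\vec{p})^\ddag$, none of which are targeted by rewrites on the definitive side), and that no transition after the ``final'' position changes the quoted $X$. Both are routine inspections of the transition tables, but they are what makes the passage from ``some final state exists'' to ``the final state has the exact form $\Final(\mathcal{P}[G'],X)$'' rigorous. Once those are in place, existence plus determinism gives the proposition immediately.
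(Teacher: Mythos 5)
Your proposal is correct and follows essentially the same route as the paper, which simply cites determinism of execution (Cor.~\ref{prop:ExecDeterminism}) together with provisional-context preservation (Prop.~\ref{prop:ProvisionalCtxtPreservation}); you have merely spelled out the unfolding of $P_T$ for existence and the bookkeeping that the paper leaves implicit. The only caveat worth keeping in mind is that uniqueness holds up to name permutation (graph states are identified modulo $\sim_\alpha$), which you already note.
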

\begin{proof}
 This is a direct consequence of Prop.~\ref{prop:ExecDeterminism} and
 Prop.~\ref{prop:ProvisionalCtxtPreservation}.
\end{proof}
\begin{proposition}[Congruence of termination]
 \label{prop:TerminationCongruence}
 Let $\mathcal{G}[G]$ be a composite graph where $G(1,n)$ is a
 definitive graph of type $T$, and $e$ be the root of the
 hole of the graph context $\mathcal{G}[\square]$.
 If $G(1,n) \in P_T$ holds, then for any stacks $S$ and $B$, there
 exist an element $X$ of a computation stack and a sequence
 \begin{align*}
  ((\mathcal{G}[G],\ell),(\up,\square,\star \cl S,B))
  \to^* ((\mathcal{G}[G'],\ell),(\dn,\square,X \cl S,B))
 \end{align*}
 of transitions.
 Moreover, if $T$ is a function type, there also exists the following
 sequence.
 \begin{align*}
  ((\mathcal{G}[G],\ell),(\up,\square,@ \cl S,B))
  \to^* ((\mathcal{G}[G'],\ell),(\up,\square,@ \cl S,B))
 \end{align*}
\end{proposition}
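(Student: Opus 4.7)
The plan is to combine the defining property of the logical predicate $P_T$ with the Congruence of execution (Prop.~\ref{prop:ExecutionCongruence}). The key idea is that the ambient graph context $\mathcal{G}[\square]$ interacts with $G$ only through the provisional ports coming out of $G$ and through the stacks carried by the token; both of these are already modelled by a suitably chosen provisional context, which is precisely the content of the congruence lemma.

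First I would construct the provisional context $\mathcal{P}[\square]_T^n$ induced by $\mathcal{G}[\square]$, as described in Appendix~\ref{app:ProvisionalCtxtAndCongruence}. The assumption $G \in P_T$ then supplies, by the definition of the logical predicate, an execution $\Init(\mathcal{P}[G]) \to^* \Final(H, X)$ for some composite graph $H$ and some computation-stack element $X$; in the function-type case the same definition additionally forces $X = \lambda$. By Prop.~\ref{prop:ProvisionalCtxtPreservation} the provisional context is preserved along transitions, so I may write $H = \mathcal{P}[G']$ for a uniquely determined definitive graph $G'$, and the final state is $((\mathcal{P}[G'], e), (\dn, \square, X \cl \square, \square))$ with the position back at the root of the hole.

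Next I would transfer this terminating execution to the ambient graph context by invoking Prop.~\ref{prop:ExecutionCongruence} with $d = \dn$, $f = \square$, $S' = X \cl \square$ and $B' = \square$. The stack-composition convention $S'' = S' \mathbin{{::}} S$, $B'' = B' \mathbin{{::}} B$ of that proposition simplifies in this instance to $S'' = X \cl S$ and $B'' = B$, so the resulting transition sequence
\begin{align*}
((\mathcal{G}[G], e), (\up, \square, \star \cl S, B))
\to^* ((\mathcal{G}[G'], e), (\dn, \square, X \cl S, B))
\end{align*}
matches the desired conclusion exactly. For the function-type clause I would simply invoke the companion statement of Prop.~\ref{prop:ExecutionCongruence}, which furnishes an analogous sequence when the top of the initial computation stack is $@$ rather than $\star$.

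The only real obstacle is to make sure that the induced provisional context genuinely captures the effect of $\mathcal{G}[\square]$ on $G$, and that the stack and box-stack fragments recorded by $P_T$ compose coherently with the ambient $S$ and $B$. Both of these obligations are entirely discharged by Prop.~\ref{prop:ExecutionCongruence}; once that lemma is taken as given, the remaining work is the routine bookkeeping above. I therefore expect no further essential difficulty beyond ensuring that in the function-type clause the single pass transition over a $\lambda$-node (which is the only transition sensitive to whether $\star$ or $@$ sits on top of the stack) is dealt with via the same weak-simulation argument already used in the proof of Prop.~\ref{prop:ExecutionCongruence}.
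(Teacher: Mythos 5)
Your proposal is correct and follows exactly the route the paper takes: the paper's proof of Prop.~\ref{prop:TerminationCongruence} is simply the one-line remark that it is a corollary of Prop.~\ref{prop:ExecutionCongruence}, and your write-up fills in precisely the intended bookkeeping (inducing the provisional context, invoking $G \in P_T$ and Prop.~\ref{prop:ProvisionalCtxtPreservation} to obtain $\Final(\mathcal{P}[G'],X)$, then instantiating the congruence lemma with $S' = X \cl \square$ and $B' = \square$). No discrepancy with the paper's argument.
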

\begin{proof}
 This is a corollary of Prop.~\ref{prop:ExecutionCongruence}.
\end{proof}

The following properties relate logical predicates to transitions, in
both forward and backward ways.
\begin{proposition}[Forward/backward reasoning]
 \label{prop:FWBWReasoning}
 \noindent
 \begin{description}
  \item[Forward reasoning]
	     Let $G(1,n)$ be a definitive graph such that $G \in P_T$,
	     and $\mathcal{P}[\square]_T^n$ be a provisional context.
	     For any execution
	     $\Init(\mathcal{P}[G]) \to^* ((H',\ell),\delta)$ on the
	     graph $\mathcal{P}[G]$, there exists a definitive graph
	     $G'(1,n)$ such that $H' = \mathcal{P}[G']$ and
	     $G' \in P_T$.
  \item[Backward reasoning]
	     A definitive graph $G(1,n)$ satisfies $G \in P_T$, if:
	     (i) it meets the name criteria and the graph criterion,
	     and (ii) for
	     any provisional context $\mathcal{P}[\square]_T^n$, there
	     exist a definitive graph $G'(1,n) \in P_T$ and a state
	     $((\mathcal{P}[G'],\ell),\delta)$ such that
	     $\Init(\mathcal{P}[G])
	     \to^* ((\mathcal{P}[G'],\ell),\delta)$.
 \end{description}
\end{proposition}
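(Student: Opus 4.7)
The plan is to prove both statements simultaneously by induction on the type $T$, leveraging determinism (Prop.~\ref{prop:ExecDeterminism}), stability (Prop.~\ref{prop:ExecStability} together with Lem.~\ref{lem:StableExecInCtxt} and Lem.~\ref{lem:StabiliseActualRewrites}), execution congruence (Prop.~\ref{prop:ExecutionCongruence}), and provisional-context preservation (Prop.~\ref{prop:ProvisionalCtxtPreservation}).

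For the forward reasoning at ground type, I start from the hypothesis $\Init(\mathcal{P}[G]) \to^* ((H',\ell),\delta)$. Provisional-context preservation gives $H' = \mathcal{P}[G']$ automatically. To show $G' \in P_T$, I must produce termination in \emph{every} provisional context $\mathcal{P}'$, not just $\mathcal{P}$. The key move is to use stability: the rewrite transitions in the given execution that actually modify $G$ into $G'$ are, by Lem.~\ref{lem:StabiliseActualRewrites}, applicable from stable states whose token positions lie in the graph context; so by Lem.~\ref{lem:StableExecInCtxt} these same rewrites can be carried out beneath any other context $\mathcal{P}'$, yielding a stable execution $\Init(\mathcal{P}'[G]) \to^* ((\mathcal{P}'[G'],\ell'),\delta')$. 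Since $G \in P_T$ ensures $\Init(\mathcal{P}'[G])$ terminates, determinism (Prop.~\ref{prop:Determinism}) forces continuation from this intermediate state to terminate as well, giving the required final state for $\mathcal{P}'[G']$. The graph and name criteria for $G'$ are inherited from $G$ via Prop.~\ref{prop:PreserveNameCriteria} and Prop.~\ref{prop:PreserveGraphCriterion}.

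For the function type inductive case of forward reasoning, I additionally need: (i) for any provisional context, $\Init(\mathcal{P}'[G'])$ reaches a $\lambda$-final state, and (ii) $G' @ \oc H \in P_{T_2}$ for every $H \in P_{T_1}$. Property (i) is handled exactly as in the ground-type case, noting the output of the final state must be $\lambda$ by Prop.~\ref{prop:FinalGraphsIntensionally} and determinism. For (ii), the transitions carrying $G$ to $G'$ also carry $G @ \oc H$ to $G' @ \oc H$ by the same replay argument (the $@$-construction only wraps $G$ in a larger context, so stability again applies). Since $G @ \oc H \in P_{T_2}$ by the definition of $G \in P_T$, I can apply the forward reasoning inductive hypothesis at type $T_2$ to transfer the predicate from $G @ \oc H$ to $G' @ \oc H$.

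For the backward reasoning, given any provisional context $\mathcal{P}'$ and the witnessing intermediate state $((\mathcal{P}'[G''],\ell),\delta)$ with $G'' \in P_T$, I use Prop.~\ref{prop:DeterministicTermination} on $G'' \in P_T$ to get a unique final state reachable from $\Init(\mathcal{P}'[G''])$; stability and determinism (Prop.~\ref{prop:Determinism}) then show that the continuation from the stable intermediate state must reach the same final state, so composing with $\Init(\mathcal{P}'[G]) \to^* ((\mathcal{P}'[G''],\ell),\delta)$ yields termination for $\mathcal{P}'[G]$. For the function case, property (ii) is obtained by applying backward reasoning inductively at type $T_2$ to $G @ \oc H$, using that $G'' @ \oc H$ is reachable from $G @ \oc H$ (by the same replay argument) and lies in $P_{T_2}$ by the inductive assumption on the witness.

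The main obstacle will be making precise the replay of rewrites across different provisional contexts. The delicate cases are $\oc$-box rewrites whose redexes extend across the $G$/context boundary, in particular the contraction-combination rule and box-absorption rule in Fig.~\ref{fig:RewriteTransitionsClosedBox} and Fig.~\ref{fig:RewriteTransitionsOpenBox}, and the deep decoupling rule whose redex explicitly excludes a sub-graph but depends on the structure of surrounding $\rot{\wn}$ and $\rot{\lb{C}}$ nodes. Ensuring that these rewrites behave identically under any replacement of the surrounding provisional context requires careful use of the covering and box-reachability conditions, but once the replay principle is established the rest of the argument is a routine induction composing the determinism and congruence machinery already in place.
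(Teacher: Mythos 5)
Your overall strategy --- a simultaneous induction on $T$, with determinism and stability carrying the ground-type cases and the congruence machinery carrying the function-type cases --- matches the paper's, but your inductive step for the forward direction contains a gap that the paper's proof is specifically structured to avoid. You claim that ``the transitions carrying $G$ to $G'$ also carry $G @ \oc H$ to $G' @ \oc H$ by the same replay argument,'' and then apply the forward induction hypothesis at type $T_2$ to the resulting execution. But the forward statement requires an actual \emph{execution}, i.e.\ a transition sequence starting from $\Init(\mathcal{P}'[G @ \oc H])$, and no such execution can reach $\mathcal{P}'[G' @ \oc H]$ with $H$ untouched: evaluation is right-to-left call-by-value, so from the initial state the token enters the $@$-node, is sent to its right output, and must evaluate the argument box $\oc H$ to some $H'$ before it ever reaches $G$; all rewrites are token-triggered, so $G$ cannot be rewritten first. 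The paper's proof therefore takes a detour: it uses Cor.~\ref{prop:TerminationCongruence} to run $H$ to $H'$ under \emph{both} $G$ and $G'$, continues the first execution until $G$ becomes $G'$ so as to reach ${G' @ \oc H'}$, applies the forward induction hypothesis to get ${G' @ \oc H'} \in P_{T_2}$, and then --- this is the step your argument is missing --- applies the \emph{backward} induction hypothesis along the execution from $\Init(\mathcal{P}'[G' @ \oc H])$ to $\mathcal{P}'[G' @ \oc H']$ in order to recover ${G' @ \oc H} \in P_{T_2}$. Your forward inductive case never invokes the backward hypothesis, which is the whole reason the two statements must be proved simultaneously; the symmetric problem affects your backward inductive case, where you assert that $G'' @ \oc H$ is reachable from $G @ \oc H$.

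A secondary concern is your replay of the $G$-to-$G'$ rewrites under an arbitrary provisional context $\mathcal{P}'$ in the ground-type case. Lem.~\ref{lem:StableExecInCtxt} substitutes the graph \emph{in the hole} while keeping the surrounding context fixed; you need the dual operation (keep the rewritten content, vary the surrounding $\rot{\lb{C}}$-forest and constants), and since the token does visit the provisional constants, the lemma's hypothesis that all positions lie in the context is not available in the direction you need. You rightly flag this as the main obstacle, but the paper's base case avoids the issue entirely by staying inside the single given context $\mathcal{P}$: determinism (Prop.~\ref{prop:DeterministicTermination}) extends the given execution to the unique final state, and stability of the intermediate state (Prop.~\ref{prop:ExecStability}) supplies an execution $\Init(\mathcal{P}[G']) \to^* ((\mathcal{P}[G'],\ell),\delta)$ directly, which composes to give termination for $\mathcal{P}[G']$. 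If you do pursue uniformity over all provisional contexts, you would need a new lemma rather than an appeal to Lem.~\ref{lem:StableExecInCtxt}.
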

\begin{proof}
 First of all, Prop.~\ref{prop:ProvisionalCtxtPreservation} ensures
 the decomposition $H' = \mathcal{P}[G']$, where $G'(1,n)$ is a
 definitive graph, in the forward reasoning.
 We prove the both reasoning simultaneously by induction on the type
 $T$.
 Base cases of both reasoning, where $T$ is a ground type, relies on
 determinism and stability, as we see below.
 
 We begin with the base case of the forward reasoning, where $T$ is a
 ground type.
 Given any execution
 $\Init(\mathcal{P}[G]) \to^* ((\mathcal{P}[G'],\ell),\delta)$
 where $G(1,n) \in P_T$ and $\mathcal{P}[\square]_T^n$ is a
 provisional context, by Prop.~\ref{prop:DeterministicTermination},
 there exists a unique final state $\Final(H'',X)$ such that
 $\Init(\mathcal{P}[G]) \to^* \Final(H'',X)$.
 The uniqueness implies the factorisation
 $\Init(\mathcal{P}[G]) \to^* ((\mathcal{P}[G'],\ell),\delta)
 \to^* \Final(H'',X)$.
 Since the state $((\mathcal{P}[G'],\ell),\delta)$ is stable by
 Prop.~\ref{prop:ExecStability}, we have a stable execution
 $\Init(\mathcal{P}[G']) \to ^* ((\mathcal{P}[G'],\ell),\delta)$.
 As a result we have an execution
 $\Init(\mathcal{P}[G']) \to ^* \Final(H'',X)$, which proves
 $G'' \in P_T$.

 In the base case of the backward reasoning, where $T$ is a ground
 type, $G' \in P_T$ implies an execution
 $\Init(\mathcal{P}[G']) \to^* \Final(H,X)$
 to a unique final state (Prop.~\ref{prop:DeterministicTermination}).
 Since the last state of the execution
 $\Init(\mathcal{P}[G]) \to^* ((\mathcal{P}[G'],\ell),\delta)$
 is stable by Prop.~\ref{prop:ExecStability}, there is a stable
 execution
 $\Init(\mathcal{P}[G']) \to^* ((\mathcal{P}[G'],\ell),\delta)$.
 The uniqueness of the final state $\Final(H,X)$ gives the sequence
 $((\mathcal{P}[G'],\ell),\delta) \to^* \Final(H,X)$
 of transitions, which yields an execution
 $\Init(\mathcal{P}[G]) \to^* ((\mathcal{P}[G'],\ell),\delta)
 \to^* \Final(H,X)$
 and proves $G \in P_T$.

 In inductive cases of both reasoning, where $T = T_1 \to t_2$, we
 need to check two conditions of the logical predicate $P_T$.
 The first condition, i.e.\ termination, is as the same as base cases.
 The other inductive condition can be proved using induction
 hypotheses of both properties, together with
 Prop.~\ref{prop:ExecutionCongruence} and
 Cor.~\ref{prop:TerminationCongruence}, as below.

 In the inductive case of the forward reasoning, our goal is to prove
 ${G' @ \oc H} \in P_{T_2}$ for any $H(1,m) \in P_{T_1}$, under the
 assumption of the execution
 $\Init(\mathcal{P}[G]) \to^* ((\mathcal{P}[G'],\ell),\delta)$ where
 $G(1,n) \in P_{T_1 \to T_2}$.
 Let $\mathcal{P'}[\square]_{T_2}^{n+m}$ be any provisional context.
 Since $H \in P_{T_1}$, Cor.~\ref{prop:TerminationCongruence} implies
 two executions, where the position $\ell'$ is the right output of the
 $@$-node,
 \begin{align}
  \Init(\mathcal{P'}[G @ \oc H])
  &\to^* ((\mathcal{P'}[G @ \oc H'],\ell'),
  (\dn,\square,X \cl \star \cl \square,\square)) \label{eq:3} \\
  \Init(\mathcal{P'}[G' @ \oc H])
  &\to^* ((\mathcal{P'}[G' @ \oc H'],\ell'),
  (\dn,\square,X \cl \star \cl \square,\square)) \label{eq:4}
 \end{align}
 such that
 $\Init(\mathcal{P''}[H]) \to^* \Final(\mathcal{P''}[H'],X)$
 for some provisional context $\mathcal{P''}[\square]_{T_1}^m$ and
 some element $X$ of a computational stack.
 By the assumption of $G \in P_{T_1 \to T_2}$ and
 Prop.~\ref{prop:ExecutionCongruence}, we can continue the
 execution (\ref{eq:3}) as:
 \begin{align*}
  \Init(\mathcal{P'}[G @ \oc H])
  &\to^* ((\mathcal{P'}[G @ \oc H'],\ell'),
  (\dn,\square,X \cl \star \cl \square,\square))
  \to^* ((\mathcal{P'}[G' @ \oc H'],\ell), \delta)
 \end{align*}
 for some token data $\delta$.
 Since $G \in P_{T_1 \to T_2}$ and $H \in P_{T_2}$ by the assumption,
 we can use induction hypothesis of the forward reasoning and obtain
 ${G' @ \oc H'} \in P_{T_2}$.
 Using induction hypothesis of the backward reasoning along the
 execution (\ref{eq:4}), we conclude ${G' @ \oc H} \in P_{T_2}$.
 
 In the inductive case of the backward reasoning, we aim to prove
 ${G @ \oc H} \in P_{T_2}$ for any $H(1,m) \in P_{T_1}$.
 Let $\mathcal{P'}[\square]_{T_2}^{n+m}$ be any provisional context.
 Since $H \in P_{T_1}$, Cor.~\ref{prop:TerminationCongruence} gives an
 execution, where the position $\ell'$ is the right output of the
 $@$-node,
 \begin{align*}
  \Init(\mathcal{P'}[G @ \oc H])
  &\to^* ((\mathcal{P'}[G @ \oc H'],\ell'),
  (\dn,\square,X \cl \star \cl \square,\square))
 \end{align*}
 such that there exists a provisional context
 $\mathcal{P''}[\square]_{T_1}^m$ and an execution
 $\Init(\mathcal{P''}[H]) \to^* \Final(\mathcal{P''}[H'],X)$.
 Using the assumption on the graph $G$, with
 Prop.~\ref{prop:ExecutionCongruence}, yields its expansion
 \begin{align*}
  \Init(\mathcal{P'}[G @ \oc H])
  &\to^* ((\mathcal{P'}[G @ \oc H'],\ell'),
  (\dn,\square,X \cl \star \cl \square,\square))
  \to^* ((\mathcal{P'}[G' @ \oc H'],\ell), \delta)
 \end{align*}
 such that $G' \in P_{T_1 \to T_2}$, arising in an execution
 $\Init(\mathcal{P'''}[G]) \to^* ((\mathcal{P'''}[G'],\ell),\delta)$,
 for some provisional context
 $\mathcal{P'''}[\square]_{T_1 \to T_2}^n$.
 Induction hypothesis of the forward reasoning implies 
 $H' \in P_{T_1}$, and therefore ${G' @ \oc H'} \in P_{T_2}$.
 We conclude ${G @ \oc H} \in P_{T_2}$ by induction hypothesis of the
 backward reasoning.
\end{proof}

The key ingredient of the soundness proof is ``safety'' of decoupling.
This is where we appreciate call-by-value evaluation.

\begin{proposition}[Safety of decoupling]
 \label{prop:AbductionSafety}
 If $G(1,n) \in P_T$ holds, the graph $\hat{G}$ given as below
 \begin{align*}
  \hat{G} =
  \includegraphics{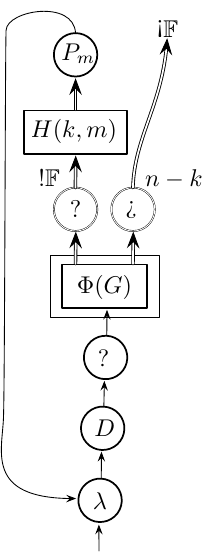}
 \end{align*}
 belongs to $P_{V_a \to T}$, where:
 \begin{itemize}
  \item the name $a \in \A$ is any name that does not appear in the
	graph $G'$
  \item the graph $H(k,m)$, where $k \leq n$ and $m$ is arbitrary,
	consists solely of $\lb{C}$-nodes connected to each other in
	an arbitrary way, and fulfills the graph criterion
  \item the graph $\phi(G)$ is obtained by:
	(i) choosing $k$ output links of the graph $G$ arbitrarily,
	and (ii) replacing any
	$\rot{\lb{C}}$-nodes, $\rot{\wn}$-nodes and
	$\rot{\lb{D}}$-nodes with $\lb{C}$-nodes, $\wn$-nodes and
	$\lb{D}$-nodes, respectively, if one of the chosen output
	links can be reachable from these nodes via links of only the
	provisional type $\rot{\oc}\F$.
 \end{itemize}
\end{proposition}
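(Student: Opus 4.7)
My plan is to verify the two clauses of the logical predicate $P_{V_a \to T}$ for $\hat{G}$: first the termination-to-$\lambda$ clause directly, and then the application clause by reducing it to the hypothesis $G \in P_T$ via the data-flow replacement machinery developed earlier in the appendix.

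For the termination-to-$\lambda$ clause, in any provisional context $\mathcal{P}[\hat{G}]$ the initial token travels upward from the root through the outer dereliction wrapping to reach the topmost $\lambda$-node of $\hat{G}$, and a single pass transition over $\lambda$ places $\lambda$ on the computation stack and yields a final state. This uses only pass transitions and is independent of $T$.

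For the application clause, I would fix $K(1,m') \in P_{V_a}$ and an arbitrary provisional context around $\hat{G} @ \oc K$. Since $V_a$ is a ground type, $K \in P_{V_a}$ implies that $K$ evaluates to a vector value $\val{\vec{q}}$, and the forward reasoning of Proposition~\ref{prop:FWBWReasoning} keeps the resulting data-flow graph inside $P_{V_a}$. Using the right-to-left evaluation of $@$ together with Proposition~\ref{prop:ExecutionCongruence}, inside $\hat{G} @ \oc K$ I would first evaluate the operand $K$ (absorbing the surrounding $\oc$-box as needed), then trigger the $@$-rewrite to eliminate the $\lambda$-$@$ pair, and finally propagate the components of $\vec{q}$ through the $\lb{P}_k$-node to the $k$ decoupled outputs of $\phi(G)$. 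At this point the reached graph should differ from a graph of the form $G \circ (\vec{p}')^\ddag$ only inside a single data-flow chain, where $\vec{p}'$ is assembled from $\vec{q}$ on the decoupled positions and from the untouched provisional constants of $G$ on the remaining positions routed through the contraction subgraph $H$. Invoking the safety of data-flow replacement (Corollary~\ref{cor:DataFlowReplacementSafety}) together with $G \in P_T$ then yields termination of the reached state. When $T$ is a ground type this directly gives the required final state; when $T$ is a function type I would close the argument by the backward reasoning of Proposition~\ref{prop:FWBWReasoning} combined with an induction on the structure of $T$.

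The hard part will be verifying that the intermediate configuration obtained in this way is indeed $\overline{\propto}^*$-related to $G \circ (\vec{p}')^\ddag$. This requires a careful accounting of how the $k$ decoupled outputs of $\phi(G)$, the contraction tree $H$, the $\lb{P}_k$-node and the fresh name $a$ interact with the token-driven rewrites, so that the differences between the two graphs remain confined to a single data-flow chain as required by the definition of $\propto$. The hypothesis that $a$ does not appear in $G$, together with the name-permutation bookkeeping of Proposition~\ref{prop:PreserveBoundNameCriterion}, should ensure that the validation-map and name-scope conditions of $\overline{\propto}$ are all satisfied after the substitution.
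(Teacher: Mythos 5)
Your proposal follows essentially the same route as the paper's proof: the termination clause is immediate from the $\lambda$-node at the root of $\hat G$, and the applicative clause is handled by first evaluating the argument, carrying out the decoupling/projection rewrites, and then invoking the data-flow replacement relation $\propto^*$ together with Corollary~\ref{cor:DataFlowReplacementSafety} and $G\in P_T$ to obtain termination, with an induction on $T$ for the function-type case. Two points where the paper does visibly more work than your sketch: it treats $n=0$ separately (there $\phi(G)=G$ and the replacement machinery does not apply, since a data-flow chain with no output is not replaced by $\propto$), and for the applicative clause at function type $T=T_1\to T_2$ it does not proceed by backward reasoning on the reached graph $R$ alone, but applies the proposition inductively to $G\mathbin{@}\oc F$ and identifies $R\mathbin{@}\oc F$, up to renaming, with the graph reached from $(\widehat{G\mathbin{@}\oc F})\mathbin{@}\oc\tilde E$, using \emph{forward} reasoning to conclude $R\mathbin{@}\oc F\in P_{T_2}$ --- this is the one substantive idea your plan gestures at (``induction on the structure of $T$'') but does not actually supply.
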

\begin{proof}
 It is easy to see the graph $\hat{G}$ meets the name criteria and the
 graph criterion, given $G \in P_T$.
 Since the graph $\hat{G}$ has a $\lambda$-node at the bottom, the
 termination condition of the logical predicate $P_{V_a \to T}$ is
 trivially satisfied.
 For any graph $E \in P_{V_a}$, we prove $\hat{G} @ \oc E \in P_T$.

 Let $\mathcal{P'}[\square]_T$ be any provisional context.
 By Cor.~\ref{prop:TerminationCongruence},
 an execution on the graph $\mathcal{P'}[\hat{G} @ \oc E]$ first
 yields the graph $\mathcal{P'}[\hat{G} @ \oc E']$, where the graph
 $E'$ comes from some execution
 $\Init(\mathcal{P}_0[E]) \to^* \Final(\mathcal{P}_0[E'],X)$ to a
 final state.
 Then the token eliminates the pair
 of the $\lambda$-node and the $@$-node at the bottom of the graph,
 and triggers the rewrite involving the graph $H$ ($\lb{C}$-nodes) and
 the $\lb{P}_m$-node of the graph $\phi(G)$.
 This rewrite duplicates the graph $E'$ in a $\oc$-box, introducing
 dot-product nodes and vector nodes.
 Finally the token eliminates the $\lb{D}$-node and the $\oc$-box
 around the graph $\phi(G)$.
 In the resulting graph, we shall write as $\mathcal{P'}[R]$, the
 graph $R$ consists of the graph $\phi(G)$, whose output links of type
 $\oc\F$ are connected to $\oc$-boxes, and further,
 $\rot{\lb{C}}$-nodes.
 The following illustrates the graph $R$, where $\rot{\lb{C}}$-nodes
 are omitted.
 \begin{center}
  \includegraphics{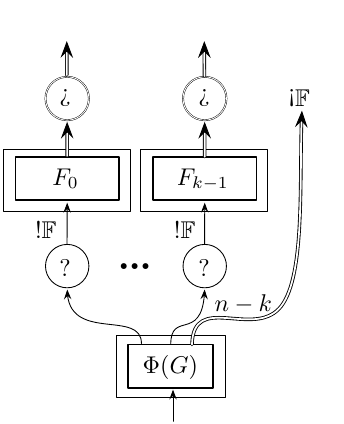}
  \includegraphics{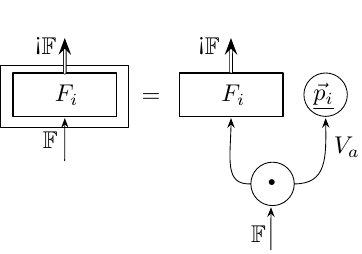}
 \end{center}
 Let $\vec{F}$ be these $\oc$-boxes.
 They all have type $\F$, and each of them contains the graph $E'$,
 with a dot-product node and a vector node.
 Since the provisional context $\mathcal{P'}$ is arbitrary, we can
 reduce the problem to $R \in P_T$, using the
 backward reasoning (Prop.~\ref{prop:FWBWReasoning}).

 If $n = 0$, the replacement $\phi$ actually changes nothing on the
 graph $G$, and hence $\phi(G) = G$.
 Therefore, in this case, $R \in P_T$ follows $G \in P_T$, by
 Prop.~\ref{prop:TerminationCongruence}.
 We deal with the case where $n > 0$ below.

 First, as a consequence of Prop.~\ref{cor:DataFlowReplacementSafety},
 the execution on the graph $\mathcal{P'}[R]$ reaches a final state,
 given the assumption $G \in P_T$.
 This is because, for any provisional context
 $\mathcal{P}[\square]_T^n$, we have
 $\mathcal{P}[G] \propto^* \mathcal{P'}[R]$.
 Since any name appears in the graph $\mathcal{P}[G]$ also appears in
 the graph $\mathcal{P'}[R]$, we can infer
 $\Init(\mathcal{P}[G]) \overline{\propto}^* \Init(\mathcal{P'}[R])$.
 This means $R \in P_T$ when $T$ is a ground type, and the termination
 condition of $R \in P_T$ when $T$ is a function type.

 To check the inductive condition of $R \in P_T$ where
 $T = T_1 \to T_2$, we need to show ${R @ \oc F} \in P_{T_2}$ for any
 $F \in P_{T_1}$.
 By the assumption $G \in P_{T_1 \to T_2}$, we have
 ${G @ \oc F} \in P_{T_2}$.
 Using induction hypothesis on this graph yields the graph
 $\widehat{G @ \oc F} \in P_{V_b \to T_2}$.
 Let $\tilde{E} \in P_{V_b}$ is a graph obtained by renaming $E$.
 We can take a provisional context $\mathcal{P''}[\square]_{T_2}$ such
 that an execution on the graph
 $\mathcal{P''}[(\widehat{G @ \oc F}) @ \oc \tilde{E}]$ yields a graph
 $\mathcal{P''}[(\widehat{G @ \oc F}) @ \oc \tilde{E'}]$ where the
 graph $\tilde{E'}$ is a renaming of the graph $E'$.
 By proceeding the execution, we obtain the graph $R'$, which consists
 of the graph $\phi(G) @ \oc F$ and $\oc$-boxes connected to some
 outputs of $\phi(G)$, each of which contains the graph $\tilde{E'}$,
 a dot-product node and a vector node.
 Since $\widehat{G @ \oc F} \in P_{V_b \to T_2}$, the forward reasoning
 (Prop.~\ref{prop:FWBWReasoning}) ensures $R' \in P_{T_2}$.
 Moreover, the graph $R'$ is in fact a renaming of the graph
 $R @ \oc F$, therefore we have $R @ \oc F \in P_{T_2}$.
\end{proof}

Finally the soundness theorem, stated below, is obtained as a
consequence of the so-called fundamental lemma of logical predicates.
\begin{theorem}[Soundness]
 For any closed program $t$ such that 
 $- \mid - \mid \vec{p} \vdash t : T$, there exist a graph $G$ and an
 element $X$ of a computation stack such that:
 \begin{align*}
  \Init((- \mid - \mid \vec{p} \vdash t : T)^\ddag) \to^* \Final(G,X).
 \end{align*}
\end{theorem}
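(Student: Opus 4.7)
The plan is to derive the theorem as a corollary of a fundamental lemma of logical predicates. I would prove, by induction on typing derivations, that for every derivable judgement $A \mid \Gamma \mid \vec{p} \vdash t : T$ with $\Gamma = x_1{:}T_1,\ldots,x_m{:}T_m$ and every tuple of graphs $H_1 \in P_{T_1},\ldots,H_m \in P_{T_m}$, the graph obtained by plugging each $H_i$ into the $\oc$-box for the free variable $x_i$ in the translation $(A \mid \Gamma \mid \vec{p} \vdash t : T)^\dag$ and composing with $(\vec{p})^\ddag$ lies in $P_T$. Soundness is then the special case of empty $\Gamma$: the termination clause of $P_T$ (which applies both to ground and function types, since every provisional context forces termination to some $\Final(G, X)$) yields the claimed execution.

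The induction proceeds by cases on the last typing rule. Variables and constants (definitive or provisional) are base cases: simulate the few exposing $\lb{D}$- and $\oc$-rewrites that reveal the plugged-in graph or constant, and close by the backward reasoning of Prop.~\ref{prop:FWBWReasoning}. Abstraction follows directly from the definition of $P_{T'\to T}$, using Prop.~\ref{prop:ExecutionCongruence} to reduce an application on the translation to the body's translation with the argument plugged in, which is in $P_T$ by induction hypothesis. Application is immediate from the inductive clause of $P_{T'\to T}$ applied to the two induction hypotheses. Primitive operations (including iterated ones) are handled by evaluating the arguments via the induction hypotheses and Prop.~\ref{prop:ExecutionCongruence}, followed by a single $\$^0$-rewrite or finitely many $\$^1$-unfoldings, again closed by backward reasoning.

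The substantial case is abductive decoupling $\abd{a}{f}{x}{T'}{t}$. Termination is free from the $\lambda$-node at the root of the translation. For the inductive clause, fix an arbitrary $E \in P_{T'}$ and trace execution on the application graph: the $\lambda$/$@$ rewrite exposes the body, and the deep decoupling rule eventually fires on the $\lb{A}$-node, producing a vector graph substituted for $x$ (manifestly in $P_{V_a}$) and a parameterised graph of exactly the shape $\hat{G}$ of Prop.~\ref{prop:AbductionSafety} substituted for $f$. Applying Prop.~\ref{prop:AbductionSafety} to the graph above the $\lb{A}$-node --- which lies in $P_{T'}$ by the induction hypothesis on the input combined with forward reasoning --- places the substituted graph for $f$ in $P_{V_a \to T'}$. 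The induction hypothesis for $t$ then delivers the resulting body in $P_T$, and Prop.~\ref{prop:FWBWReasoning} pulls this back to the starting state. The main obstacle is precisely this alignment: matching the graph produced by the deep decoupling rule against the $\hat{G}$ construction of Prop.~\ref{prop:AbductionSafety}, with all its bookkeeping of live provisional constants, name freshness, and the choice of output links for the $\lb{P}_n$-node, is delicate; once done, however, the safety lemma closes the case and completes the induction.
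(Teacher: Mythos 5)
Your proposal is correct and follows essentially the same route as the paper: soundness as a corollary of a fundamental lemma for logical predicates, proved by induction on typing derivations using the backward/forward reasoning and the congruence of execution, with the decoupling case discharged by the safety-of-decoupling proposition (Prop.~\ref{prop:AbductionSafety}) and iterated operations handled by unfolding. The delicate alignment you flag between the deep decoupling redex and the $\hat{G}$ construction is exactly the bookkeeping the paper's Prop.~\ref{prop:AbductionSafety} is designed to absorb.
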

\begin{proof}
 This is a corollary of Prop~\ref{prop:FundamentalLemma} below.
\end{proof}
\begin{proposition}[Fundamental lemma]
 \label{prop:FundamentalLemma}
 For any derivable judgement
 $A \mid \Gamma \mid \vec{p} \vdash t : T$, where
 $\Gamma = {x_0:T_0},\ldots,{x_{m-1}:T_{m-1}}$,
 and any graphs $\vec{H} = H_0,\ldots,H_{m-1}$ such that
 $H_i \in P_{T_i}$,
 if the following graph $G$ meets the name criteria and the graph
 criterion, it belongs to $P_T$.
 \begin{align*}
  G =
  \includegraphics{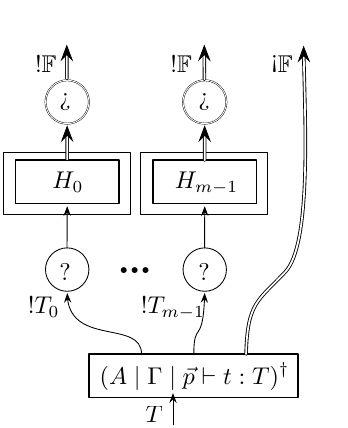}
 \end{align*}
\end{proposition}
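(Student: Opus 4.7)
The plan is to proceed by induction on the derivation of the typing judgement $A \mid \Gamma \mid \vec{p} \vdash t : T$, doing case analysis on the last rule applied. In every case the goal is to establish $G \in P_T$; the main tools are the forward/backward reasoning principle (Prop.~\ref{prop:FWBWReasoning}), the congruence of execution (Prop.~\ref{prop:ExecutionCongruence}, Prop.~\ref{prop:TerminationCongruence}), determinism of termination (Prop.~\ref{prop:DeterministicTermination}), and, crucially for the decoupling case, the safety of decoupling (Prop.~\ref{prop:AbductionSafety}). The general pattern is: evaluate subterms to ``value'' graphs using the inductive hypothesis together with congruence, trigger the graph rewrite associated with the head constructor, and then apply backward reasoning to transport membership of $P_T$ across that rewrite from the resulting state back to $G$.

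For the variable rule, the graph $G$ consists of $H_i$ sitting under a dereliction inside a trivial $\oc$-box configuration, and a short stable execution using Prop.~\ref{prop:FactoriseStableExec} reduces the question to $H_i \in P_{T_i}$, which holds by assumption. The scalar constant $\underline{p}$ and the provisional constant $\param{\underline{p}}$ are handled directly, since $G$ is a single constant node of ground type and a single pass transition produces a final state. For a primitive operation $t_1 \mathrel{\$} t_2$, I would first apply the inductive hypothesis to the two subterms to obtain predicate membership at their ground types, use Prop.~\ref{prop:ExecutionCongruence} to carry the right-to-left evaluation of arguments to value graphs $\underline{k_1}$ and $\underline{k_2}$ (possibly with intermediate deep computation on vector arguments in the case of $\$^1$), apply the $\$^0$- or $\$^1$-rewrite, and finally use the backward reasoning clause of Prop.~\ref{prop:FWBWReasoning} to pull the conclusion back to $G$.

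For the abstraction $\abs{x}{T'}{t}$, the termination condition of $P_{T' \to T}$ is immediate because the root of $G$ lies directly above a $\lambda$-node, so a single pass transition yields $\lambda$ on the computation stack. The inductive condition requires that for any $E \in P_{T'}$ the application graph $G @ \oc E$ lies in $P_T$. Here I would use Prop.~\ref{prop:TerminationCongruence} to evaluate $E$ to a value graph $E'$ of type $T'$, trigger the $\lambda$-$@$ rewrite, and observe that the resulting graph matches the configuration of the fundamental lemma for the body $t$ with the argument environment extended by $E'$; the inductive hypothesis then gives membership in $P_T$, and backward reasoning lifts this to $G @ \oc E$. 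The application case $\app{t_1}{t_2}$ is handled symmetrically, applying the inductive hypothesis to both subterms, pushing an $@$-marker onto the computation stack as in the second clause of Prop.~\ref{prop:ExecutionCongruence}, and combining via the definition of $P_{T' \to T}$.

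The main obstacle, as expected, is the decoupling case $\abd{a}{f}{x}{T'}{t}$. The translation of this construct plugs the translation of the body $t$ (under the extended context with $f : V_a \to T'$ and $x : V_a$) into a pattern involving an $\lb{A}$-node, whose type-driven unfolding under the decoupling rewrite exactly matches the schematic graph $\hat{G}$ of Prop.~\ref{prop:AbductionSafety}: the provisional doors of the body are reinterpreted as definitive doors, the $\lb{P}$-node and $\lambda$-node supply the parameterised argument, and an auxiliary $\lb{C}$-subgraph distributes the shared provisional constants. Applying the inductive hypothesis to the body yields the premise $\phi(G_t) \in P_{T}$ (up to the rearrangement produced by $\phi$), and then invoking Prop.~\ref{prop:AbductionSafety} gives $\hat{G_t} \in P_{V_a \to T'} \cong P_T$ in the shape required. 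Backward reasoning via Prop.~\ref{prop:FWBWReasoning} then transports this membership across the single pass transition into the $\lb{A}$-node that opens $G$, completing the case and the induction.
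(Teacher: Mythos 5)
Your overall strategy---induction on the type derivation, discharging each case by evaluating subterms via the congruence properties (Prop.~\ref{prop:ExecutionCongruence}, Prop.~\ref{prop:TerminationCongruence}) and transporting membership of $P_T$ back across rewrites with the backward-reasoning clause of Prop.~\ref{prop:FWBWReasoning}, with Prop.~\ref{prop:AbductionSafety} reserved for decoupling---is exactly the paper's. But the decoupling case, the one case that makes this lemma non-routine, is garbled in a way that matters. In an application $(\abd{a}{f}{x}{T'}{t})\,t'$ it is the \emph{argument} $t'$ that is decoupled, not the body $t$: the body is merely where the decoupled function $f : V_a \to T'$ and the parameter vector $x : V_a$ are consumed. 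Accordingly, Prop.~\ref{prop:AbductionSafety} must be applied to the evaluated argument graph, not to the body's graph. To verify the inductive clause of $P_{T' \to T}$ one takes an arbitrary $E \in P_{T'}$, evaluates it to $E'$ (congruence plus forward reasoning keep $E' \in P_{T'}$), lets the deep decoupling rewrite fire, and observes that it produces precisely $\hat{E'} \in P_{V_a \to T'}$ together with a vector constant lying in $P_{V_a}$; these are then the graphs fed to the induction hypothesis for the body at the extended context $\Gamma, f : V_a \to T', x : V_a$, which yields membership in $P_T$, and backward reasoning transports this to ${G @ \oc E}$. Your version instead hats the body, asserting $\hat{G_t} \in P_{V_a \to T'} \cong P_T$. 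This does not typecheck: Prop.~\ref{prop:AbductionSafety} sends $P_X$ to $P_{V_a \to X}$, so hatting the body (which lives at type $T$) would land in $P_{V_a \to T}$, and there is no isomorphism between logical predicates at distinct types. As written the case does not go through, and it is the case the whole construction exists to handle.

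A secondary omission: the paper singles out the iterated vector operations $t \mathrel{\$^1} u$ as the other unconventional case, handled by an inner induction on the number of standard-basis vectors introduced when the operation is unfolded. Your parenthetical ``possibly with intermediate deep computation on vector arguments'' elides this inner induction; unlike the decoupling issue, this is a routine repair, but it should be made explicit since a single appeal to the backward-reasoning step does not by itself terminate the recursive unfolding.
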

\begin{proof}[Sketch of proof]
 The first observation is that the translation
 $(A \mid \Gamma \mid \vec{p} \vdash t : T)^\dag$ itself meets the
 name criterion and the graph criteria.
 Since $\vec{H} \in P_\Gamma$, the whole graph again meets the graph
 criteria.
 We can always make the whole graph meet the name criteria as well,
 by renaming the graphs $\vec{H}$.
 Note that some names in the translation
 $(A \mid \Gamma \mid \vec{p} \vdash t : T)^\dag$ are not bound or
 free, and turns free once we connect the graphs $\vec{H}$.

 The proof is by induction on a type derivation, that goes in a
 simliar way to a usual proof for the lambda-calculus.
 To prove $G \in P_T$, we look at an execution on the graph $G$
 using the backward reasoning (Prop.~\ref{prop:FWBWReasoning}) and
 the congruence property (Cor.~\ref{prop:TerminationCongruence}).
 The only unconventional cases are: an iterated vector operation
 $t \mathrel{\$^1} u$, whose proof is by induction on the number of
 bases introduced in unfolding the operation; and decoupling
 $\abd{a}{f}{x}{T'}{t}$, whose proof relies on
 Prop.~\ref{prop:AbductionSafety}.
\end{proof}

\end{document}